\definecolor{Gray}{gray}{0.95}
\def\corref#1{\edef\cnotenum{\elsRef{#1}}%
	\edef\@corref{\ifcase\cnotenum\or
		$\dagger$\or$\dagger\dagger$\fi\hskip-1pt}}
\def\cortext[#1]#2{\g@addto@macro\@cornotes{%
		\refstepcounter{cnote}\elsLabel{#1}%
		\def\thefootnote{\ifcase\thecnote\or$\dagger$\or
			$\dagger\dagger$\fi}%
		\footnotetext{#2}}}
\def\ps@pprintTitle{%
	\let\@oddhead\@empty
	\let\@evenhead\@empty
	\def\@oddfoot{}%
	\let\@evenfoot\@oddfoot}
\titleformat{\paragraph}[runin]{\normalfont\bfseries\itshape}{\theparagraph}{1em}{#1.}
\journal{}
\newtheorem{thm}{Theorem}[section]
\newtheorem{lem}[thm]{Lemma}
\newtheorem{cor}[thm]{Corollary}
\newtheorem{prob}{Problem}
\newtheorem{subprob}{Subproblem}
\newtheorem{obs}[thm]{Observation}
\newtheorem*{remark}{Remark}
\begin{document}

\begin{frontmatter}
	
	
	
	\title{Characterization and Computation of Feasible Trajectories for an Articulated Probe with a Variable-Length End Segment\tnoteref{label1}}
	\tnotetext[label1]{A preliminary version of this work was presented at the 32nd Annual Canadian Conference on Computational Geometry.}
	
	
	\author{Ovidiu Daescu}
	\ead{ovidiu.daescu@utdallas.edu}
	
	\author{Ka Yaw Teo\corref{cor1}}
	\ead{ka.teo@utdallas.edu}
	
	\cortext[cor1]{Corresponding author}
	\address{Department of Computer Science, University of Texas at Dallas, Richardson, TX, USA.}

\begin{abstract}
An articulated probe is modeled in the plane as two line segments, $ab$ and $bc$, joined at $b$, with $ab$ being very long, and $bc$ of some small length $r$.
We investigate a trajectory planning problem involving the articulated two-segment probe where the length $r$ of $bc$ can be customized.
Consider a set $P$ of simple polygonal obstacles with a total of $n$ vertices, a target point $t$ located in the free space such that $t$ cannot see to infinity, and a circle $S$ centered at $t$ enclosing $P$.
The probe initially resides outside $S$, with $ab$ and $bc$ being collinear, and is restricted to the following sequence of moves: a straight line insertion of $abc$ into $S$ followed by a rotation of $bc$ around $b$.
The goal is to compute a feasible obstacle-avoiding trajectory for the probe so that, after the sequence of moves, $c$ coincides with $t$.

We prove that, for $n$ line segment obstacles, the smallest length $r$ for which there exists a feasible probe trajectory can be found in $O(n^{2+\epsilon})$ time using $O(n^{2+\epsilon})$ space, for any constant $\epsilon > 0$.
Furthermore, we prove that all values $r$ for which a feasible probe trajectory exists form $O(n^2)$ intervals, and can be computed in $O(n^{5/2})$ time using $O(n^{2+\epsilon})$ space.
We also show that, for a given $r$, the feasible trajectory space of the articulated probe can be characterized by a simple arrangement of complexity $O(n^2)$, which can be constructed in $O(n^2)$ time.
To obtain our solutions, we design efficient data structures for a number of interesting variants of geometric intersection and emptiness query problems.
\end{abstract}

\end{frontmatter}

\section{Introduction}
\label{sec:intro}

We consider the \emph{articulated probe trajectory planning problem} originally introduced by Teo, Daescu, and Fox \cite{teo20traj} with the following setup.
We are given a two-dimensional workspace containing a set $P$ of simple polygonal obstacles with a total of $n$ vertices, and a target point $t$ in the free space, all enclosed by a circle $S$ of radius $R$ centered at $t$.
An articulated probe is modeled in $\Re^2$ as two line segments, $ab$ and $bc$, connected at point $b$.
The length of $ab$ is greater than or equal to $R$, whereas $bc$ is of some small length $r \in (0, R]$.
The probe is initially located outside $S$, assuming an \emph{unarticulated} configuration, in which $ab$ and $bc$ are collinear, and $b \in ac$.
A \emph{feasible probe trajectory} consists of an initial insertion (sliding) of straight line segment $abc$ into $S$, possibly followed by a rotation of $bc$ around $b$ up to $\pi/2$ radians in either direction, such that $c$ coincides with $t$, while avoiding the obstacles in the process.
If a rotation is performed, then we have an \emph{articulated final} configuration of the probe.
The objective of the problem is to determine if a feasible probe trajectory exists and, if so, to report (at least) one such trajectory. 

\begin{figure}[h]
	\centering
	\includegraphics[scale=0.16]{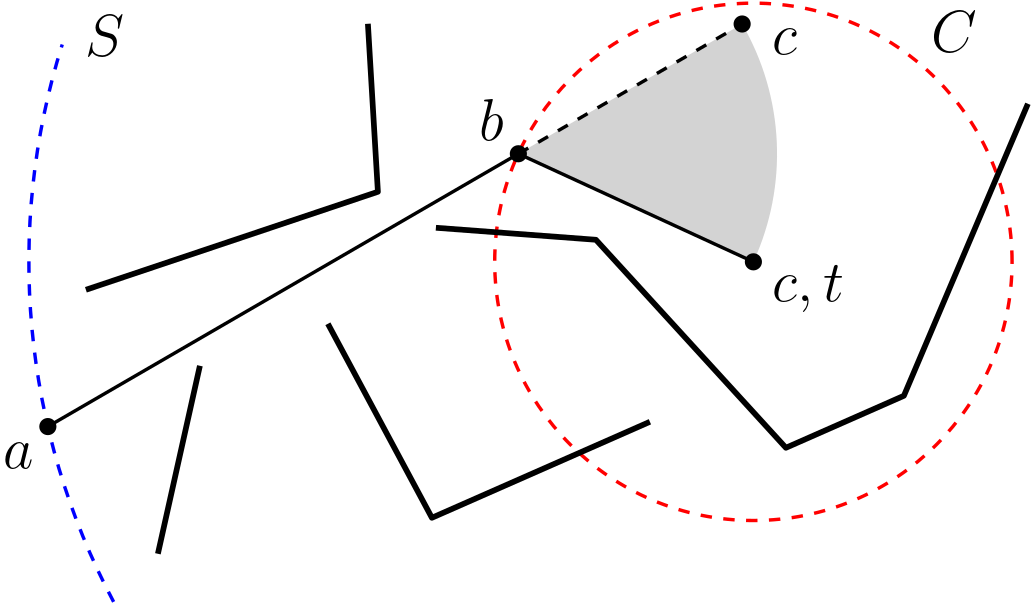}
	\caption{In order for the articulated probe to reach the target point $t$, a straight insertion of line segment $abc$ may be followed by a rotation of $bc$ from its intermediate position (dashed line segment) to its final position (solid line segment).}
	\label{fig_traj}
\end{figure}

It has been previously argued by Teo, Daescu, and Fox \cite{teo20traj} that the polygonal obstacles can be treated by considering only their bounding line segments.
Thus, for simplicity, assume that $P$ consists of $n$ non-crossing line segment obstacles (Figure \ref{fig_traj}).
We further assume that $S$ is not visible from $t$, since otherwise the problem reduces to computing visibility from $t$ to infinity, which takes $O(n \log n)$ time \cite{teo20traj}.
Thus, in order to reach $t$, the probe has to rotate $bc$ around $b$.

After inserting line segment $abc$, point $a$ is located either on or outside $S$.
Let $C$ be the circle of radius $r$ centered at $t$.
Observe that, since $bc$ may only rotate as far as $\pi/2$ radians in either direction after the initial insertion of line segment $abc$, $ab$ intersects $C$ only once and at $b$ (i.e., $b \in C$).
Thus, $ab$ never enters $C$ at all.
When $bc$ rotates around $b$, the area swept by $bc$ is a sector of a circle of radius $r$ centered at $b$.
For conciseness, the center of the circle on which a circular sector is based is called the \emph{center of the circular sector}.

In this paper, we develop efficient algorithms for computing
i) the minimum length $r > 0$ for which a feasible articulated trajectory exists, including reporting at least one such trajectory,
ii) all lengths $r > 0$ for which a feasible articulated trajectory exists, which we refer to as the \emph{feasible domain} of $r$, and
iii) the feasible trajectory space (i.e., set of all feasible trajectories) for a given length $r$.


\subsection*{Related work.}
A \emph{linkage} is a sequence of fixed-length line segments connected consecutively through their endpoints.
The motion of a linkage has been previously studied for its geometric and topological properties \cite{connelly17geom,hopcroft84mov} as well as for its application in robotic arm modeling and motion planning \cite{choset05principles,lavalle06plan}.
Inexact approaches based on sampling \cite{kavraki96probabilistic,lavalle01randomized} and subdivision \cite{brooks85subdivision,donald84motion,zhu90constraint} are commonly employed for solving motion planning problems involving complex robots such as linkages with high-dimensional configuration spaces.
However, a sampling-based method is probabilistically complete at best, while a subdivision-based method is complete only at a chosen resolution.
These approximate methods are overall effective but not in planning manipulative tasks for multi-link robots with spatial constraints on links and joints \cite{yakey01randomized}.

Unlike polygonal linkages that can rotate freely at their joints while moving between a start and target configuration \cite{connelly17geom,hopcroft84mov,lavalle06plan}, our simple articulated probe is constrained to a fixed sequence of moves -- a straight line insertion, possibly followed by a rotation of the end link.
This type of linkage motion has not received attention until recently \cite{teo20traj,daescu19traj3d}.

The two-dimensional articulated probe trajectory planning problem with a constant length $r$ was introduced by Teo, Daescu, and Fox \cite{teo20traj}, who presented a geometric-combinatorial algorithm for computing so-called \emph{extremal} feasible probe trajectories in $O(n^2 \log n)$ time using $O(n \log n)$ space.
In an extremal probe trajectory, one or two obstacle endpoints always lie tangent to the probe.
The solution approach proposed in \cite{teo20traj} can be extended to the case of polygonal obstacles.
For $h$ polygonal obstacles with a total of $n$ vertices, an extremal feasible probe trajectory can be determined in $O(n^2 + h^2 \log h)$ time using $O(n \log n)$ space.
When a clearance $\delta$ from the polygonal obstacles is required, a feasible probe trajectory can be obtained in $O(n^2 + h^2 \log h)$ time using $O(n^2)$ space.

In addition, Daescu and Teo \cite{daescu19traj3d} developed an algorithm for solving the articulated probe trajectory planning problem in three dimensions for a given $r$.
It was shown that a feasible probe trajectory among $n$ triangular obstacles can be found in $O(n^{4+\epsilon})$ time using $O(n^{4+\epsilon})$ space, for any constant $\epsilon > 0$.

\subsection*{Motivation.}
The proposed trajectory planning problem, aside from its general relevance in robotics, arises specifically in some current medical applications.
In minimally invasive surgeries, a rigid needle-like instrument is typically inserted through a small incision to reach a given target, after which it may perform operations such as tissue resection and biopsy.
Some newer designs allow for a joint to be incorporated for moving the acting end (tip); after inserting the instrument in a straight path, the surgeon may rotate the tip around the joint to reach the target \cite{simaan18medical}.

Due to the rapid advances in three dimensional printing techniques, such robotic probes can even be customized for a given patient \cite{culmone19additive}.
Rather than using a one-size-fits-all instrument, based on the patient-specific requirement and constraints, a robotic probe with a tailored-sized tip can be customarily built on-demand using three dimensional printing.
These recent developments in surgical probes have enhanced the ability to reach obscured targets, improved the operational accuracy, and reduced the cost, error, and time duration of surgeries, but have also greatly increased the complexity of finding acceptable trajectories.

Despite its importance and relevance, as well as its rich combinatorial and geometric properties, only a handful of results have been reported \cite{teo20traj,daescu19traj3d} for the articulated probe trajectory planning problem. 
It may seem inevitable, due to practicality, to use heuristics and approximation; nonetheless, an exact solution approach exposes the rich combinatorial and geometric properties of the problem, whose exploitation has often proven necessary for achieving algorithmic improvement.

\subsection*{Results and contributions.}
Recall our assumption that there is no feasible unarticulated probe trajectory (i.e., $t$ cannot see to infinity).
We begin in Section \ref{sec:min_r} by addressing our first problem of interest:

\begin{prob}
	\label{prob_min_r}
	Find the minimum length $r > 0$ of line segment $bc$ such that a feasible articulated probe trajectory exists, and report (at least) one such trajectory, or report that no feasible solution exists.
\end{prob}

For brevity, a feasible articulated trajectory with the minimum length $r$ is referred to as a \emph{feasible min-$r$ articulated trajectory}.

Our approach to solving Problem \ref{prob_min_r} is as follows:
i) We show that a feasible min-$r$ articulated trajectory, if one exists, can always be perturbed, while remaining feasible, into one of a finite number of ``extremal'' feasible trajectories, which can be enumerated using an algebraic-geometric method (see Lemma \ref{lem1} for a detailed definition of the extremal trajectories).
This leads to a simple $O(n^3 \log n)$ time, $O(n^{2+\epsilon})$ space algorithm, for any constant $\epsilon > 0$, based on enumerating and verifying the extremal trajectories for feasibility.
ii) We then derive an $O(n^{2+\epsilon})$ time and space algorithm by partially waiving the notion of computing and checking the extremal trajectories for feasibility.
Specifically, the algorithm searches for a feasible min-$r$ articulated trajectory, if any, by performing a finite sequence of perturbations and feasibility tests on certain $O(n^2)$ extremal trajectories (whose line segment $ab$ is tangent to two obstacle endpoints).
As it happens, this solution approach can be extended to solve our second problem of interest:

\begin{prob}
\label{prob_all_r}
Find all lengths $r \in (0,R]$ of line segment $bc$ for which at least one feasible articulated probe trajectory exists.
\end{prob}

As we shall see in Section \ref{sec:all_r}, with a proper algorithmic extension to our process of finding the minimum feasible $r$, we can compute, in $O(n^{5/2})$ time using $O(n^{2+\epsilon})$ space, the set of $r$-intervals for which feasible articulated trajectories exist, together with an implicit representation of feasible solutions for those values of $r$.

In the process of deriving our solutions to Problems \ref{prob_min_r} and \ref{prob_all_r}, we encounter and solve a number of fundamental problems (or their special cases) that could be of theoretical interest in computational geometry.
For instance, we provide an efficient data structure with logarithmic query time for solving a special instance of the circular sector emptiness query problem (i.e., for a query circular sector with a fixed arc endpoint $t$).

At last, in Section \ref{sec:feas_space}, we proceed to our third problem:

\begin{prob}
	\label{prob_traj_space}
	For a given length $r$ of line segment $bc$, compute the feasible trajectory space (i.e., set of all feasible trajectories) of the articulated probe.
\end{prob}

We describe a geometric combinatorial approach for characterizing and computing the feasible trajectory space of the articulated probe.
The feasible configuration space has a worst-case complexity of $O(n^2)$ and can be described by an arrangement of simple curves.
Using topological sweep \cite{balaban95optimal}, the arrangement can be constructed in $O(n \log n + k)$ time using $O(n + k)$ working storage, where $k=O(n^2)$ is the number of vertices of the arrangement.
By simply traversing the cells of the arrangement, we can find a feasible probe trajectory in $O(n^2)$ time -- a logarithmic factor improvement compared to the algorithm in \cite{teo20traj}.

\section{Computing feasible min-\boldmath$r$ articulated trajectories}
\label{sec:min_r}

Recall that, for a given $r$, $C$ is the circle of radius $r$ centered at $t$.
Using the rationale of \cite[Lemma 2.1]{teo20traj}, we can immediately claim the following observation.

\begin{obs}
	\label{obs1}
	Given a feasible min-$r$ articulated trajectory, there exists an \textbf{extremal} feasible min-$r$ articulated trajectory such that the probe assumes an articulated final configuration that passes through an obstacle endpoint outside $C$ and another obstacle endpoint inside or outside $C$.
\end{obs}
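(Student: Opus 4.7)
The plan is to adapt the standard perturbation argument of Lemma 2.1 in \cite{teo20traj} while additionally enforcing the minimality of $r$. Let $\tau$ be any feasible min-$r$ articulated trajectory with final configuration consisting of $ab$ and $bc$, where $b \in C$ and $c = t$. With $r$ held fixed, the final configuration is parametrized by two continuous variables: the angular position of $b$ along $C$, and the orientation of line $ab$ through $b$ (subject to the $\pi/2$ rotation constraint as well as the feasibility of the straight-line insertion of $abc$ and of the rotational sweep of $bc$).

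The first step is to continuously deform $\tau$ along these two parameters, driving $ab$ toward tangential contact with an obstacle endpoint. Since $ab$ has length at least $R$ and all obstacles lie inside $S$, the deformation cannot proceed indefinitely, and the standard convexity argument invoked in the cited lemma ensures that the first loss of feasibility is caused by the probe touching an obstacle endpoint rather than crossing the interior of an obstacle edge. I would then argue that this first contact can always be arranged on $ab$: if instead every attempt to perturb triggered a contact only on $bc$, one could decrease $r$ slightly by pulling $b$ inward along the line through $t$ and $b$ while keeping the orientation of $ab$ fixed. This slides $bc$ along its own supporting line, releasing any contact with endpoints strictly inside $C$; meanwhile $ab$ merely translates along its own line and, being contact-free, remains feasible for sufficiently small shrinkage, and both the insertion and rotational-sweep feasibility persist in a small neighborhood. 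This would produce a feasible trajectory with $bc$ shorter than $r$, contradicting minimality. Since $ab$ meets $C$ only at $b$, the contacted endpoint $p_1$ lies outside $C$.

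Having pinned $ab$ at $p_1$, the configuration retains one remaining degree of freedom, essentially pivoting about $p_1$ while maintaining $b \in C$ and $c = t$. I would then continue perturbing along this one-parameter family until a second obstacle endpoint $p_2$ is first touched by $ab$ or $bc$. By the same bounded-motion reasoning, such a $p_2$ is encountered after a finite perturbation. If $p_2$ lies on $ab$, it is outside $C$; if it lies on $bc$, it is inside or on $C$. Either outcome satisfies the claim.

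The main obstacle will be rigorously verifying that the $r$-shrinking motion used in the first step genuinely preserves global feasibility (insertion phase, rotational sweep, and final configuration), and not merely the release of the $bc$-contacts. Once this is carefully checked, the remainder of the argument is the straightforward extremal-configuration template already developed in \cite{teo20traj}, and the observation follows.
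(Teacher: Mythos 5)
Your overall skeleton (first force a contact on $ab$, then pivot to a second contact on $ab$ or on $bt$, and read off the ``outside $C$ / inside or outside $C$'' classification from which link carries the contact) is the right shape, but the paper does not re-derive this at all: Observation \ref{obs1} is obtained by applying the fixed-$r$ perturbation lemma of \cite{teo20traj} (Lemma 2.1 there) to the value $r = r_{\min}$. That lemma already yields, for \emph{any} fixed $r$, an extremal feasible trajectory whose final configuration passes through two obstacle endpoints ($ab$ through two, or $ab$ and $bt$ through one each); since the perturbation never changes $r$, minimality is inherited for free and plays no role in producing the contacts. Your proposal instead entangles the contact-forcing step with the minimality of $r$, and that detour is where the genuine problems lie.

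Concretely: (i) when $b$ is pulled toward $t$ along the line $tb$ with the orientation of $ab$ held fixed, $ab$ does \emph{not} slide along its own supporting line -- it translates parallel to itself in the direction of $bt$; the conclusion you want survives only because a parallel translate of a strictly contact-free segment stays free, which is a different argument. (ii) Shrinking $bt$ along its own line does not ``release'' a contact with an obstacle endpoint lying in the interior of $bt$ at distance less than the new radius from $t$; such an endpoint remains on the shortened segment. (iii) The hypothesis of your contradiction, ``every attempt to perturb triggers a contact only on $bc$,'' is not a well-posed alternative: a contact on $bt$ constrains only $\theta_C$ and leaves the orientation of $ab$ entirely free, so the two-parameter perturbation can never be blocked by $bt$-contacts alone. (iv) You never invoke the standing assumption that $t$ cannot see to infinity, which is precisely what forces the first contact onto $ab$ in the standard argument: rotating $ab$ about $b$ so as to decrease $\angle cbt$ keeps $bt$ fixed and shrinks the swept sector $\sigma_{bct}$ (the smaller sector is contained in the larger one), so the only thing that can stop this rotation before $\angle cbt = 0$ -- which is impossible, as it would give an unarticulated feasible trajectory -- is $ab$ meeting an obstacle endpoint. (v) Finally, your second step asserts that the remaining one-parameter motion about $p_1$ terminates at a second endpoint contact on $ab$ or $bc$, but that motion can also be blocked by the swept arc $\gamma_{ct}$ or the intermediate segment colliding with an obstacle; choosing the perturbation direction so that this cannot happen (or absorbing those events into the case analysis) is exactly the content of the cited lemma and of the cases enumerated in Lemma \ref{lem1}, and cannot be dismissed as ``the same bounded-motion reasoning.''
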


We will later show in Lemma \ref{lem1} that an extremal feasible min-$r$ articulated trajectory is always tangent to two obstacle endpoints \emph{outside} $C$. 

For ease of discussion, unless noted otherwise, we use $bc$ and $bt$ to denote line segment $bc$ of the probe in its intermediate (right after the initial insertion of line segment $abc$) and final configurations, respectively.
Let $\angle cbt$ be the angle of rotation of line segment $bc$ (of the probe) to reach $t$, and 
let $\sigma_{bct}$ be the circular sector swept by the said rotation of $bc$.
Let $\gamma_{ct}$ denote the circular arc of $\sigma_{bct}$.
Let $V$ denote the set of endpoints of the line segments of $P$.


\begin{lem}
	\label{lem1}
	Given a feasible min-$r$ articulated trajectory, there exists an \textbf{extremal} feasible min-$r$ articulated trajectory such that, in its final configuration, $ab$ passes through two obstacle endpoints and at least one of the following holds:
	\renewcommand\labelenumi{\Roman{enumi})}
	\renewcommand\theenumi\labelenumi
	\begin{enumerate}
	\item $\angle cbt = \pi/2$ radians,
	\item $bc$ intersects an obstacle line segment at $c$,
	\item $\gamma_{ct}$ intersects an obstacle endpoint or is tangent to an obstacle line segment,
	\item one of the obstacle endpoints intersected by $ab$ coincides with $b$, and $\angle cbt \leq \pi/2$ radians, or
	\item $bt$ passes through an obstacle endpoint.
	\end{enumerate}	
\end{lem}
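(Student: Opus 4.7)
The plan is a perturbation argument in the three-parameter space of trajectories: two parameters specify the insertion line $\ell$ containing $ab$, and the third is the length $r$. A feasible min-$r$ trajectory must be supported by three independent active constraints (two to fix $\ell$ plus one to witness the minimality of $r$); the claim I would aim for is that these three can always be arranged so that two are obstacle endpoints in contact with $ab$ and the third is one of conditions I--V.

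I would start by invoking Observation~\ref{obs1} to reduce to a feasible min-$r$ trajectory whose final configuration is tangent to two obstacle endpoints $u$ and $v$, with $u$ outside $C$ (hence on $ab$) and $v$ either on $ab$ or on $bt$ (inside $C$). In the first sub-case both tangencies are already on $ab$, so $\ell$ is pinned and the only remaining degree of freedom is $r$. Since $r^{*}$ is minimal, any decrement of $r$ (which slides $b$ along $\ell$ toward $t$ and reshapes $\sigma_{bct}$) must instantly violate feasibility. A direct case check on the possible obstructions shows that the blocking event is exactly one of I--V: the rotation angle saturating at $\pi/2$ (I), the intermediate tip $c$ meeting an obstacle (II), a critical event on $\gamma_{ct}$ (III), $b$ coinciding with one of the two endpoints on $ab$ (IV), or $bt$ passing through an endpoint (V).

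In the second sub-case, $v$ lies strictly inside $C$ and hence on $bt$, not on $ab$. I would perturb $\ell$ by rotating it about $u$, so that $u$ remains on $ab$ while $v$ moves off $bt$. By minimality of $r^{*}$ no such perturbation can be followed by a net decrease of $r$; therefore, in each of the two rotational directions the perturbation must be blocked by a new active constraint becoming tight. The key step is to show that, in at least one direction, the blocking event is a fresh obstacle endpoint coming into contact with $ab$; this second $ab$-tangency replaces the role of $v$ and reduces us to the first sub-case, after which the earlier case check supplies one of I--V.

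The main obstacle will be precisely this last transfer: arguing that the blocking event during the rotation of $\ell$ about $u$ is necessarily an $ab$-contact rather than some other obstruction (such as a new tangency of $bt$ with an endpoint, or a new contact with $\sigma_{bct}$) that would leave us with only one $ab$-tangency. My approach is to track simultaneously how $b$ moves along $C$, how $\sigma_{bct}$ deforms, and how $bt$ rotates as $\ell$ pivots about $u$, and to exploit the three-dimensional constraint count together with the minimality of $r^{*}$ to rule out any ``cyclic'' blocking configuration in which every contact remains off of $ab$; in such a hypothetical configuration, one additional perturbation should still be available, contradicting extremality.
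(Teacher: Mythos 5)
Your reduction via Observation~\ref{obs1} and your handling of the first sub-case (both supporting endpoints on $ab$) essentially match the paper's Scenario~A: with the insertion line pinned by the two endpoints, sweep $r$ and identify the critical event. (Two of the blockers you list there, IV and V, cannot actually occur in that sub-case: as $r$ decreases, $b$ moves toward the foot of the perpendicular from $t$, so the new $bt$ and the newly exposed portion of $ab$ stay inside the old, empty sector; only the advancing tip $c$, the advancing arc $\gamma_{ct}$, or saturation at $\pi/2$ can block. This is harmless over-inclusion, not an error.)

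The gap is in your second sub-case, and it sits exactly where you flag it. You rotate $\ell$ about $u$ alone, discarding the contact of $bt$ with $v$, and you need the blocking event of that rotation to be a fresh endpoint landing on $ab$. Nothing forces this: the rotation can be blocked in both directions by events that are not $ab$-contacts --- most obviously, $bt$ immediately crossing into the interior of the obstacle segment incident to $v$ on one side, and $c$ or $\gamma_{ct}$ colliding with another obstacle on the other side. Your constraint-counting heuristic does not exclude this, because the three active constraints at the optimum are then the contact of $u$ with $ab$, the contact of $v$ with $bt$, and a contact of the swept sector with an obstacle --- a perfectly stable configuration with only one $ab$-tangency. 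The paper avoids this by never dropping the $v$-contact: it perturbs within the two-contact family by sliding $b$ along the ray from $t$ through $v$, so that $bt$ automatically keeps passing through $v$ while $\ell$ rotates about $u$. In that family $r$ is minimized at $b=v$, giving case~IV if feasible there; otherwise the sector at $b=v$ is nested inside the sector of the given feasible trajectory and hence empty, so the only blocker is on $ab$, and sliding $b$ outward until $ab$ clears produces a second $ab$-endpoint at the first feasible position, which is case~V. Note that if your reduction of the second sub-case to the first actually worked, outcomes IV and V would never be needed; their presence in the statement is a signal that this scenario must be resolved on its own terms rather than converted into the two-$ab$-contacts case.
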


\begin{proof}
\label{lem1_proof}

We proceed by considering the two possible scenarios implied by Observation \ref{obs1}.

\paragraph{Scenario A}
A feasible min-$r$ articulated probe trajectory exists such that $ab$ of the trajectory passes through two obstacles endpoints $u, v \in V$, where $u \neq v$. Obviously, $ab$ does not intersect the interior of any line segment of $P$.
Without loss of generality, assume that line segment $bc$ of the probe is rotated clockwise around $b$ to reach $t$ (the other case can be handled symmetrically), and $ab$ passes through $u$ and $v$ in the way depicted in Figure \ref{fig_min-r_1}.

\begin{figure}[h]
	\centering
	\includegraphics[scale=0.16]{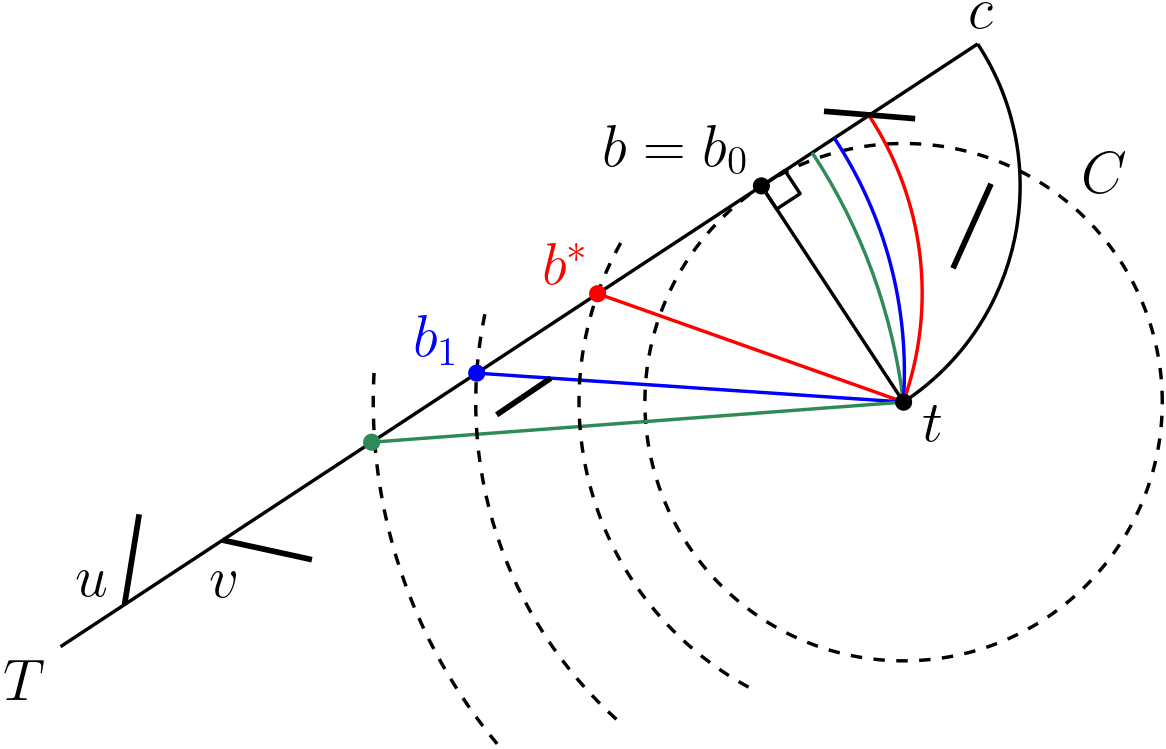}
	\caption{Finding an extremal feasible min-$r$ articulated probe trajectory in Scenario A.}
	\label{fig_min-r_1}
\end{figure}

Let $h_{ab}$ denote the supporting line of $ab$.
Let $b_0t$ be the perpendicular line segment dropped from $t$ to line $h_{ab}$.
Observe that the minimum possible value of $r$ for an articulated trajectory is given by the length of $b_0t$ -- that is, when $b = b_0$ and $\angle cbt$ is equal to $\pi/2$ radians.
Let $T$ denote the corresponding trajectory.
If $T$ is free of obstacles, then $T$ is a feasible min-$r$ articulated trajectory 
(case I of the lemma).

Otherwise, the minimum feasible value of $r$ is attained at some point $b^*$ on line segment $vb_0$, where $b^*$ is the closest point to $b_0$ on $vb_0$ for which the corresponding articulated trajectory is feasible.
In order to find $b^*$, we increase $r$ by moving $b$ away from $b_0$ on $vb_0$ until the trajectory becomes feasible.
Observe that,
if $bt$ intersects an obstacle line segment at any given time during the process of increasing $r$,
then the trajectory would never become feasible thereafter (illustrated by the blue and green trajectories in Figure \ref{fig_min-r_1}).

The observations above imply that, if $b = b_0$ is not feasible, then $bc$ or $\gamma_{ct}$ of $T$ must be intersected by an obstacle line segment, or $\sigma_{bct}$ of $T$ must contain an obstacle line segment.
By moving $b$ away from $b_0$ on $vb_0$, we may rid the trajectory of obstacle line segments that intersect $bc$, $\gamma_{ct}$, or are contained within $\sigma_{bct}$.
Suppose that we increase $r$ until either $bt$ becomes tangent to an obstacle line segment or $b$ reaches $v$.
Let $b_1$ denote the final position of $b$.
Observe that $b^*$ must lie somewhere between $b_0$ and $b_1$, and as we increase $r$, $b = b^*$ when $bc$ intersects an obstacle line segment at $c$, or $\gamma_{ct}$ intersects an obstacle endpoint or is tangent to an obstacle line segment (cases II and III of the lemma).

\begin{remark}
Let $r_{b_0}$, $r_{b^*}$, and $r_{b_1}$ be the lengths of $bc$ when $b = b_0$, $b = b^*$, and $b = b_1$, respectively, where $r_{b_0} \leq r_{b^*} \leq r_{b_1}$.
Observe that $[r_{b^*}, r_{b_1}]$ is a feasible contiguous subset of $[r_{b_0}, r_{b_1}]$.
Indeed, based on the observations made thus far, it follows that, in Scenario A, there exists at most one contiguous feasible subset of $[r_{b_0}, r_{b_1}]$.
\end{remark}

\paragraph{Scenario B}
A feasible min-$r$ articulated probe trajectory exists such that $ab$ of the trajectory passes through an obstacle endpoint $u$, and $bt$ of the trajectory passes through an obstacle endpoint $v$, where $u, v \in V$ and $u \neq v$.
Recall that $\angle cbt$ of the trajectory is less than or equal to $\pi/2$ radians.
Without loss of generality, assume that segment $bc$ of the probe is rotated clockwise around $b$ to reach $t$, as in Figure \ref{fig_min-r_2} (the other case is symmetrical).

\begin{figure}[h]
	\centering
	\includegraphics[scale=0.16]{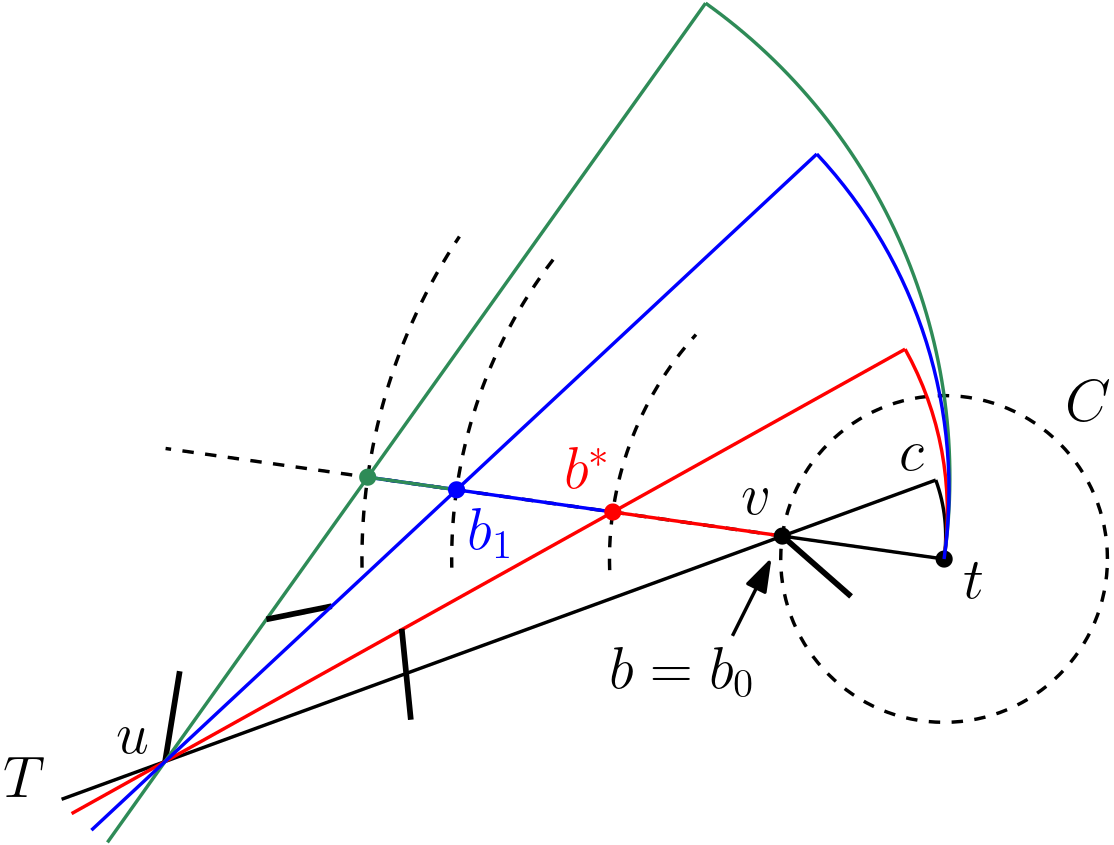}
	\caption{Finding an extremal feasible min-$r$ articulated probe trajectory in Scenario B.}
	\label{fig_min-r_2}
\end{figure}

In this case, the minimum value of $r$ for a possibly feasible trajectory occurs when $b = v$.
Let $b_0$ denote that location of $b$, and let $T$ be the corresponding trajectory.
If $T$ is free of obstacles, then $T$ is a feasible min-$r$ articulated trajectory (case IV of the lemma).

We now assume otherwise; that is, $T$ is an infeasible trajectory.
Observe that if $\sigma_{bct}$ of $T$ intersects any obstacle line segment, then for certain there is no feasible articulated trajectory such that $ab$ of the trajectory passes through $u$ outside $C$, and $bt$ of the trajectory passes through $v$ inside $C$.
So, $\sigma_{bct}$ of $T$ must be empty of obstacle line segments.
Let $\rho_{b_0}$ denote the \emph{reversal} (i.e., opposite in direction) of the ray emanating from $b_0$ passing through $t$.
We increase $r$ by moving $b$ away from $b_0$ along $\rho_{b_0}$, while maintaining the intersection of $ab$ with $u$ and that of $bt$ with $v$, until the trajectory becomes feasible.
Observe that if $bt$, $bc$, or $\gamma_{ct}$ intersects an obstacle line segment at any given moment during the process of increasing $r$, then the trajectory would never become feasible thereafter. 

These observations imply that, when $b = b_0$, $ab$ of $T$ must be intersected by some obstacle line segment.
By increasing $r$, we may rid the trajectory of obstacle line segments that intersect $ab$.
Let $b^*$ denote the closest point to $b_0$ on $\rho_{b_0}$ for which the corresponding articulated trajectory is feasible.
Note that
$ab$, when $b = b^*$, passes through an obstacle endpoint in addition to $u$ (case V of the lemma), as illustrated by the red trajectory in Figure \ref{fig_min-r_2}.

\begin{remark}
Observe that we can continue to increase $r$, while still having a feasible articulated trajectory, until $b$ reaches some point $b_1$, at which either i) $ab$, $bc$, or $\gamma_{ct}$ collides with an obstacle line segment, or ii) $\angle cbt = \pi/2$.
Let $r_{b_0}$, $r_{b^*}$, and $r_{b_1}$ be the lengths of $bc$ when $b = b_0$, $b = b^*$, and $b = b_1$, respectively, where $r_{b_0} \leq r_{b^*} \leq r_{b_1}$.
In addition, let $r_{\pi/2}$ be the length of $bc$ when $\angle cbt = \pi/2$.
According to our earlier arguments, $[r_{b^*}, r_{b_1}]$ is a feasible contiguous subset of $[r_{b_0}, r_{\pi/2}]$.
In fact, there could exist multiple (disjoint) contiguous feasible subsets of $[r_{b_0}, r_{\pi/2}]$, given that $ab$ may enter and leave intersections with multiple obstacle line segments during the process of increasing $r$, while $\sigma_{bct}$ remains free of obstacle line segments (refer to the blue and green trajectories in Figure \ref{fig_min-r_2} for an instance).
\end{remark}

This concludes the proof of Lemma \ref{lem1}.
\end{proof}

\begin{cor}
\label{cor}
Each value $r \in [r_{b^*}, r_{b_1}]$, as derived in the proof of Lemma \ref{lem1}, is associated with a feasible articulated trajectory.
\end{cor}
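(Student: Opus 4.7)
The plan is to derive the corollary directly from the monotonicity observations already packaged inside the proof of Lemma \ref{lem1}. Conceptually, the statement is a re-labeling of those observations: in each of the two scenarios, the positions of $b$ between $b^*$ and $b_1$ are parameterized continuously by the length $r = |bt|$ of the end segment, and the proof of Lemma \ref{lem1} has already ruled out every potential cause of infeasibility -- an obstacle intersecting $ab$, $bc$, or $\gamma_{ct}$, and an obstacle line segment contained in $\sigma_{bct}$ -- throughout the sub-segment from $b^*$ to $b_1$.

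The first step is to make the parameterization explicit. Fix the extremal configuration of the supporting line $h_{ab}$ (the pair of obstacle endpoints $u,v$ in Scenario A, or the single obstacle endpoint $u$ together with the forced passage of $bt$ through $v$ in Scenario B). As $b$ slides continuously along the relevant segment ($vb_0$ in Scenario A, $\rho_{b_0}$ in Scenario B), its distance $r = |bt|$ to the target varies strictly monotonically. Consequently, the map $b \mapsto r$ is a bijection between the sub-segment from $b^*$ to $b_1$ and the parameter interval $[r_{b^*}, r_{b_1}]$, and each intermediate $r$ is associated to a unique candidate trajectory in a continuous one-parameter family.

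The second step is to invoke the feasibility-event analysis of Lemma \ref{lem1}. By the very definition of $b^*$, the candidate trajectory at $r = r_{b^*}$ is feasible. By the definition of $b_1$, the next event in which an obstacle either enters contact with, or begins to intersect, any of $ab$, $bc$, $\gamma_{ct}$, or $\sigma_{bct}$ occurs only at $r = r_{b_1}$. Since there are only finitely many obstacle line segments and obstacle endpoints, only finitely many feasibility-changing values of $r$ are possible; and since the geometric configuration of the trajectory varies continuously with $r$, the feasibility indicator is constant on the open interval $(r_{b^*}, r_{b_1})$, hence equal to its value at $r_{b^*}$, namely feasible.

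The main (minor) obstacle is handling the endpoints $r_{b^*}$ and $r_{b_1}$, where tangential contacts may occur. I would resolve this by relying on the convention that feasibility requires the trajectory to avoid only the interiors of the obstacle line segments, so boundary tangencies do not break feasibility. This yields a feasible articulated trajectory for every $r$ in the closed interval $[r_{b^*}, r_{b_1}]$, establishing the corollary.
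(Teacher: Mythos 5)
Your proposal is correct and follows essentially the same route as the paper, which justifies Corollary~\ref{cor} solely through the remarks embedded in the proof of Lemma~\ref{lem1}: feasibility at $b^*$ holds by definition, and the monotone perturbation of $b$ toward $b_1$ admits no new obstruction of $ab$, $bc$, $\gamma_{ct}$, or $\sigma_{bct}$ until the stopping event defining $b_1$ occurs. Your added framing via continuity of the one-parameter family and finiteness of feasibility-changing events, together with the convention that tangential contacts at the endpoints do not violate feasibility, is a harmless elaboration of the same argument.
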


\begin{remark}
By computing every feasible contiguous range of $r$ in both Scenarios A and B for every pair of obstacle endpoints in $V$ (as described in the proof of Lemma \ref{lem1}), we can obtain all feasible values of $r$.
This observation would later become the basis of our solution approach for computing the entire feasible domain of $r$.
Briefly, the approach primarily consists of computing $[r_{b^*}, r_{b_1}]$ (as stated in Corollary \ref{cor}) for each pair of obstacle endpoints of $V$ in both Scenarios A and B.
However, recall that, for a given pair of obstacle endpoints $u, v \in V$ in Scenario B, $[r_{b^*}, r_{b_1}]$ may not be the only feasible contiguous range of $r$ -- that is, there may exist other disjoint feasible contiguous ranges of $r$ in addition to $[r_{b^*}, r_{b_1}]$ (see Figure \ref{fig_exp} for an example).
It turns out that, if handled with care, this situation can be addressed by using Scenario A
(See Section \ref{sec:all_r} for details).
\end{remark}

\begin{figure}[h]
	\centering
	\includegraphics[scale=0.16]{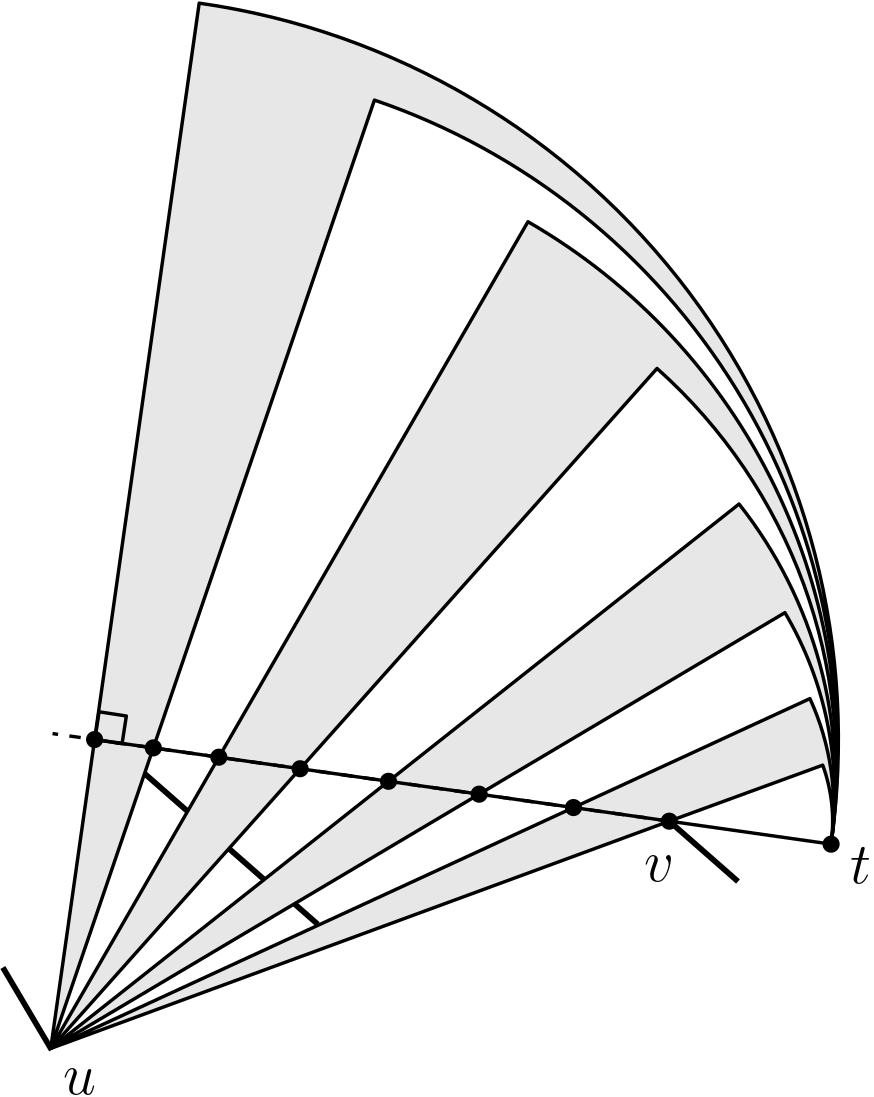}
	\caption{Multiple disjoint feasible ranges of $r$ (indicated by shaded regions) 
		in Scenario B.}
	\label{fig_exp}
\end{figure}

\subsection{A simple solution approach -- Enumerate and verify}
\label{enum_and_verify}
Based on Lemma \ref{lem1}, we can devise a simple approach to solve Problem \ref{prob_min_r} as follows.
For each pair of obstacle endpoints $u, v \in V$, find an extremal feasible min-$r$ trajectory, if one exists, as described in Lemma \ref{lem1}, and report the feasible trajectory with the smallest value of $r$ overall.

The set of extremal min-$r$ articulated trajectories is characterized by $O(n^3)$ combinatorial events.
Thus, we can enumerate these extremal trajectories in $O(n^3)$ time by using an algebraic-geometric approach.
An extremal min-$r$ articulated trajectory is considered feasible if and only if both i) $ab$ and ii) $\sigma_{bct}$ do not intersect any obstacle line segment.
Checking for these scenarios can be reduced to the following two query problems -- i) ray shooting query, which can be answered in $O(\log n)$ time after a preprocessing that takes $O(n^2)$ time and space \cite{pocchiola90graphics}, and ii) circular sector emptiness query.
A solution to the latter problem
calls for the construction of lower envelopes of $n$ bivariate partial functions in three dimensions
(see Subproblem \ref{subprob1} in Section \ref{query_A} for details).
The query data structure is of size $O(n^{2+\epsilon})$ and can be constructed in $O(n^{2+\epsilon})$ time.
Using the data structure, a query can be answered in $O(\log n)$ time.
Since we have $O(n^3)$ queries in the worst case, the following result is obtained.

\begin{thm}
	\label{thm2}
	A feasible min-$r$ articulated probe trajectory, if one exists, can be determined in $O(n^3 \log n)$ time using $O(n^{2+\epsilon})$ space, for any constant $ \epsilon > 0$.
\end{thm}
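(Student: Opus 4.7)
The plan is to implement the enumerate-and-verify strategy outlined just before the theorem. By Lemma \ref{lem1}, every feasible min-$r$ articulated trajectory can be perturbed into an extremal one in which $ab$ is pinned by two obstacle endpoints $u,v\in V$ and one additional algebraic condition (one of cases I--V) is active. So I would iterate over the $O(n^2)$ unordered pairs $\{u,v\}\subseteq V$, handling Scenario A ($u,v$ both on $ab$) and Scenario B ($u$ on $ab$, $v$ on $bt$) separately as in the proof of Lemma \ref{lem1}. In each scenario that proof exhibits a one-parameter family of trajectories obtained by sliding $b$ along a fixed line with $r$ varying monotonically.

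Within each such family, cases II, III, and V each impose a constant-degree algebraic condition involving one further obstacle feature (an obstacle endpoint or segment), while cases I and IV each give at most one candidate per pair. This yields $O(n)$ candidate extremal trajectories per pair, hence $O(n^3)$ candidates in total. Each candidate is the common root of a constant-sized algebraic system and can be isolated in $O(1)$ time by elementary algebraic-geometric tools, so the enumeration runs in $O(n^3)$ time.

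Each candidate trajectory is then tested for feasibility by two emptiness queries: (i) that $ab$ avoids the interiors of the obstacle segments, and (ii) that the swept sector $\sigma_{bct}$ does the same. For (i) I would preprocess the obstacles into the ray-shooting structure of Pocchiola \cite{pocchiola90graphics}, which is built in $O(n^2)$ time and space and answers each query in $O(\log n)$ time; testing the finite segment $ab$ reduces to a constant number of ray shoots. For (ii) I would use the circular-sector emptiness data structure formalized later as Subproblem \ref{subprob1} in Section \ref{query_A}, which exploits the fact that one arc endpoint is always $t$ to reduce the problem to the lower envelope of $n$ bivariate partial functions in $\Re^3$; this gives $O(n^{2+\epsilon})$ preprocessing time and space with $O(\log n)$ query time. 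Combining the preprocessing with $O(n^3)$ queries of $O(\log n)$ each yields the claimed bounds of $O(n^3\log n)$ time and $O(n^{2+\epsilon})$ space.

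The main obstacle is the circular-sector emptiness query in (ii): ordinary segment, disk, or halfplane emptiness structures do not apply directly because the query region is a sector with two straight radii and a circular arc, and the $O(n^{2+\epsilon})$ cost is inherited from the lower-envelope construction that Subproblem \ref{subprob1} relies on. The other ingredients---enumerating the $O(n^3)$ extremal trajectories permitted by Lemma \ref{lem1}, and ray-shooting against the obstacles---are essentially off-the-shelf, so once Subproblem \ref{subprob1} is solved within the claimed bounds, the theorem follows immediately by outputting the feasible extremal trajectory of smallest $r$, or reporting infeasibility if none is found.
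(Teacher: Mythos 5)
Your proposal matches the paper's own argument: the same enumeration of $O(n^3)$ extremal candidates derived from Lemma \ref{lem1} (two pinning endpoints plus one additional obstacle feature), the same two feasibility tests via ray shooting \cite{pocchiola90graphics} and the circular-sector emptiness structure of Subproblem \ref{subprob1}, and the same accounting yielding $O(n^3 \log n)$ time and $O(n^{2+\epsilon})$ space. No substantive differences to report.
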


As an alternative, we can perform $O(n)$ intersection/emptiness checks on each of the $O(n^3)$ extremal trajectories, and obtain an $O(n^4)$-time algorithm with an $O(n)$ space usage.

\subsection{An improved solution approach}
\label{improved}

Obviously, if we rely on identifying critical events (i.e., $O(n^3)$ combinatorial possibilities) as the basis for our solution approach, $O(n^3 \log n)$ would likely be the best attainable running time.
That raises the question of whether we could do better if we, at least partially, forego the notion of finding extremal trajectories.

We begin by emphasizing that, as stated in Lemma \ref{lem1}, an extremal feasible min-$r$ articulated trajectory passes through two obstacle endpoints, neither of which is inside $C$.
Consider the following solution approach.
For each point $v \in V$, compute the set $R_v$ of rays with the following properties:
i) Each ray originates at $v$ and passes through a point $u \in V \setminus \{v\}$.
ii) Line segment $vb_0$ does not intersect any line segment of $P$, where $b_0$ is the point of tangency between the supporting line of the ray and the circle $C$ centered at $t$ (Figure \ref{fig_steps_A}A).
iii) If the ray passes through $b_0$, then the \emph{reversal} of the ray does not intersect any line segment of $P$;
otherwise, the ray itself does not intersect any line segment of $P$.
Set $R_v$ can be obtained in $O(n \log n)$ time by computing the visibility polygon from $v$ \cite{arkin87opt,heffernan95opt,suri86worst}.
Since $|V| = O(n)$, the worst-case running time for finding the set of rays $R = \cup_{v \in V} R_v$ is $O(n^2 \log n)$.

\begin{figure}
	\centering
	\includegraphics[scale=0.16]{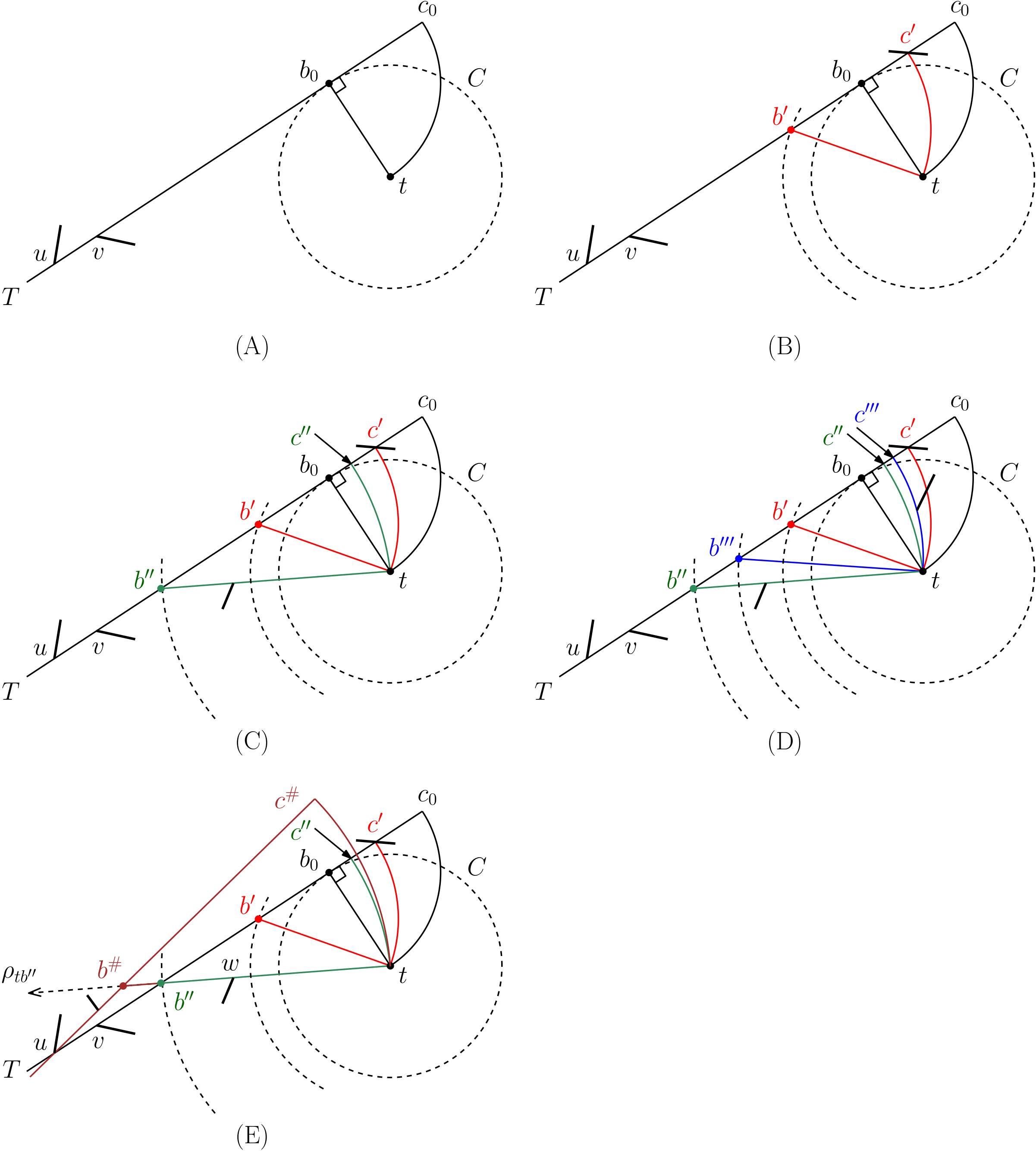}
	\caption{Illustrations of steps 
		(A) A1 and A2,
		(B) A3 and A4,
		(C) A5 and A6, and
		(D) A7.
		(E) A6s.
	}
	\label{fig_steps_A}
\end{figure}

Note that each ray of $R$ is associated with a trajectory $T$ that has an obstacle-free line segment $ab$ passing through two obstacle endpoints.
Without loss of generality, assume that $ab$ of $T$ passes through a pair of obstacle endpoints $u, v \in V$, where $u \neq v$, in the way depicted in Figure \ref{fig_steps_A}A.
Assume that $bc$ of $T$ rotates clockwise to reach $t$ (the other case is symmetrical).
Let $b_0$ be the position of $b$ when $\angle cbt = \pi/2$ radians, and $c_0$ be the position of $c$ when $b = b_0$.
In order to find a feasible min-$r$ articulated trajectory, we perform the following sequence of steps.

\begin{enumerate}
	\renewcommand\labelenumi{A\arabic{enumi}.}
	\renewcommand\theenumi\labelenumi
	
	\item Check if the articulated trajectory $T$ with $\angle cbt = \pi/2$ radians is feasible.
	Specifically, check if the quarter circular sector bounded by $b_0c_0$, $b_0t$, and circular arc $\gamma_{c_0t}$ (centered at $b_0$ and emanating counter-clockwise from $t$ to $c_0$) is free of obstacles (Figure \ref{fig_steps_A}A).
	If it is, then $T$ is a feasible min-$r$ articulated trajectory whose $ab$ passes through $u$ and $v$.
	Otherwise, proceed with step A2.
	
	\item Check if $b_0t$ is intersected by any obstacle (Figure \ref{fig_steps_A}A).
	If it is, then a feasible min-$r$ articulated trajectory whose $ab$ passes through $u$ and $v$ does not exist.
	Otherwise, proceed with steps A3 and A4.
	
	\item Find the closest point $c' \in b_0c_0$ to $c_0$ such that $b_0c'$ does not intersect any obstacle (Figure \ref{fig_steps_A}B).
	Compute the center $b'$ of the circular arc $\gamma_{c't}$ emanating counter-clockwise from $t$ to $c'$, where $b' \in vb_0$.
	
	\item Check if $b't$ is intersected by any obstacle (Figure \ref{fig_steps_A}B).
	If it is, then a feasible min-$r$ articulated trajectory whose $ab$ passes through $u$ and $v$ does not exist.
	Otherwise, proceed with steps A5 and A6.
	
	\item Find the closest point $b'' \in vb'$ to $b'$ such that $b''t$ passes through an obstacle endpoint (Figure \ref{fig_steps_A}C).
	Compute the corresponding point $c''$ (i.e., the intersection between $b_0c'$ and the circle of radius $|b''t|$ centered at $b''$).
	Note that the triangle bounded by $b't$, $b''t$, and $b'b''$ is free of obstacles.
	
	\item Check if the ``sector'' bounded by $b'c''$, $b't$, and circular arc $\gamma_{c''t}$ (centered at $b''$ and emanating counter-clockwise from $t$ to $c''$) intersects any obstacle (Figure \ref{fig_steps_A}C).
	Note that this is equivalent to checking if the circular sector bounded by $b''c''$, $b''t$, and $\gamma_{c''t}$ intersects any obstacle.
	If it does, then a feasible min-$r$ articulated trajectory whose $ab$ passes through $u$ and $v$ does not exist.
	Otherwise, proceed with steps A6s and A7.
	
	\renewcommand\labelenumi{A\arabic{enumi}s.}
	\renewcommand\theenumi\labelenumi
	\setcounter{enumi}{5}
	
	\item
	\textbf{This side step after A6 is only necessary when computing the feasible domain of \boldmath$r$ (see Section \ref{sec:all_r} for details).}
	Let $\rho_{tb''}$ be the ray originating at $t$ and passing through $b''$ (Figure \ref{fig_steps_A}E).
	Let $w \in \rho_{tb''}$ denote the obstacle endpoint intersected by $b''t$, and
	$b_{\pi/2}$ be the point on $\rho_{tb''}$ such that $ub_{\pi/2}$ forms a right angle with $b_{\pi/2}b''$.
	While maintaining the intersection of $ab$ (of the probe) with $u$, and that of $bt$ with $w$, find the closest point $b^\# \in b_{\pi/2}b''$ to $b''$ such that
	i) either $a^\#u$ or $ub^\#$ passes through an obstacle endpoint (other than $v$),
	ii) circular arc $\gamma_{c^\#t}$ (centered at $b^\#$ and emanating counter-clockwise from $t$ to $c^\#$) intersects an obstacle endpoint or is tangent to an obstacle line segment, 
	iii) $b^\#c^\#$ passes through an obstacle endpoint, or 
	iv) $b^\#c^\#$ intersects an obstacle line segment at $c^\#$, 
	where $a^\#$ and $c^\#$ are the positions of $a$ and $c$ (of the probe), respectively, when $b = b^\#$.
	
	\renewcommand\labelenumi{A\arabic{enumi}.}
	\renewcommand\theenumi\labelenumi
	\setcounter{enumi}{6}
	
	\item At this point, observe that the articulated trajectory with the intermediate configuration represented by $ab''c''$ is feasible.
	Find the closest point $b''' \in b'b''$ to $b'$ such that circular arc $\gamma_{c'''t}$ (centered at $b'''$ and emanating counter-clockwise from $t$ to $c'''$) intersects an obstacle endpoint or is tangent to an obstacle line segment (Figure \ref{fig_steps_A}D).
	Note that the ``sector'' bounded by $b'c'''$, $b't$, and circular arc $\gamma_{c'''t}$ is free of obstacles.
	The articulated trajectory with the intermediate configuration indicated by $ab'''c'''$ is a feasible min-$r$ articulated trajectory whose $ab$ passes through $u$ and $v$.
\end{enumerate}

\renewcommand{\arraystretch}{1.5}
\begin{table}
	\footnotesize
	\centering
	\caption{Summary of query data structures used in steps A1-A7.
		The size, preprocessing time, and query time of a data structure are denoted by $S(n)$, $P(n)$, and $Q(n)$, respectively.
		Our results are highlighted in gray.}
	\vspace*{3mm}
	\label{tab1}
	\begin{tabular}{ l >{\raggedright\arraybackslash} p{4.8cm} l l l }
		\hline
		\textbf{Step} & \textbf{Query} & \boldmath$S(n)$ & \boldmath$P(n)$ & \boldmath$Q(n)$ \\
		\hline
		
		\rowcolor{Gray}
		A1, A6 & 
		Subproblem \ref{subprob1}: 
		\newline Circular sector emptiness queries &
		$O(n^{2+\epsilon})$ & $O(n^{2+\epsilon})$ & $O(\log n)$ \\ 
		
		\rowcolor{Gray}
		A2, A4 & 
		Subproblem \ref{subprob2}: 
		\newline Radius intersection queries &
		$O(n)$ & $O(n \log n)$ & $O(\log n)$ \\
		
		A3 & Ray shooting queries \cite{pocchiola90graphics} &
		$O(n^2)$ & $O(n^2)$ & $O(\log n)$ \\
		
		\rowcolor{Gray}
		A5 & 
		Subproblem \ref{subprob3}:
		\newline Radius shooting queries &
		$O(n^2 / \log^2 n)$ & $O(n^2 / \log^2 n)$ & $O(\log^2 n)$ \\
		
		\rowcolor{Gray}
		A7 & 
		Subproblem \ref{subprob4}:
		\newline Arc shooting queries &
		$O(n^{2+\epsilon})$ & $O(n^{2+\epsilon})$ & $O(\log n)$ \\ 
		
		\hline
	\end{tabular}
\end{table}
\normalsize

By simply performing an $O(n)$-check (i.e., check against each of the $O(n)$ obstacles) in each of the steps above, we can obtain an $O(n^3)$-time ``brute-force'' method to find a feasible min-$r$ articulated trajectory, if one exists.
Alternatively, we can address these steps using efficient data structures, which require geometric constructs such as lower envelopes and half-space decomposition schemes.
Refer to Table \ref{tab1} for a summary of the query data structures, whose details are presented next in Section \ref{query_A}.
$O(n^2)$ queries are to be processed in the worst case, resulting in a total query time bounded by $O(n^2 \log^2 n)$.
Since the preprocessing time of the query data structures is dominant overall, we have the following final result.

\begin{thm}
	A feasible min-$r$ articulated probe trajectory, if one exists, can be determined in $O(n^{2+\epsilon})$ time using $O(n^{2+\epsilon})$ space, for any constant $\epsilon > 0$.
\end{thm}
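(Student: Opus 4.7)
The plan is to follow the algorithmic scheme laid out in Section \ref{improved}: first build the set $R$ of candidate rays, then apply steps A1--A7 to each ray, using precomputed query data structures to keep each step polylogarithmic. The correctness follows from Lemma \ref{lem1} and Observation \ref{obs1}: any feasible min-$r$ trajectory can be perturbed into one whose segment $ab$ is tangent to two obstacle endpoints $u,v \in V$ outside $C$, and for each such pair (one for each direction of rotation) steps A1--A7 either produce a feasible trajectory or certify that none exists with $ab$ passing through $u$ and $v$. Returning the feasible trajectory with the smallest $r$ over all pairs solves Problem \ref{prob_min_r}.

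For the running time, I would argue as follows. First, computing $R = \bigcup_{v \in V} R_v$ by constructing a visibility polygon from each $v \in V$ takes $O(n^2 \log n)$ time and $O(n^2)$ space, and yields $|R| = O(n^2)$ candidate rays (equivalently, $O(n^2)$ pairs $(u,v)$ with each of two rotation directions). Second, I would preprocess $P$ once to build the five query structures listed in Table \ref{tab1}: the circular-sector emptiness structure (Subproblem \ref{subprob1}) used in A1 and A6, the radius-intersection structure (Subproblem \ref{subprob2}) used in A2 and A4, the standard ray-shooting structure of Pocchiola \cite{pocchiola90graphics} used in A3, the radius-shooting structure (Subproblem \ref{subprob3}) used in A5, and the arc-shooting structure (Subproblem \ref{subprob4}) used in A7. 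The dominant preprocessing cost is $O(n^{2+\epsilon})$ time and space, coming from the circular sector emptiness and arc shooting structures.

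Third, for each of the $O(n^2)$ rays in $R$, steps A1--A7 perform only a constant number of queries against these data structures (note that A6s is only needed for Problem \ref{prob_all_r} and can be skipped here). Each query takes $O(\log n)$ or $O(\log^2 n)$ time, so the total query cost over all rays is $O(n^2 \log^2 n)$. Adding the preprocessing and the $O(n^2 \log n)$ cost of building $R$, the overall complexity is
\[
O(n^{2+\epsilon}) + O(n^2 \log^2 n) + O(n^2 \log n) = O(n^{2+\epsilon})
\]
time and $O(n^{2+\epsilon})$ space, as claimed. The minimum $r$ is finally obtained by taking the minimum over all feasible trajectories reported by A1 or A7 across all rays.

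The main obstacle is not the bookkeeping above, which is essentially a direct application of Lemma \ref{lem1} and Table \ref{tab1}, but rather the design of the four nonstandard query structures (Subproblems \ref{subprob1}--\ref{subprob4}); in particular, the circular sector emptiness structure, which requires constructing the lower envelope of $n$ bivariate partial functions in $\mathbb{R}^3$ and attaining $O(\log n)$ query time with $O(n^{2+\epsilon})$ preprocessing, is the most delicate component and determines both the exponent $\epsilon$ and the overall space bound. These constructions are developed in Section \ref{query_A} and are assumed here; given them, the proof of the theorem reduces to verifying that each of steps A1--A7 maps cleanly to one of the listed query types and that the query counts per ray are $O(1)$.
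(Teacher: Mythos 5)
Your proposal matches the paper's own argument essentially step for step: build the candidate ray set $R$ via visibility polygons in $O(n^2\log n)$ time, preprocess the query structures of Table \ref{tab1} in $O(n^{2+\epsilon})$ time and space, run steps A1--A7 (omitting A6s) with $O(1)$ queries per ray for a total query cost of $O(n^2\log^2 n)$, and invoke Lemma \ref{lem1} for correctness. This is the same decomposition and the same accounting as in Sections \ref{improved} and \ref{query_A}, so there is nothing further to add.
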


\subsection{Query problems in steps A1-A7}
\label{query_A}

\paragraph{Steps A1 and A6 (circular sector emptiness queries)}
Consider the following circular sector emptiness query problem in two dimensions.

\begin{subprob}
	\label{subprob1}
	Given a set $P$ of $n$ line segments and a fixed point $t$, preprocess them so that, for a query circular sector $\sigma$ with an endpoint of its arc located at $t$, one can efficiently determine whether $\sigma$ intersects $P$.
\end{subprob}

Note that Subproblem \ref{subprob1} is slightly different from the circular sector emptiness query problem addressed in \cite{teo20traj}, given that a query circular sector may not have a fixed radius $r$.

Let $C$ denote the circle of radius $r$ centered at $t$.
For any point $b \in C$, let $\theta$ be the angle of $tb$ relative to the $x$-axis.
Let $D$ be the circle of radius $r$ centered at $b$.
Circle $D$ is uniquely defined by $\theta$ and $r$ as $b$ lies on $C$.
Let $\theta \in [0, 2\pi)$ and $r \in (0, R]$, where $R$ is the fixed radius of workspace $S$.
Define a partial function $\rho_s : [0, 2\pi) \times (0, R] \rightarrow \Re_{\geq 0}$ as follows.
Let $bc'$ be the farthest radius of $D$ counter-clockwise from $bt$ before the minor circular sector bounded by radii $bt$ and $bc'$ intersects a given line segment $s$ of $P$.
Let $\rho_s$ be the angle of $bc'$ measured counter-clockwise from $bt$ (the clockwise case can be handled symmetrically).
Since line segment $bc$ of the probe, after the initial insertion of the probe has completed, may only rotate up to $\pi/2$ radians in either direction, $\rho_s \in [0, \pi/2]$.

By following the same rationale and argument in \cite[Section 3]{daescu19traj3d}, we can claim that $\rho_s(\theta, r)$ is an inverse trigonometric function, and we can define an algebraic function $f_s = \sin (\rho_s/2)$ in terms of three variables $x_b$, $y_b$, and $r$, where $x_b$ and $y_b$ are the $x$- and $y$-coordinates of $b \in C$.
Since $\rho_s$ is partially defined and continuous over $\theta$ and $r$, so is $f_s$ over $x_b$, $y_b$, and $r$.

Since ${x_b}^2 + {y_b}^2 = r^2$, we can construct two lower envelopes $V_1$ and $V_2$ of the piecewise algebraic functions $f_s$ for all given line segments $s$ of $P$, such that $V_1$ is the lower envelope of $f_s$ for $y_b \geq 0$, and $V_2$ is for $y_b < 0$.
Consequently, $V_1$ and $V_2$ are functions of only two variables $x_b$ and $r$.

By using the deterministic divide-and-conquer algorithm given by Agarwal et al. \cite{agarwal96overlay}, we can construct the lower envelopes of $n$ bivariate piecewise algebraic functions (of constant degree), namely $V_1$ and $V_2$, in $O(n^{2+\epsilon})$ time using $O(n^{2+\epsilon})$ space, for any $\epsilon > 0$.

Given a query circular sector $\sigma$, let $b_\sigma$ denote the apex of $\sigma$, and $r_\sigma$ be the radius of $\sigma$.
If $y_{b_\sigma} \geq 0$, then $(b_\sigma, r_\sigma)$ is looked up in $V_1$ in $O(\log n)$ time (with the aid of a supporting point location query data structure); otherwise, $(b_\sigma, r_\sigma)$ is looked up in $V_2$.
Let $\rho_\sigma$ be the acute angle between the two bounding radii of $\sigma$.
If $\sin (\rho_\sigma/2)$ is less than $f_s(x_{b_\sigma}, r_\sigma)$ for all $s \in P$, then $\sigma$ does not intersect $P$.

\begin{lem}
	For a fixed point $t$ and any constant $\epsilon > 0$, a set $P$ of $n$ line segments can be preprocessed into a data structure of size $O(n^{2+\epsilon})$ in $O(n^{2+\epsilon})$ time so that, for a query circular sector $\sigma$ with an endpoint of its arc located at $t$, one can determine whether $\sigma$ intersects $P$ in $O(\log n)$ time.
\end{lem}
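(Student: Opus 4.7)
The plan is to realize the framework already sketched: encode, for each obstacle segment $s$, the maximal sweep angle as a bivariate algebraic function, take the pointwise minimum over $s\in P$ via a lower envelope, and reduce an emptiness query to a point-location query on that envelope. Concretely, for each $s$ I would define $\rho_s(\theta,r)\in[0,\pi/2]$ as the largest angle through which a radius of length $r$ emanating from the apex $b\in C$ (the point of $C$ with $\angle(\overrightarrow{tb},x\text{-axis})=\theta$) can rotate counter-clockwise from $bt$ before the swept sector meets $s$; the clockwise side is handled symmetrically by a second copy of the whole structure. A query sector $\sigma$ with apex $b_\sigma$, radius $r_\sigma$, and half-width $\rho_\sigma$ is disjoint from $P$ iff $\rho_\sigma\leq\min_s\rho_s(\theta_\sigma,r_\sigma)$, so the problem reduces to evaluating a lower envelope of $n$ functions.

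Next I would argue that $f_s:=\sin(\rho_s/2)$ is a partial algebraic function of constant description complexity in the three variables $(x_b,y_b,r)$. The critical angle $\rho_s$ is determined either by the rotating radius hitting an endpoint of $s$ or by the radius becoming tangent to $s$ in its interior; each condition is a polynomial equation of bounded degree in $(x_b,y_b,r,\cos\rho_s,\sin\rho_s)$. The half-angle substitution $t=\tan(\rho_s/2)$ rationalizes the trigonometry into a bounded-degree polynomial in $(x_b,y_b,r,t)$, from which $f_s=t/\sqrt{1+t^2}$ emerges as a constant-degree algebraic function. Using $x_b^2+y_b^2=r^2$ to eliminate $y_b$ on each of the hemispheres $y_b\geq 0$ and $y_b<0$ separately yields, in each hemisphere, a bivariate algebraic function $f_s(x_b,r)$ of constant complexity whose domain is carved out by a constant number of polynomial inequalities (accounting for the subcases endpoint-contact versus tangent-contact, and for the admissible range $\rho_s\in[0,\pi/2]$).

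I would then feed the two collections of $n$ such bivariate functions into the deterministic divide-and-conquer lower-envelope algorithm of Agarwal, Schwarzkopf, and Sharir \cite{agarwal96overlay}, producing envelopes $V_1,V_2$ of combinatorial complexity $O(n^{2+\epsilon})$ in $O(n^{2+\epsilon})$ time and space, together with point-location structures of the same size supporting $O(\log n)$-time queries. Answering a query on $\sigma$ then amounts to selecting $V_1$ or $V_2$ from the sign of $y_{b_\sigma}$, point-locating $(x_{b_\sigma},r_\sigma)$ in $O(\log n)$ time, reading off the envelope value $f^*$, and declaring $\sigma$ empty iff $\sin(\rho_\sigma/2)<f^*$.

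The main obstacle is the middle step: carefully verifying that $f_s$ and its effective domain have constant description complexity, so that the hypotheses of the Agarwal--Schwarzkopf--Sharir bound are honestly met. This requires a clean case analysis of which geometric event (endpoint contact, interior tangency, or saturation at $\pi/2$) governs $\rho_s$ in each region of $(x_b,r)$-space, together with the observation that the breakpoints between these cases are themselves cut out by constantly many polynomial inequalities of bounded degree. Once this algebraic bookkeeping is in place, the stated $O(n^{2+\epsilon})$ preprocessing and size, and the $O(\log n)$ query time, follow directly from \cite{agarwal96overlay} and standard point location.
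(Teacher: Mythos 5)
Your proposal follows essentially the same route as the paper: defining $\rho_s$ as the critical sweep angle, setting $f_s=\sin(\rho_s/2)$, using $x_b^2+y_b^2=r^2$ to reduce to two bivariate envelopes $V_1,V_2$ split by the sign of $y_b$, constructing them with the algorithm of Agarwal et al.\ \cite{agarwal96overlay}, and answering a query by point location and a single comparison. The only difference is that you spell out the half-angle rationalization and the case analysis establishing constant description complexity, which the paper delegates to a citation of prior work; this is a welcome elaboration, not a divergence.
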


\paragraph{Steps A2 and A4 (radius intersection queries)}
Steps A2 and A4 can be essentially reduced to the following radius intersection query problem.

\begin{subprob}
	\label{subprob2}
	Given a set $P$ of $n$ line segments and a fixed point $t$, preprocess them so that, for a query radius $r$ of a circle centered at $t$, one can efficiently determine whether $r$ intersects $P$.
\end{subprob}

Note that the length of $r$ is given at query time.
For a slight abuse of notation, $r$ is used to denote a query line segment (radius) as well as its length.

As with the circular arc intersection query problem in \cite{teo20traj}, we can construct a lower envelope in two dimensions as follows.
For any point $p \in \Re^2$, let $\theta$ denote the angle of $tp$ measured counter-clockwise from the $x$-axis, and $\theta \in [0, 2\pi)$.
For each line segment $s \in P$, we define $f_s(\theta)$ to be the length of the ray from $t$ to its intersection with $s$.
Observe that $f_s(\theta)$ is a partially defined function over at most two maximal contiguous subsets of $[0, 2\pi)$.
In addition, given two disjoint line segments $s_i$ and $s_j$ (which could possibly intersect at their endpoints), $f_{s_i}(\theta) = f_{s_j}(\theta)$ occurs for at most one value of $\theta$, which corresponds to the common intersection point.

Let $\mathcal{V}$ be the lower envelope of $f_s(\theta)$ for all given line segments $s \in P$.
Given the properties of each $f_s(\theta)$, the size of $\mathcal{V}$ is bounded by the second-order Davenport-Schinzel sequence, which is $O(n)$ in complexity, and we can compute $\mathcal{V}$ in $O(n \log n)$ time \cite{Hersh89find,sharir95dav}.

In order to determine if a query radius of length $r$ intersects $P$, the angle $\theta_r$ of the query radius is looked up in $\mathcal{V}$ by using a binary search that takes $O(\log n)$ time.
If $r$ is less than $f_s(\theta_r)$ for all $s$ of $P$, then the query radius does not intersect any line segment of $P$.

\begin{lem}
	A set $P$ of $n$ line segments and a fixed point $t$ can be preprocessed into a data structure of size $O(n)$ in $O(n \log n)$ time so that, for a query radius $r$ of a circle centered at $t$, one can efficiently determine whether $r$ intersects $P$ in $O(\log n)$ time.
\end{lem}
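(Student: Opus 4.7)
The plan is to reduce the query problem to evaluating a lower envelope at a single angular coordinate. Parametrize the rays emanating from $t$ by their angle $\theta \in [0, 2\pi)$ measured counterclockwise from the $x$-axis, and for each $s \in P$ let $f_s(\theta)$ be the Euclidean distance from $t$ to the point where the ray at angle $\theta$ meets $s$ (undefined if the ray misses $s$). Then a query radius of length $r$ at angle $\theta_r$ intersects some segment of $P$ if and only if $r \geq \mathcal{V}(\theta_r)$, where $\mathcal{V}(\theta) = \min_{s \in P} f_s(\theta)$ is the lower envelope of the family $\{f_s\}_{s \in P}$, so preprocessing reduces to building $\mathcal{V}$ and equipping it with a point-location index on $\theta$.

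First I would verify the two structural facts needed to invoke Davenport--Schinzel bounds. Each $f_s$ is continuous on its domain, and that domain is the angular range subtended by $s$ at $t$; this range is a single arc of $[0, 2\pi)$ unless it straddles the cut $\theta = 0$, in which case it splits into at most two arcs. Thus the collection consists of $O(n)$ partially defined pieces. Moreover, for two segments $s_i, s_j$ that are disjoint (or share only an endpoint), a coincidence $f_{s_i}(\theta_0) = f_{s_j}(\theta_0)$ would force $s_i$ and $s_j$ to share the point at distance $f_{s_i}(\theta_0)$ in direction $\theta_0$ from $t$, so the graphs can cross at most once, as already noted just before the lemma statement.

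Since the $O(n)$ pieces pairwise cross at most once, the complexity of $\mathcal{V}$ is bounded by $\lambda_2(n) = O(n)$, and $\mathcal{V}$ can be built in $O(n \log n)$ time by the standard divide-and-conquer construction of lower envelopes for well-behaved univariate functions (Hershberger; Sharir and Agarwal). I would store the $O(n)$ breakpoints of $\mathcal{V}$ in a sorted array, giving an $O(n)$-space structure. A query is answered by binary-searching for $\theta_r$ in the breakpoint array, retrieving $\mathcal{V}(\theta_r)$ in $O(\log n)$ time, and comparing it to $r$. The only real technicality is the wraparound at $\theta = 0$; I would handle it by splitting any segment whose angular span crosses $0$ into two sub-arcs at the outset, which costs only $O(n)$ extra time and leaves all asymptotic bounds unchanged.
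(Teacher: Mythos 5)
Your proposal is correct and follows essentially the same route as the paper: parametrize rays from $t$ by angle, define the partially defined distance functions $f_s(\theta)$, observe that disjoint segments yield at most one pairwise crossing so the lower envelope has Davenport--Schinzel complexity $\lambda_2(n)=O(n)$, build it in $O(n\log n)$ time, and answer a query by binary search on $\theta_r$ followed by a comparison with $r$. Your explicit treatment of the wraparound at $\theta=0$ is a minor technical detail the paper subsumes in the statement that each $f_s$ is defined over at most two maximal contiguous subsets of $[0,2\pi)$.
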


\paragraph{Step A3 (ray shooting queries)}
Refer to \cite{pocchiola90graphics}.

\paragraph{Step A5 (radius shooting queries)}
The query problem associated with step A5 can be described as follows.

\begin{subprob}
	\label{subprob3}
	Given a set $P$ of $n$ points in the plane and a fixed origin $t$, preprocess them so that, for a query line $L$ and a query angle $\alpha$ (relative to the $x$-axis), report the point in $P$ that lies on the side of $L$ containing the origin $t$ and has the smallest angle (from $t$ relative to the $x$-axis) greater than $\alpha$.
\end{subprob}

We consider the half-space decomposition scheme as described by Matou{\v{s}}ek in \cite[Theorem 5.1]{matouvsek93range}.
Given a set $P$ of $n$ points in the plane and a parameter $r \geq n$, one can built a data structure $\mathcal{D}$ with the following properties.
Data structure $\mathcal{D}$ contains a collection of canonical subsets of $P$.
These canonical sets are divided into two groups -- i) inner subsets and ii) remainder subsets.
Inner subsets can be partitioned into $O(\log r)$ collections $C_0, \dots, C_{k-1}$ such that each collection $C_i$ contains $O(\rho^i)$ inner subsets of size at most $n/\rho^i$ each, where $\rho$ is a constant and $\rho^{k-1} \leq r \leq \rho^k$.
There are $O(r^2)$ remainder subsets, each of which is at most $n/r$ in size.
Data structure $\mathcal{D}$ can be computed in $O(nr)$ time.

Using data structure $\mathcal{D}$, given a half-plane $H$, one can, in $O(\log r)$ time, find i) one inner subset from each collection $C_i$ that is contained in $P \cap H$, and ii) a single remainder subset that contains the points not covered by any of the inner subsets (but may contain points other than $P \cap H$).

We now describe a data structure for solving Subproblem \ref{subprob3}.
Set $r = n / \log^2 n$ and compute the decomposition scheme aforementioned as follows.
Each remainder subset is of size $O(\log^2 n)$ and is stored explicitly.
For each inner subset, we store the $O(n/\rho^i)$ points in the sorted order according to their angles from the origin $t$ (relative to the $x$-axis), so that a binary search query can be performed in $O(\log n)$ time to find the point in the inner subset that has the smallest angle greater than $\alpha$.
The total size of the data structure is $O(n^2 / \log^2 n)$, and it can be built in $O(n^2 / \log^2 n)$ time.

Given a query line $L$ and a query angle $\alpha$, let $H$ be the half-plane that is delimited by $L$ and containing the origin $t$.
The decomposition scheme is then queried as follows.
In order to find the point in $P \cap H$ that has the smallest angle greater than $\alpha$, check every point in the one remainder subset (of $O(\log^2 n)$ in size) intersecting $H$, and perform a binary search query in each of the $O(\log n)$ inner subsets contained in $P \cap H$.
The total query time is therefore bounded by $O(\log^2 n)$.

\begin{lem}
	For a fixed origin $t$, a set $P$ of $n$ points in the plane can be preprocessed into a data structure of size $O(n^2 / \log^2 n)$ in $O(n^2 / \log^2 n)$ time so that, for a query line $L$ and a query angle $\alpha$ (relative to the $x$-axis), one can report the point in $P$ that lies on the side of $L$ containing the origin $t$ and has the smallest angle (from $t$ relative to the $x$-axis) greater than $\alpha$ in $O(\log^2 n)$ time.
\end{lem}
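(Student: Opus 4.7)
The plan is to instantiate Matou\v{s}ek's half-space decomposition with the parameter $r$ chosen to balance preprocessing with query cost, and then augment each canonical subset with a sorted-by-angle auxiliary structure that lets us answer the angular successor query inside the subset by binary search.

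Concretely, I would first apply the decomposition scheme from \cite[Theorem 5.1]{matouvsek93range} to $P$ with parameter $r = n/\log^2 n$, which produces $O(\log r) = O(\log n)$ collections of inner subsets and $O(r^2) = O(n^2/\log^4 n)$ remainder subsets, the latter each of size $O(n/r) = O(\log^2 n)$. Each remainder subset is stored explicitly as a list of its points; each inner subset is additionally stored as an array of its points sorted by the angle they subtend at $t$ (measured from the $x$-axis). Summing sizes, the remainder subsets contribute $O(r^2 \cdot n/r) = O(nr) = O(n^2/\log^2 n)$ storage, and the inner subsets contribute $\sum_i O(\rho^i) \cdot O(n/\rho^i) = O(n\log r) = O(n\log n)$, so the total size is $O(n^2/\log^2 n)$. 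The preprocessing time is dominated by the $O(nr)$ cost of building the decomposition itself, i.e.\ $O(n^2/\log^2 n)$, plus an $O(n\log n)$ overhead to sort each inner subset by angle, which is absorbed.

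To answer a query $(L,\alpha)$, let $H$ be the half-plane bounded by $L$ that contains $t$. I would query the decomposition on $H$, which in $O(\log n)$ time returns one inner subset from each of the $O(\log n)$ collections (whose union covers all of $P\cap H$ except the portion lying in one remainder subset) plus the single relevant remainder subset. For each of the $O(\log n)$ returned inner subsets, I perform a binary search on its sorted-by-angle array to find its point of smallest angle exceeding $\alpha$, costing $O(\log n)$ per subset, for $O(\log^2 n)$ total. For the remainder subset I simply scan its $O(\log^2 n)$ points, filtering those in $H$ and keeping the angular successor of $\alpha$; this also takes $O(\log^2 n)$. The final answer is the minimum-angle candidate over all $O(\log n)+1$ sources, and the overall query time is $O(\log^2 n)$.

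The only delicate point in the argument is verifying that the angular-successor subproblem restricted to a canonical subset is genuinely handled by a static sorted list: because each canonical subset is determined at preprocessing time independently of the query, its sorted-by-angle order is a fixed permutation, so the binary search for the successor of $\alpha$ is unambiguous. The remainder subset requires a linear scan rather than a binary search because, although it is stored sorted, not all of its points lie in $H$; but its $O(\log^2 n)$ size makes this affordable and is precisely the consideration that fixes the choice $r = n/\log^2 n$, which equalizes the per-query contributions of the inner and remainder parts at $O(\log^2 n)$.
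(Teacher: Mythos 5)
Your proposal is correct and follows essentially the same route as the paper: Matou\v{s}ek's half-space decomposition with $r = n/\log^2 n$, inner canonical subsets augmented with angle-sorted arrays for binary search, and a linear scan of the single $O(\log^2 n)$-size remainder subset. Your explicit accounting of the storage contributed by the remainder subsets and the observation that the remainder subset must be filtered against $H$ are details the paper glosses over, but the construction and query procedure are identical.
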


\begin{remark}
In general, a trade-off between space and time usage by the half-space decomposition scheme can be achieved by following the strategy outlined in \cite[Theorem 6.2]{matouvsek93range}.
\end{remark}

\paragraph{Step A7 (arc shooting queries)}
A formal statement of the query problem involved in step A7 is given below.

\begin{subprob}
	\label{subprob4}
	Given a fixed origin $t$, for a query line $L$, let $bt$ be the perpendicular line segment dropped from the origin $t$ to line $L$, where $b \in L$ (Figure \ref{fig_subprob4_1}).
	Assume that $L$ is located above $t$ (the other case can be described similarly due to its symmetry), and $b$ partitions $L$ into a left half-line $L^-$ and a right half-line $L^+$.
	Let $b^*$ be a point on $L^-$.
	Let $\gamma_{c^*t}$ denote the circular arc that is centered at $b^*$ and emanating counter-clockwise from $t$ to $c^*$, where $c^* \in L^+$.
	Given a set $P$ of $n$ line segments, preprocess $P$ so that one can efficiently find the closest point $b^*$ to $b$ on $L^-$ such that the ``sector'' bounded by $bc^*$, $bt$, and $\gamma_{c^*t}$ is free of $P$, if such $b^*$ exists at all.
\end{subprob}

\begin{figure}[h]
	\centering
	\includegraphics[scale=0.17]{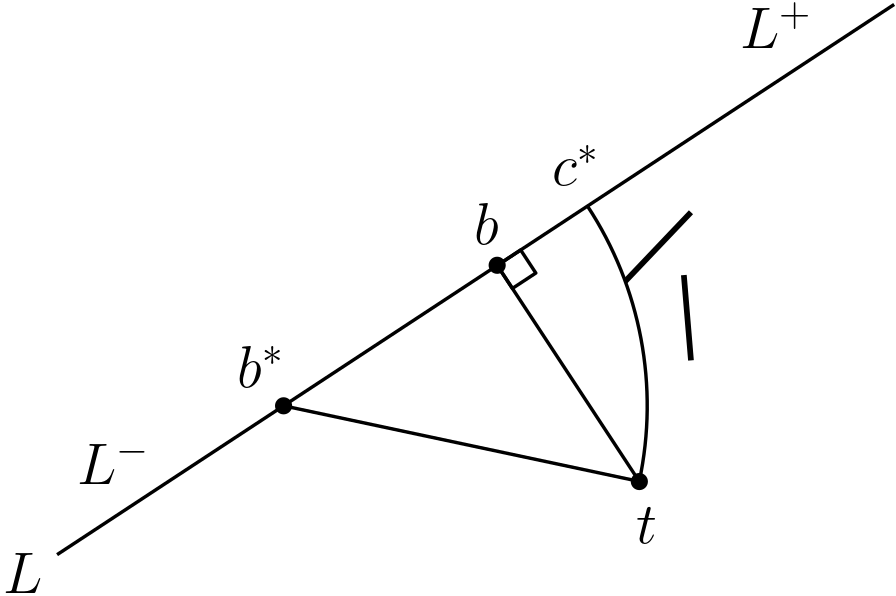}
	\caption{Illustration of the closest point $b^*$ to $b$ on $L^-$ such that the region bounded by $bc^*$, $bt$, and $\gamma_{c^*t}$ is free of line segments.}
	\label{fig_subprob4_1}
\end{figure}

Given a query line $L$, let $r$ denote the length of its corresponding segment $bt$, and $\theta$ be the angle of $bt$ measured counter-clockwise from the $x$-axis (Figure \ref{fig_subprob4_2}).
Note that $L$ is uniquely characterized by $(r, \theta)$.
In addition, $L$ can be expressed as $y = mx + d$, where $m$ is the slope of $L$, and $d$ is the $y$-intercept of $L$. Observe that $m = -1/\tan \theta$ and $d = r/\sin \theta$, where $\theta \in [0, 2\pi)$, $r \in [0, R]$, and $R$ is the constant radius of the circular workspace $S$.

\begin{figure}[h]
	\centering
	\includegraphics[scale=0.17]{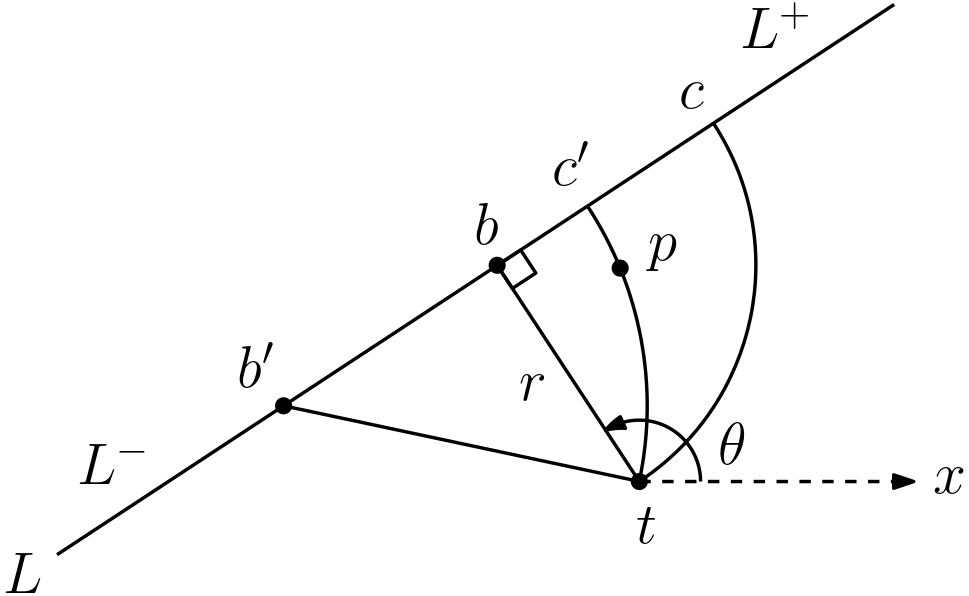}
	\caption{Notations used in the exposition of Subproblem \ref{subprob4}.}
	\label{fig_subprob4_2}
\end{figure}

For simplicity of exposition, let $P$ be a set of $n$ obstacle endpoints instead of line segments (one would later find that the same argument holds for line segments as well).

Let $\gamma_{ct}$ denote the circular arc that is centered at $b$ and emanating counter-clockwise from $t$ to $c$, where $c \in L^+$.
Let $\sigma_{bct}$ be the quarter circular sector bounded by $bc$, $bt$, and $\gamma_{ct}$. 
As with $L$, $\sigma_{bct}$ is uniquely determined by $r$ and $\theta$ (i.e., $m$ and $d$).
For a point $p \in P$, there exists a circle that passes through $t$ and $p$ and is centered at a point $b'$ on $L^-$.
The $x$- and $y$-coordinates of point $b'$ can be written, respectively, as
\begin{equation}
\label{eqn1}
x_{b'} = \frac{\frac{{x_p}^2 + {y_p}^2}{2y_p} - d}{m + \frac{x_p}{y_p}} \quad \text{and} \quad
y_{b'} = \frac{m \left( \frac{{x_p}^2 + {y_p}^2}{2y_p} - d \right)}{m + \frac{x_p}{y_p}} + d
\end{equation}
where $x_p$ and $y_p$ denote the $x$- and $y$-coordinates of $p$, respectively.
On the other hand, the $x$- and $y$-coordinates of point $b$ are given, respectively, by
\begin{equation}
\label{eqn2}
x_b = \frac{d}{m + \frac{1}{m}} \quad \text{and} \quad
y_b = \frac{d}{m \left( m + \frac{1}{m} \right)}
\end{equation}
Let $\eta_p$ denote the distance from $b$ to $b'$.
Distance $\eta_p$ can be expressed as
\begin{align}
\label{eqn3}
\eta_p &= \|b-b'\| \nonumber \\ 
&= \sqrt{\|b\|^2 + \|b'\|^2 - 2b \cdot b'} \nonumber \\
&= \sqrt{{x_b}^2 + {y_b}^2 + {x_{b'}}^2 + {y_{b'}}^2 - 2 \left( x_bx_{b'} + y_by_{b'} \right)}
\end{align}
where $\| \|$ symbolizes the Euclidean norm, and $\cdot$ denotes the dot product.
By substituting Equations \ref{eqn1} and \ref{eqn2} into Equation \ref{eqn3}, $\eta_p$ can be expressed in terms of $m$ and $d$.
Notice that circular arc $\gamma_{c't}$, which is centered at $b'$ and emanating counter-clockwise from $t$ to $c'$, is always contained within the quarter circular sector $\sigma_{bct}$.
Thus, we are only concerned with characterizing $\eta_p$ for points $p \in P$ that are located within the quarter circular sector $\sigma_{bct}$.
The result is that, for each point $p \in P$, $\eta_p$ is only defined over a partial contiguous region (subset) $F_p$ of the $(m, d)$-space (see Figure \ref{fig_subprob4_3} for an illustration).

\begin{figure}[h]
	\centering
	\includegraphics[scale=0.2]{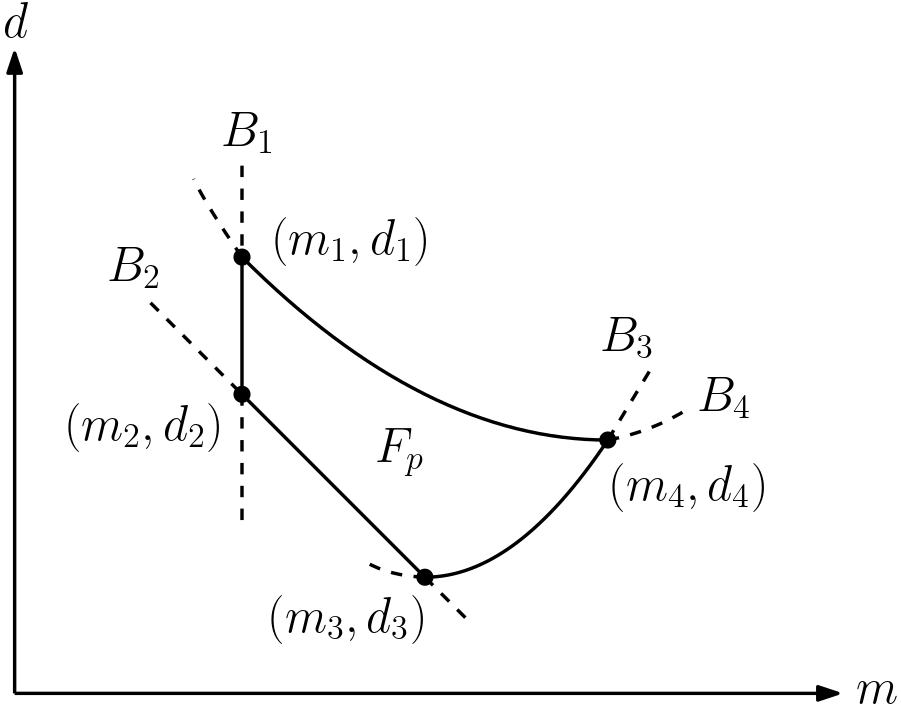}
	\caption{Contiguous region $F_p$ for a point $p$ is bounded by four polynomial curves in the $(m, d)$-space.}
	\label{fig_subprob4_3}
\end{figure}

In fact, for a point $p \in P$, $F_p$ is the intersection of four polynomial inequalities of constant degree (dependent on $p$), which can be determined using algebraic geometry.
The four bounding polynomial functions of $F_p$ are

\begin{enumerate}
	\item \( \displaystyle m = - \frac{x_p}{y_p} \),
	
	\item \( \displaystyle d = y_p - mx_p \),
	
	\item \( \displaystyle d = y_0 - mx_0 \) where \\
	\( \displaystyle {x_0}^2 + {y_0}^2 = r^2 \) and
	\( \displaystyle \left( 1 + \frac{1}{m^2} \right) \left( \frac{d}{m + \frac{1}{m}} \right)^2 = r^2 \), and
	
	\item \( \displaystyle \left( 1 + \frac{1}{m^2} \right) \left( \frac{d}{m + \frac{1}{m}} \right)^2 = R^2 \), \\
\end{enumerate}
which correspond to curves $B_1$, $B_2$, $B_3$, and $B_4$, respectively, in Figure \ref{fig_subprob4_3}.
Intuitively, traveling along the boundary of $F_p$ (i.e., $(m_1, d_1) \rightarrow (m_2, d_2) \rightarrow (m_3, d_3) \rightarrow (m_4, d_4) \rightarrow (m_1, d_1)$ in Figure \ref{fig_subprob4_3}) is analogous to changing (the slope $m$ and $y$-intercept $d$ of) query line $L$ from one possible extreme to another while keeping $p$ within or on the boundary of $\sigma_{bct}$ (i.e., $L_1 \rightarrow L_2 \rightarrow L_3 \rightarrow L_4 \rightarrow L_1$ in Figure \ref{fig_subprob4_4}).

\begin{figure}[h]
	\centering
	\includegraphics[scale=0.2]{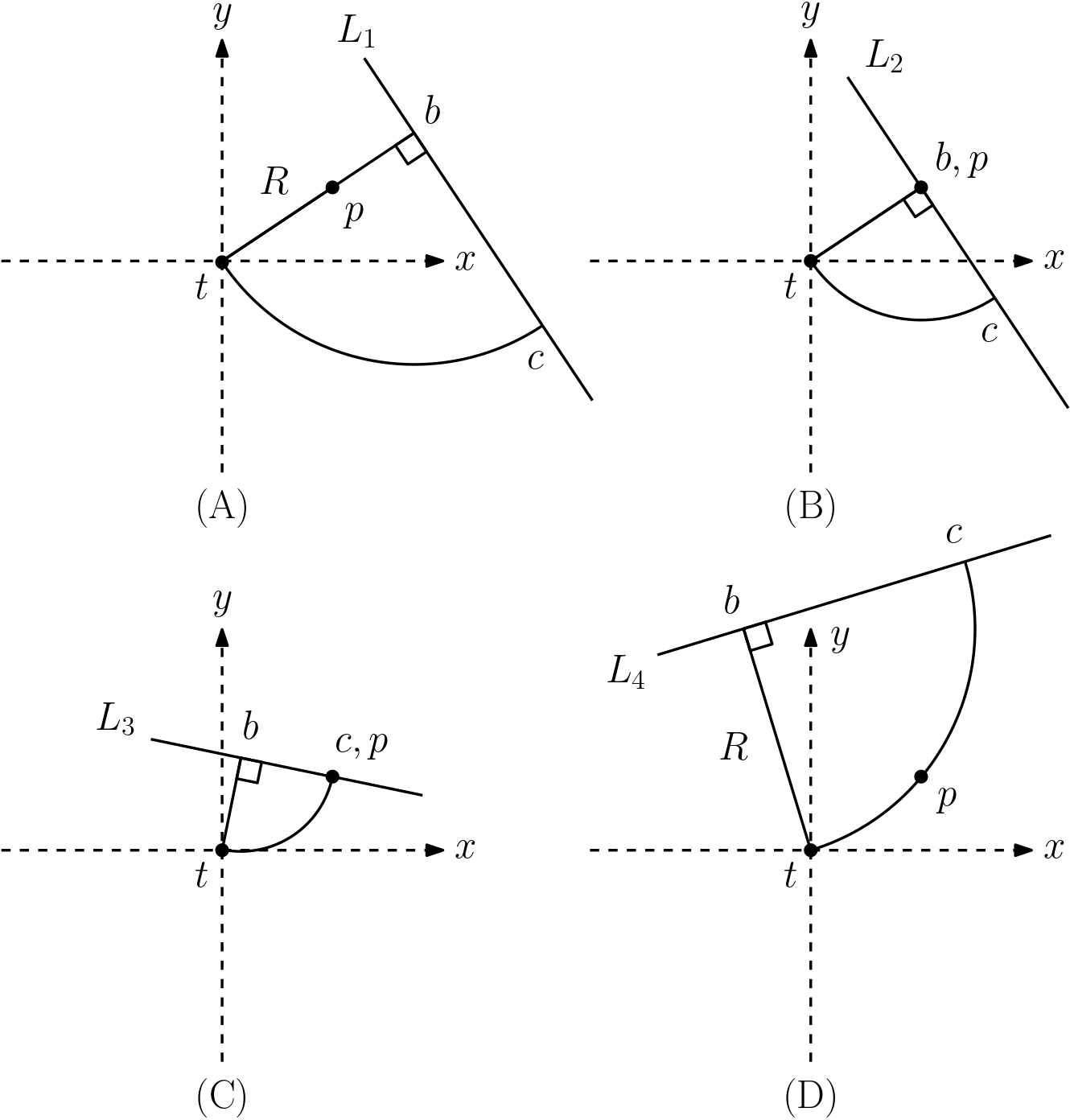}
	\caption{Illustrations of the boundary conditions in the $(x, y)$-plane, indicated by the changes of query line $L$ (and its associated quarter circular sector $\sigma_{bct}$), when defining $F_p$ for a given point $p$ in the $(m, d)$-space.
		(A) $L_1 : y = m_1x + d_1$,
		(B) $L_2 : y = m_2x + d_2$,
		(C) $L_3 : y = m_3x + d_3$, and
		(D) $L_4 : y = m_4x + d_4$,
		where $\{ (m_i, d_i) | 1 \leq i \leq 4\}$ is the set of corner points of $F_p$ as illustrated in Figure \ref{fig_subprob4_3}.}
	\label{fig_subprob4_4}
\end{figure}

Note that $\eta_p(m, d)$ is a partially defined function, where $m = -1/\tan \theta$, $d = r/\sin \theta$, $\theta \in [0, 2\pi)$, and $r \in [0, R]$.
We define a collection $C_P = \{ \eta_p | p \in P \}$ of $n$ bivariate functions, and construct the maximization diagram $M_P$ (i.e., upper envelope) of $C_P$.
Upper envelope $M_P$ can be computed in $O(n^{2+\epsilon})$ time using $O(n^{2+\epsilon})$ space, for any $\epsilon > 0$ \cite{agarwal96overlay}.

Given a query line $L$, recall that the objective is to find the point $b^*$ closest to $b$ on $L^-$ such that the ``sector'' bounded by $bc^*$, $bt$, and $\gamma_{c^*t}$ is free of $P$.
Let $m$ and $d$ be the slope and $y$-intercept of $L$, respectively.
Since $M_P = \| b - b^* \|$, $b^*$ can be easily determined after looking up $(m, d)$ in $M_P$ in $O(\log n)$ time (with the help of a supporting point location query data structure).

Note that a similar analysis also applies to the case of line segments.
The only difference is that the characterization of $\eta_p$ (as well as $F_p$) involves the tangency point between a circle (centered at $b'$ and passing through $t$) and a given line segment in addition to the endpoints of the line segment.
Nonetheless, the resulting expressions (for $\eta_p$ and $F_p$) are still algebraic and of constant degree, and hence $M_P$ can be constructed using the same time and space complexities.

\begin{lem}
	In Subproblem \ref{subprob4}, a set $P$ of $n$ line segments and a fixed origin $t$ can be preprocessed in $O(n^{2+\epsilon})$ time using $O(n^{2+\epsilon})$ space, for any $\epsilon > 0$, so that one can find, in $O(\log n)$ time, the point $b^* \in L^-$ closest to $b$ such that the ``sector'' bounded by $bc^*$, $bt$, and $\gamma_{c^*t}$ is free of $P$.
\end{lem}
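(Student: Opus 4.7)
The plan is to reduce the query to a point-location problem on the upper envelope of $n$ partially-defined bivariate algebraic functions in the $(m,d)$-parameter plane, then invoke the Agarwal--Schwarzkopf--Sharir envelope construction \cite{agarwal96overlay}. First I would parameterize each admissible query line $L$ by its slope $m$ and $y$-intercept $d$, noting that the foot of perpendicular $b$, and consequently the entire quarter sector $\sigma_{bct}$, is determined by $(m,d)$ through Equation \ref{eqn2}. For each point $p\in P$, the center $b'\in L^-$ of the unique circle through $t$ and $p$ is given in closed form by Equation \ref{eqn1}, so that $\eta_p(m,d)=\|b-b'\|$, obtained by substitution into Equation \ref{eqn3}, is an algebraic function of constant degree in $(m,d)$.

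Next I would carefully identify the domain $F_p\subseteq\Re^2$ on which $\eta_p$ is relevant, namely those $(m,d)$ for which $p$ lies in the (closed) quarter sector $\sigma_{bct}$; this guarantees that the arc $\gamma_{c't}$ through $t$ and $p$ is itself contained in $\sigma_{bct}$ and hence really certifies infeasibility up to distance $\eta_p$ along $L^-$. The four boundary curves $B_1,\dots,B_4$ listed after Figure \ref{fig_subprob4_3} correspond exactly to the four ways $p$ can exit $\sigma_{bct}$ as $(m,d)$ varies: crossing the radius $bt$, the radius $bc$, the arc $\gamma_{ct}$, or the outer workspace circle of radius $R$. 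Verifying that $F_p$ is precisely the connected cell cut out by these four constant-degree polynomial inequalities is a routine but tedious case analysis that I would handle by tracing what happens as $(m,d)$ traverses $\partial F_p$, mirroring the walk $L_1\to L_2\to L_3\to L_4\to L_1$ suggested in Figure \ref{fig_subprob4_4}.

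With the collection $C_P=\{\eta_p : p\in P\}$ of $n$ partial bivariate algebraic functions of bounded degree in hand, I would apply the divide-and-conquer construction of \cite{agarwal96overlay} to compute the upper envelope $M_P$ in $O(n^{2+\epsilon})$ time and $O(n^{2+\epsilon})$ space, augmented with a standard point-location structure supporting $O(\log n)$ queries. On input $L=(m,d)$, point-locating $(m,d)$ in $M_P$ returns the function $\eta_{p^*}$ realizing the maximum at $(m,d)$; the value $\eta_{p^*}(m,d)$ is exactly $\|b-b^*\|$, from which $b^*\in L^-$ is recovered in $O(1)$ additional time. If $(m,d)\notin\bigcup_p F_p$, every obstacle lies outside $\sigma_{bct}$ and $b^*$ may be placed arbitrarily far along $L^-$.

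Finally, to upgrade from points to line segments, I would replace each $s\in P$ by an analogous $\eta_s(m,d)$, using the fact that the earliest contact between the growing arc $\gamma_{c't}$ and $s$ occurs either at an endpoint of $s$ or at the tangency of the circle centered at $b'$ through $t$ with the supporting line of $s$. Both cases yield constant-degree algebraic expressions, and the corresponding domain $F_s$ is still cut out by $O(1)$ constant-degree polynomial inequalities, so the envelope bounds are preserved. The main obstacle I anticipate is purely the bookkeeping in this last step: guaranteeing that $\eta_s$ is well-defined and continuous across the internal boundary where the witnessing contact switches between an endpoint and the tangency point, so that $\eta_s$ is genuinely a single partial function of constant algebraic complexity and the envelope machinery of \cite{agarwal96overlay} applies without modification.
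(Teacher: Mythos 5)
Your proposal follows essentially the same route as the paper: parameterize $L$ by $(m,d)$, define the partial bivariate algebraic distance functions $\eta_p$ over the domains $F_p$ bounded by the four polynomial curves, build the upper envelope $M_P$ via \cite{agarwal96overlay}, answer queries by point location, and extend to segments through the endpoint/tangency case analysis. The argument is correct and matches the paper's proof in all essentials.
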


\section{Computing all feasible values of \boldmath$r$}
\label{sec:all_r}

In this section, we address Problem \ref{prob_all_r}.
We begin by providing a version of Lemma \ref{lem1} for a feasible articulated trajectory with the \emph{maximum length $r$}.

\begin{lem}
	\label{lem2}
	For a feasible max-$r$ articulated trajectory, there is a corresponding extremal feasible max-$r$ articulated trajectory such that, in its final configuration, $bt$ passes through an obstacle endpoint, and at least one of the following is true:
	\renewcommand\labelenumi{\Roman{enumi})}
	\renewcommand\theenumi\labelenumi
	\begin{enumerate}
		\item $ab$ passes through two obstacle endpoints,
		\item $ab$ passes through an obstacle endpoint, and $\angle cbt = \pi/2$ radians (or $r = R$, whichever occurs first),
		\item $ab$ passes through an obstacle endpoint, and $\gamma_{ct}$ intersects an obstacle endpoint or is tangent to an obstacle line segment,
		\item $ab$ and $bc$ each pass through an obstacle endpoint, or
		\item $ab$ passes through an obstacle endpoint, and $bc$ intersects an obstacle line segment at $c$.
	\end{enumerate}	
\end{lem}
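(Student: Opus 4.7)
The plan is to mirror the proof of Lemma~\ref{lem1}, with ``minimum'' replaced by ``maximum''; accordingly, the central incidence to enforce is $bt$ through an obstacle endpoint, rather than $ab$ through two endpoints. First I would establish a max-$r$ analog of Observation~\ref{obs1}: for any feasible max-$r$ articulated trajectory, there is an extremal one whose final segment $bt$ passes through some $v\in V$. This follows from a perturbation dual to the one used in Lemma~\ref{lem1}: starting from a feasible max-$r$ trajectory with $bt$ avoiding every endpoint, one can displace $b$ outward along the ray from $t$ through $b$, adjusting the direction of $ab$ as needed, which strictly increases $r=|bt|$ and preserves feasibility for a sufficiently small increment, contradicting maximality. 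The only way such a perturbation can be blocked is for an endpoint to pin $bt$.

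With $v$ fixed on $bt$, the position of $b$ is restricted to the ray from $t$ through $v$, parametrized by $s=|tb|$, and the only remaining degree of freedom is the direction of $ab$ about $b$. Rotating $ab$ about $b$ until it first touches an obstacle endpoint $u$ is always possible at the maximum: otherwise the range of feasible directions for $ab$ would be open, and this slack could be exploited to push $b$ slightly further out along the ray, again contradicting maximality. We are then in exactly the setting of Scenario~B in the proof of Lemma~\ref{lem1}, with $ab$ through $u$ and $bt$ through $v$.

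By the remark following Scenario~B, as $b$ slides along this ray maintaining both incidences, $r$ varies over a contiguous interval $[r_{b^*},r_{b_1}]$, and the right endpoint $r_{b_1}$ is attained precisely when one of the following events occurs first: (a) $\angle cbt=\pi/2$ or $r=R$, giving case~II; (b) $ab$ meets a second obstacle endpoint, giving case~I; (c) the intermediate segment $bc$ collides with an obstacle line segment, which in the limit forces $bc$ through the endpoint of that segment (case~IV) when the contact is at an interior point of $bc$, and gives case~V when the contact is at $c$; and (d) $\gamma_{ct}$ meets an obstacle endpoint or becomes tangent to an obstacle line segment, giving case~III. Since $r$ is maximal, we must be at $b_1$, and the corresponding extremal trajectory satisfies $bt$ through $v$ together with one of I--V, as claimed. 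The main obstacle I expect is making the first perturbation rigorous: it must be shown that whenever $bt$ avoids all endpoints, one can simultaneously satisfy the non-intersection constraints for $ab$, $\sigma_{bct}$, and intermediate $bc$, as well as $\angle cbt\le\pi/2$ and $r\le R$, after a small outward shift of $b$, possibly via a compensating rotation of $ab$; a careful compactness argument is needed to ensure that the feasible direction range for $ab$ does not collapse to a single value as $s$ grows.
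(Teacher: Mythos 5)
The paper's own proof of Lemma~\ref{lem2} is a one-line pointer: it ``follows immediately from the proof of Lemma~\ref{lem1},'' i.e., from the two Remarks there, which show that along each pinned one-parameter family the feasible values of $r$ form intervals whose upper endpoints $b_1$ are exactly the events you enumerate. Your terminal case analysis -- $\angle cbt=\pi/2$ or $r=R$ giving II, a second endpoint on $ab$ giving I, intermediate $bc$ contact giving IV or V, arc contact giving III -- is precisely that reading, so the second half of your argument matches the paper's intent.

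The genuine gap is in your opening step. You claim that if $bt$ avoids every obstacle endpoint, then pushing $b$ outward along the ray from $t$ through $b$ (with a compensating rotation of $ab$) preserves feasibility for a small increment, so that ``the only way such a perturbation can be blocked is for an endpoint to pin $bt$.'' That is not true as stated: the outward push enlarges the radius of the swept sector $\sigma_{bct}$ and moves its apex, so it can be blocked by $\gamma_{ct}$ becoming tangent to an obstacle segment, by the intermediate $bc$ meeting an obstacle, or by the admissible direction range for $ab$ collapsing -- none of which places an endpoint on $bt$. You flag this yourself as needing ``a careful compactness argument,'' but no such argument will rescue the claim in this form, because the blocking constraint genuinely need not involve $bt$. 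The paper reaches the conclusion that $bt$ is pinned by a different route: Observation~\ref{obs1} first reduces a max-$r$ trajectory to Scenario~A ($ab$ through two endpoints) or Scenario~B ($ab$ and $bt$ each through one). In Scenario~B, $bt$ is pinned by hypothesis; in Scenario~A, the Remark shows the feasible interval terminates at $b_1$ precisely when $bt$ becomes tangent to an obstacle segment or $b$ reaches $v$ -- and in either event $bt$ acquires an obstacle endpoint while $ab$ still passes through two, which is your case~I. Replacing your outward-push argument with this two-scenario reduction closes the gap; the remainder of your proof then goes through as written.
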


\begin{proof}
	The lemma follows immediately from the proof of Lemma \ref{lem1}.
\end{proof}

We know from \cite[Lemma 2.1]{teo20traj} that, for any given $r$, a feasible articulated trajectory is always associated with an ``extremal'' feasible articulated trajectory with a final configuration in which i) $ab$ passes through two obstacle endpoints, or ii) $ab$ and $bt$ each pass through an obstacle endpoint.
Suppose that, for each pair of obstacle endpoints $u, v \in V$, we consider the extremal feasible articulated trajectory intersecting $u$ and $v$ in the way just described (if one exists), and we characterize the range of $r$ for which the extremal trajectory remains feasible (i.e., by varying the length $r$ of segment $bc$ of the probe while maintaining the intersections of the trajectory with $u$ and $v$).
In fact, this is exactly what was illustrated in the proof of Lemma \ref{lem1} (and stated in Corollary \ref{cor}).
The resulting ranges of $r$ form the feasible domain of $r$.
Observe that the lower and upper limits of these feasible ranges of $r$ are given by the extremal feasible min- and max-$r$ articulated trajectories.
Consequently, based on Lemmas \ref{lem1} and \ref{lem2}, by using the enumerate-and-verify approach described in Section \ref{enum_and_verify}, if we were to choose carefully an ordering (of the obstacles) in the enumeration of the $O(n^3)$ extremal feasible min- and max-$r$ articulated trajectories, we could produce the set of all feasible values of $r$.
Hence, we obtain the following result.

\begin{thm}
	The feasible domain of $r$ can be computed in $O(n^3 \log n)$ time using $O(n^{2+\epsilon})$ space, for any constant $\epsilon > 0$.
\end{thm}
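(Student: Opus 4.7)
The plan is to enumerate, for every unordered pair $\{u,v\} \subseteq V$ of obstacle endpoints, every extremal min-$r$ trajectory associated with $u,v$ (using Lemma \ref{lem1}) and every extremal max-$r$ trajectory associated with $u,v$ (using Lemma \ref{lem2}), and then assemble the feasible domain as a union of intervals whose endpoints are supplied by these extremal trajectories. Corollary \ref{cor} together with the remarks following Lemma \ref{lem1} (and its symmetric companion Lemma \ref{lem2}) guarantees that every maximal contiguous feasible range of $r$ is delimited on both sides by such extremal trajectories, so once all of them are enumerated and tested, the feasible domain is fully recovered.

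First I would enumerate the $O(n^3)$ candidate trajectories. Each is determined by a combinatorial signature consisting of the pair $\{u,v\}$ touched by $ab$ (Scenario A) or by $ab$ and $bt$ (Scenario B), together with one additional incidence or tangency constraint drawn from cases I--V of Lemma \ref{lem1} or Lemma \ref{lem2} (either a third obstacle endpoint, a tangency of $\gamma_{ct}$ to an obstacle segment, or the angular cap $\angle cbt = \pi/2$). For each signature, the position of $b$, the length $r$, and the rotation angle are the solutions of a constant-degree algebraic system, and can therefore be produced in $O(1)$ time per signature by an algebraic-geometric routine, as in Section \ref{enum_and_verify}.

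Next I would certify feasibility of each candidate using two $O(\log n)$-time queries: a ray-shooting query \cite{pocchiola90graphics} to check that $ab$ is obstacle-free, and a circular sector emptiness query (Subproblem \ref{subprob1}) to check that $\sigma_{bct}$ is obstacle-free. Preprocessing these two structures takes $O(n^{2+\epsilon})$ time and space, dominated by Subproblem \ref{subprob1}; the $O(n^3)$ feasibility checks together cost $O(n^3 \log n)$ time. Finally I would sort the surviving extremal $r$-values and sweep them in $O(n^3 \log n)$ time to output the maximal feasible intervals, attaching to each interval an implicit representation of a feasible trajectory (for instance, the extremal trajectory realizing one of its endpoints, which can then be perturbed as in the proof of Lemma \ref{lem1} to obtain a feasible trajectory for any interior $r$).

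The main obstacle is the correctness of the interval assembly, in particular the warning issued in the remark after Scenario B of Lemma \ref{lem1}: for a fixed pair $\{u,v\}$ in Scenario B, sliding $b$ along $\rho_{b_0}$ can produce multiple disjoint feasible subranges of $r$, since $ab$ may repeatedly enter and leave intersections with obstacle segments while $\sigma_{bct}$ stays empty. The resolution, already hinted at in that remark, is that every boundary between two such subranges corresponds to a configuration in which $ab$ passes through an additional obstacle endpoint besides $u$; such a configuration is a Scenario A extremal trajectory for some pair $\{u, w\}$, and is therefore already enumerated when we process the pair $\{u, w\}$. A continuity argument, essentially the one used in the proof of Lemma \ref{lem1} to trace how $b$ varies between $b_0$ and $b_1$, confirms that no boundary of the feasible domain is missed and that the sweep produces no spurious intervals. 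This yields the claimed $O(n^3 \log n)$ time and $O(n^{2+\epsilon})$ space bounds.
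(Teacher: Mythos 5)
Your proposal matches the paper's own argument: both enumerate the $O(n^3)$ extremal min-$r$ and max-$r$ trajectories of Lemmas \ref{lem1} and \ref{lem2}, verify each with the ray-shooting and circular-sector-emptiness structures of Section \ref{enum_and_verify} in $O(\log n)$ time after $O(n^{2+\epsilon})$ preprocessing, and assemble the feasible domain as a union of per-pair intervals delimited by these extremal trajectories, handling the disjoint Scenario~B subranges via Scenario~A configurations exactly as the paper does. Your write-up is in fact more explicit than the paper's rather terse proof about the interval assembly, but the approach is the same.
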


We now consider an approach to solving Problem \ref{prob_all_r} by following the perturbation steps in the proof of Lemma \ref{lem1}.
Observe that the sequence of trajectory perturbations described in Scenario A in the proof of Lemma \ref{lem1} is essentially analogous to the procedure described in Section \ref{sec:min_r} for finding an extremal feasible min-$r$ articulated trajectory.
As a result, the range of $r$ associated with the segment $b''b'''$ found for a pair of obstacle endpoints $u, v \in V$, where $u \neq v$, in that procedure corresponds to a maximal contiguous subset of the feasible values of $r$ for the pair $(u, v)$.
Thus, in order to find the entire feasible domain of $r$, we have to additionally address Scenario B in the proof of Lemma \ref{lem1}, as detailed next.

\begin{figure}[h]
	\centering
	\includegraphics[scale=0.16]{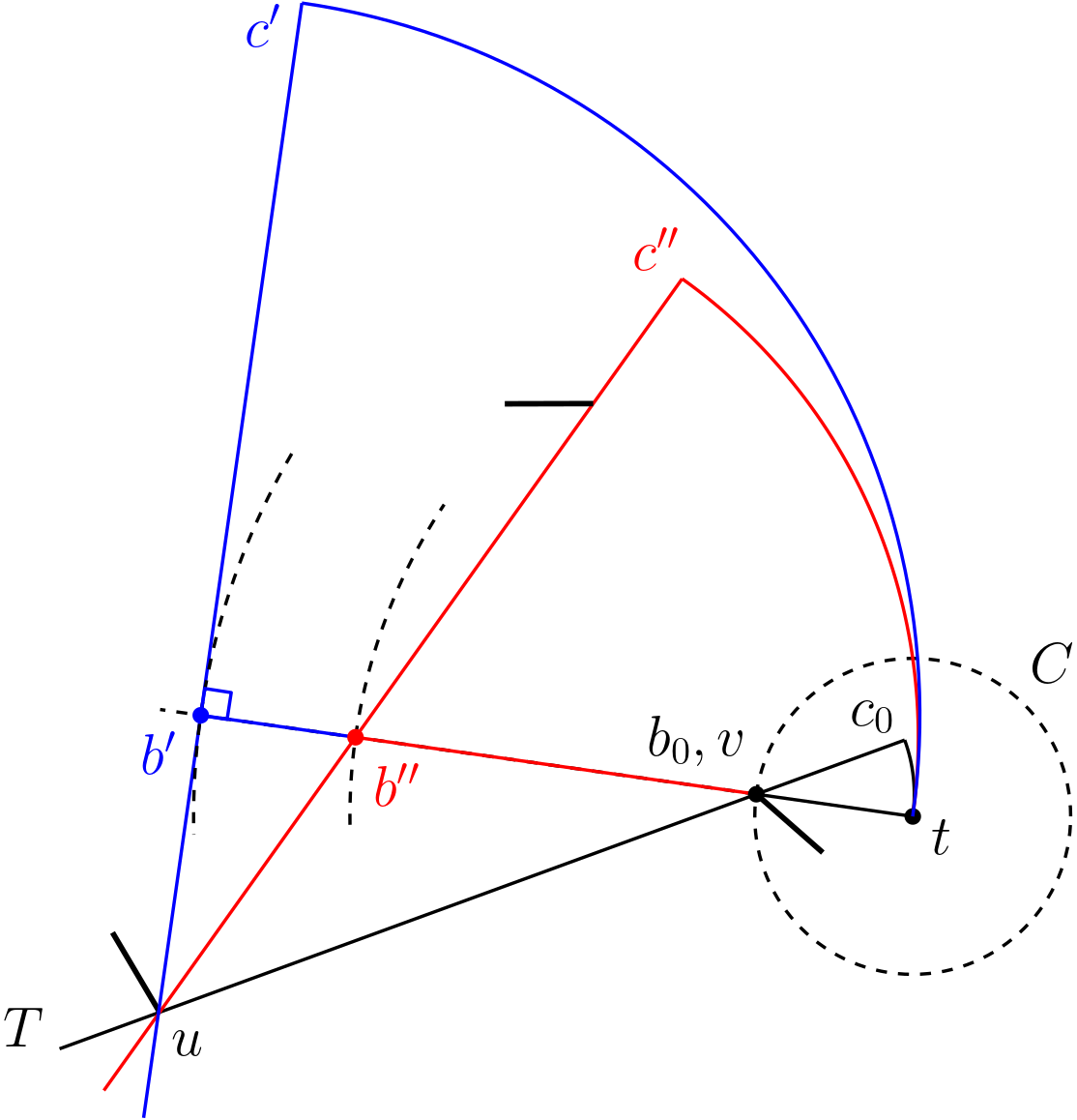}
	\caption{Illustration for the sequence of steps to find $[r_\alpha, r_\beta]$, where $r_\alpha = |b_0c_0|$ and $r_\beta = |b''c''|$.}
	\label{fig_steps_B}
\end{figure}

For a pair of obstacle endpoints $u, v \in V$, where $u \neq v$, let $T$ denote an articulated probe trajectory whose $ab$ passes through $u$, and $bt$ passes through $v$.
Without loss of generality, assume that $T$ rotates line segment $bc$ (of the probe) clockwise around $b$ to reach $t$ (Figure \ref{fig_steps_B}).
Note that, for such a pair $(u,v)$ and a trajectory $T$, we may have multiple disjoint contiguous feasible ranges of $r$ (see the proof of Lemma 2).
This situation will be addressed later in the section.
Now, in the following sequence of steps, we look for the contiguous range containing the smallest feasible value of $r$ for the pair $(u, v)$ -- that is, the contiguous feasible range of $r$ whose lowest value is the closest overall to the length of $|bt|$ when $b = v$.
We denote the resulting range as $[r_\alpha, r_\beta]$.

\begin{enumerate}
	\renewcommand\labelenumi{B\arabic{enumi}.}
	\renewcommand\theenumi\labelenumi
	
	\item Let $a_0$, $b_0$, and $c_0$ denote the positions of $a$, $b$, and $c$ (of the probe), respectively, when $b = v$.
	Check if $a_0b_0$ intersects any obstacle.
	If it does, we stop (this scenario will be taken care of later).
	Otherwise, proceed with step B2.
	
	\item Let $\gamma_{c_0t}$ be the circular arc centered at $b_0$ and emanating counter-clockwise from $t$ to $c_0$.
	Let $\sigma_{b_0c_0t}$ be the circular sector bounded by $b_0c_0$, $b_0t$, and $\gamma_{c_0t}$.
	Check if $\sigma_{b_0c_0t}$ intersects any obstacle. 
	If it does, then there is no feasible articulated trajectory whose $ab$ passes through $u$, and $bt$ passes through $v$.
	Otherwise, set $r_\alpha = |b_0c_0|$, and proceed with step B3.
	
	\item Let $b'$ denote the position of $b$ when $\angle cbt = \pi/2$ radians.
	Find the closest point $b'' \in b'b_0$ to $b_0$ such that one of the following happens:
	i) either $a''u$ or $ub''$ passes through an obstacle endpoint, 
	ii) circular arc $\gamma_{c''t}$, centered at $b''$ and emanating counter-clockwise from $t$ to $c''$, intersects an obstacle endpoint or is tangent to an obstacle line segment, 
	iii) $b''c''$ passes through an obstacle endpoint, or 
	iv) $b''c''$ intersects an obstacle line segment at $c''$, 
	where $a''$ and $c''$ are the positions of $a$ and $c$, respectively, when $b = b''$.
	Set $r_\beta = |b''c''|$.
\end{enumerate}

Each of these steps can be resolved by simply checking against each of the $O(n)$ obstacles, resulting in an $O(n^3)$-time algorithm for finding $[r_\alpha, r_\beta]$ for all pairs of obstacle vertices $u,v \in V$.
This can be improved upon by using efficient query data structures (see Table \ref{tab2} for a summary), whose discussion is deferred to Section \ref{query_B} below.
Recall that $O(n^2)$ queries are to be answered in the worst case.
Hence, a total of $O(n^{5/2})$ time is required to find $[r_\alpha, r_\beta]$ for all pairs of obstacle endpoints $u, v \in V$.

\renewcommand{\arraystretch}{1.5}
\begin{table}
	\footnotesize
	\centering
	\caption{Summary of query data structures used in steps B1-B3.
		The size, preprocessing time, and query time of a data structure are denoted by $S(n)$, $P(n)$, and $Q(n)$, respectively.
		Our results are highlighted in gray.}
	\vspace*{3mm}
	\label{tab2}
	\begin{tabular}{ l >{\raggedright\arraybackslash} p{4.8cm} l l l }
		\hline
		\textbf{Step} & \textbf{Query} & \boldmath$S(n)$ & \boldmath$P(n)$ & \boldmath$Q(n)$ \\
		\hline
		
		B1 & Ray shooting queries \cite{pocchiola90graphics} &
		$O(n^2)$ & $O(n^2)$ & $O(\log n)$ \\
		
		\rowcolor{Gray}
		B2 & Subproblem \ref{subprob1}: Circular sector emptiness queries &
		$O(n^{2+\epsilon})$ & $O(n^{2+\epsilon})$ & $O(\log n)$ \\
		
		\rowcolor{Gray}
		B3 (i) & Subproblem \ref{subprob5} &
		$O(n^2)$ & $O(n^{2+\epsilon})$ & $O(n^{1/2})$ \\
		
		\rowcolor{Gray}
		B3 (ii), (iii), (iv) & Subproblem \ref{subprob6} &
		$O(n^2)$ & $O(n^2 \log n)$ & $O(\log n)$ \\ 
		
		\hline
	\end{tabular}
\end{table}

\paragraph{Multiple disjoint feasible ranges of \boldmath$r$ in Scenario B}
We now consider the following situation.
Suppose that $r_\beta$ corresponds to the occurrence of either case (ii), (iii), or (iv) in step B3.
Then, $r_\beta < r \leq r_{\pi/2}$ is certainly infeasible (see the proof of Lemma \ref{lem1}), where $r_{\pi/2}$ denotes the length of line segment $bc$ when $\angle bct = \pi/2$ radians.
On the other hand, suppose that $r_\beta$ is obtained as a result of case (i) in step B3 -- that is, $ab$ passes through an endpoint $p$ of an obstacle line segment $s$ (Figure \ref{fig_sit}).
Let $q$ denote the other endpoint of $s$.
In this case, if we continue to increase $r$ beyond $r_\beta$ (by moving $b$ further away from $b''$ and toward $b'$), the trajectory may become feasible again as line segment $ab$ passes through $q$ of $s$.
Thus, this phenomenon may lead to multiple disjoint maximal contiguous feasible subsets of $[r_\alpha, r_{\pi/2}]$.

\begin{figure}[h]
	\centering
	\includegraphics[scale=0.16]{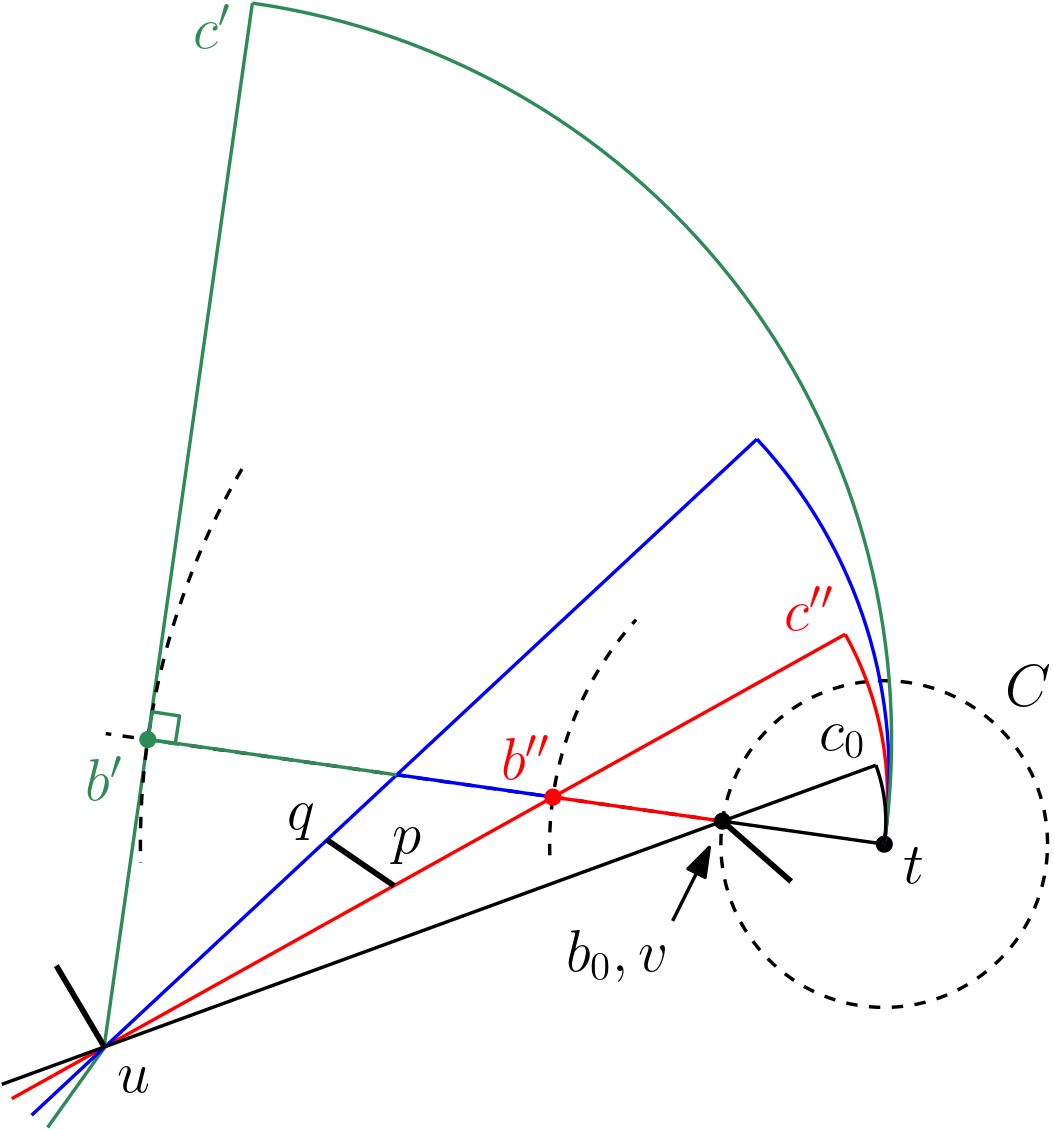}
	\caption{Disjoint contiguous feasible subsets of $[r_\alpha, r_{\pi/2}]$, where $r_\alpha = |b_0c_0|$ and $r_{\pi/2} = |b'c'|$.}
	\label{fig_sit}
\end{figure}

This situation, as it turns out, can be easily handled as follows.
First, notice that we have encountered, in the procedure described in Section \ref{sec:min_r} (specifically, step A6), the extremal feasible articulated trajectory whose $ab$ passes through $u$ and $q$, and $bc$ passes through $v$.
Hence, in order to continue characterizing the feasible range of $r$ in this case, we perform A6s as an additional step after A6.
Note that step A6s is the same as B3, but with $b_0$ replaced by the point of intersection between $b'b''$ and the supporting line of $uq$.
The range of $r$ associated with line segment $b''b^\#$ found in step A6s is a maximal contiguous feasible $r$-interval (for the pair $(u, w)$ defined in step A6s).
In Step B1, the case of $a_0b_0$ intersecting an obstacle line segment is similar to what has just been described, and thus can be resolved likewise. \\

We now give an upper bound on the total number of feasible contiguous ranges of $r$ (either disjoint or overlapping) computed in the procedure just described for finding the feasible domain of $r$.
In steps A1-A7 (including A6s), we find at most two feasible contiguous ranges of $r$ for each pair of points in $P$.
In steps B1-B3, we find at most one feasible contiguous range of $r$ for every pair of points in $P$.
So, the total number of feasible contiguous ranges of $r$, computed over all pairs of points in $P$ using the procedure, must be bounded by $O(n^2)$.

All in all, after considering the time and space complexities of each routine involved, we reach the following conclusion.

\begin{thm}
	All values of $r$ for which at least one feasible trajectory exists can be determined in $O(n^{5/2})$ time using $O(n^{2+\epsilon})$ space, for any constant $\epsilon > 0$.
\end{thm}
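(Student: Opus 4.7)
\smallskip
\noindent\textbf{Proof proposal.}
My plan is to assemble the procedures already developed in Sections~\ref{sec:min_r} and~\ref{sec:all_r} into a single enumeration over pairs $(u,v) \in V \times V$ of obstacle endpoints. For each such ordered pair, I would execute two independent sub-procedures: first, the sequence A1--A7 (including the side step A6s) that, by Lemma~\ref{lem1} and Corollary~\ref{cor}, produces the maximal contiguous feasible $r$-interval $[r_{b^\ast},r_{b_1}]$ associated with Scenario~A (two endpoints on $ab$); second, the sequence B1--B3 that produces the maximal contiguous feasible $r$-interval $[r_\alpha,r_\beta]$ associated with Scenario~B (one endpoint on $ab$, one endpoint on $bt$). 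Before the main loop I would build, once, all of the query data structures listed in Tables~\ref{tab1} and~\ref{tab2}: the circular sector emptiness structure of Subproblem~\ref{subprob1}, the radius intersection structure of Subproblem~\ref{subprob2}, the $O(\log n)$ ray-shooting structure of \cite{pocchiola90graphics}, the radius and arc shooting structures of Subproblems~\ref{subprob3} and~\ref{subprob4}, and the B3 structures of Subproblems~\ref{subprob5}--\ref{subprob6}. The output is the union of all intervals collected over the $O(n^2)$ pairs.

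For correctness, I would argue that every feasible $r$ lies in some interval produced by this enumeration. By the extremality argument of \cite[Lemma~2.1]{teo20traj}, any feasible trajectory can be perturbed into one whose $ab$ passes through two obstacle endpoints, or whose $ab$ passes through one endpoint while $bt$ passes through another; these are exactly Scenarios~A and~B. Lemma~\ref{lem1} then tells me the interval boundaries produced by A1--A7 and B1--B3 coincide with the endpoints of such maximal feasible $r$-subintervals, and the side step A6s exactly catches the case flagged in the remark after B3 where sweeping $r$ causes $ab$ to cross one endpoint of an obstacle segment and pick up the other endpoint, giving rise to \emph{disjoint} feasible subintervals in Scenario~B. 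Since I run A6s for every pair $(u,v)$ in Scenario~A (which includes the pair where $ab$ is tangent to the two obstacle endpoints bounding each such gap), every disjoint subinterval in Scenario~B is either already captured by some $[r_\alpha,r_\beta]$ or produced by the $[r_{b''},r_{b^\#}]$ interval from A6s. The bound on the total number of intervals reported is $O(n^2)$, as the text observes.

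For the complexity I would charge the cost as preprocessing plus query work. Preprocessing all structures in Tables~\ref{tab1} and~\ref{tab2} is dominated by the circular sector emptiness structure and the arc shooting structure, each requiring $O(n^{2+\epsilon})$ time and space. During the main loop I perform $O(n^2)$ query sequences. The per-query cost is polylogarithmic everywhere except for step B3(i), which takes $O(n^{1/2})$ time using the $O(n^2)$-space structure of Subproblem~\ref{subprob5}. Hence the total query cost is
\begin{equation*}
O\!\left(n^2 \cdot n^{1/2}\right) + O\!\left(n^2 \cdot \log^{O(1)} n\right) = O(n^{5/2}),
\end{equation*}
and this subsumes the $O(n^{2+\epsilon})$ preprocessing cost for any fixed $\epsilon < 1/2$, giving the claimed bounds.

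The main obstacle, and the step I would think about most carefully, is not the headline complexity but the bookkeeping that ties A6s to the disjoint-interval structure of Scenario~B. One has to argue that \emph{every} gap in a Scenario~B interval for a pair $(u,v)$ arising from $ab$ sweeping across an obstacle segment $s$ is in fact detected by running A6s on the pair consisting of $u$ and the endpoint of $s$ encountered during the sweep; this is a pairing/duality argument between Scenarios~A and~B, and getting the orientation conventions right (which endpoint plays the role of $v$ in A6s, which side of the sweep is being searched, the correct reversal of rays when $b$ crosses $b_0$) is what I would need to verify in detail. Once this pairing is established, correctness of the $O(n^2)$ interval bound and of the union follows, and the time and space bounds come directly from summing over Tables~\ref{tab1} and~\ref{tab2}.
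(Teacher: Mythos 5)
Your proposal follows essentially the same route as the paper: enumerate the $O(n^2)$ endpoint pairs, run A1--A7 (with A6s) for Scenario~A and B1--B3 for Scenario~B using the data structures of Tables~\ref{tab1} and~\ref{tab2}, handle the disjoint Scenario~B subintervals via the A6s side step, and observe that the $O(n^{1/2})$ query of step B3(i) dominates, giving $O(n^2 \cdot n^{1/2}) = O(n^{5/2})$ total time within $O(n^{2+\epsilon})$ space. The correctness argument via the extremality of \cite[Lemma~2.1]{teo20traj} and Lemma~\ref{lem1}, and the $O(n^2)$ bound on the number of reported intervals, also match the paper's reasoning.
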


Note that the procedure described in this section also gives an implicit characterization of the ``extremal'' feasible articulated trajectory space for all feasible values of $r$.

\subsection{Query problems in steps B1-B3}
\label{query_B}

\paragraph{Step B1 (ray shooting queries)}
Refer to \cite{pocchiola90graphics}.

\paragraph{Step B2 (circular sector emptiness queries)}
See Subproblem \ref{subprob1} in Section \ref{query_A}.

\paragraph{Step B3}
For case (i) of step B3, we have the following query problem.

\begin{subprob}
	\label{subprob5}
	Given a fixed origin $t$, a fixed point $u$, and a set $P$ of $n$ point, let $L$ be the line passing through $t$.
	Let $ub$ be the perpendicular line segment dropped from $u$ to $L$, where $b \in L$.
	Line $L$ divides the plane into two half-planes $H^+$ and $H^-$.
	Without loss of generality, assume that $u$ is located in $H^+$.
	For a query point $v \in bt$, let $m_v$ be the slope of the supporting line of $uv$.
	Let $q$ denote a point in $P \cap H^+$, and $m_q$ be the slope of the supporting line of $uq$.
	Preprocess $t$, $u$, and $P$ so that, given a point $v \in bt$ at query time, one can efficiently find the point $q \in P \cap H^+$ that has the smallest slope $m_q$ greater than $m_v$.
\end{subprob}

As with Subproblem \ref{subprob3}, we can construct a data structure based on the half-space decomposition scheme as described by Matou{\v{s}}ek in \cite[Theorem 5.1]{matouvsek93range}.
The only difference is that, for each inner subset, we store the points $p$ in the sorted order according to their slopes $m_p$, where $m_p$ is the slope of the supporting line of $up$, so that a binary search query can be performed in $O(\log n)$ time to find the point in the inner subset that has the smallest slope greater than $m_v$.
In the end, we obtain a $O(\log^2 n)$ query-time data structure of size $O(n^2 / \log^2 n)$, and it can be built in $O(n^2 / \log^2 n)$ time.

We now consider creating a trade-off between space and time usage by our data structure by following the strategy outlined in \cite[Theorem 6.2]{matouvsek93range}.
For any $n \leq m \leq n^2$, there exists a half-space decomposition scheme with $O(m)$ space, $O(n^{1+\epsilon} + m \log^{\epsilon} n)$ preprocessing time, and $O(n/m^{1/2})$ query time.
Thus, we can set $m = n$, and obtain a query data structure with $O(n)$ space, $O(n^{1+\epsilon})$ preprocessing time, and $O(n^{1/2})$ query time.

\begin{lem}
	In Subproblem \ref{subprob5}, for a fixed origin $t$ and a fixed point $u$, a set $P$ of $n$ points can be preprocessed in $O(n^{1+\epsilon})$ time into a data structure of size $O(n)$ so that, given  a query point $v \in bt$, one can determine the point $q \in P \cap H^+$ that has the smallest slope $m_q$ greater than $m_v$ in $O(n^{1/2})$ time.
\end{lem}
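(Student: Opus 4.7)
The plan is to adapt the approach already used for Subproblem 3 (step A5) by replacing the angular-from-$t$ order with a slope-from-$u$ order, and then to select an operating point on Matou{\v{s}}ek's space-time trade-off curve that gives linear space at the price of slightly higher query time.

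First, I would instantiate Matou{\v{s}}ek's half-space decomposition \cite[Theorem 6.2]{matouvsek93range} with parameter $m = n$, producing an $O(n)$-space data structure $\mathcal{D}$ with $O(n^{1+\epsilon})$ preprocessing time and $O(n/m^{1/2}) = O(n^{1/2})$ per-half-space query time. Given a query half-plane, $\mathcal{D}$ returns a collection of canonical inner subsets whose disjoint union equals $P \cap H^+$ minus at most one remainder subset of size $O(n^{1/2})$ that is reported explicitly. For each canonical subset, as a one-time auxiliary step I would precompute a sorted array of its members ordered by the slope $m_p$ of the line through $u$ and $p$; the total sorting cost is $O(n \log n)$ per level of the hierarchy and hence is absorbed into the $O(n^{1+\epsilon})$ preprocessing bound.

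To answer a query with input $v$, I would first extract the relevant half-plane $H^+$ from $v$ using the fixed geometric relations ($b$ is the foot of perpendicular from $u$ to $L$, $v \in bt$, and $u$ lies in $H^+$), and then query $\mathcal{D}$. For each returned canonical subset I would binary-search its slope-sorted array in $O(\log n)$ time for the smallest $m_p > m_v$; for the remainder subset I would scan its $O(n^{1/2})$ points directly. The reported $q$ is the minimizer over the candidates produced by all subsets. The dominant costs are the half-space query and the remainder scan, each $O(n^{1/2})$, while the binary searches together contribute at most $O(\log^2 n)$, which is subsumed.

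The main subtlety I foresee is confirming that the canonical subsets produced by Theorem 6.2 still admit the per-subset slope-sorted auxiliary structure without inflating preprocessing beyond $O(n^{1+\epsilon})$, and that taking the minimum of the per-subset local answers correctly yields the global smallest $m_q > m_v$. The first point follows because each canonical subset is fixed once $\mathcal{D}$ is built, so sorting its members by $m_p$ is a one-time $O(|S| \log |S|)$ cost per subset, summing to $O(n \log n)$ per collection and hence $O(n^{1+\epsilon})$ overall. The second holds because $m_p$ is defined identically (as the slope of $up$) in every subset, so local minima are directly comparable. Together with the trade-off bounds, this establishes the claimed $O(n)$-space, $O(n^{1+\epsilon})$-preprocessing, $O(n^{1/2})$-query complexity.
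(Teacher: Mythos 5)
Your proposal matches the paper's own proof essentially verbatim: both adapt the Subproblem~3 data structure by re-sorting each canonical subset according to the slope of $up$ rather than the angle from $t$, and both then invoke Matou\v{s}ek's space--time trade-off (Theorem~6.2 of \cite{matouvsek93range}) with $m = n$ to obtain $O(n)$ space, $O(n^{1+\epsilon})$ preprocessing, and $O(n^{1/2})$ query time. The argument is correct and no further comparison is needed.
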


In step B3, cases (ii), (iii), and (iv) can be addressed as follows.

\begin{subprob}
	\label{subprob6}
	Given a fixed origin $t$ and a fixed point $u$, for a query line $L$ that passes through $t$, let $ub$ be the perpendicular line segment dropped from $u$ to line $L$, where $b \in L$ (Figure \ref{fig_subprob6}).
	Note that $L$ divides the plane into two half-planes $H^+$ and $H^-$.
	Without loss of generality, assume that $u$ is located in $H^+$.
	Let $b^*$ be a point on $bt$.
	Let $\gamma_{c*t}$ denote the acute circular arc centered at $b^*$ and emanating from $t$ to $c^*$, where $c^*$ is the point of intersection, in $H^-$, between the supporting line of $ub^*$ and the circle of radius $|b^*t|$ centered at $b^*$.
	Let $\sigma_{b^*c^*t}$ be the circular sector bounded by $b^*c^*$, $b^*t$, and $\gamma_{c*_t}$.
	For a fixed origin $t$, a fixed point $u$, and a given set $P$ of $n$ line segment, preprocess them so that, given a line $L$ passing through $t$ at query time, one can efficiently find the point $b^*$ farthest from $t$ on $bt$ such that $\sigma_{b^*c^*t}$ is free of $P$.
\end{subprob}

\begin{figure}[h]
	\centering
	\includegraphics[scale=0.16]{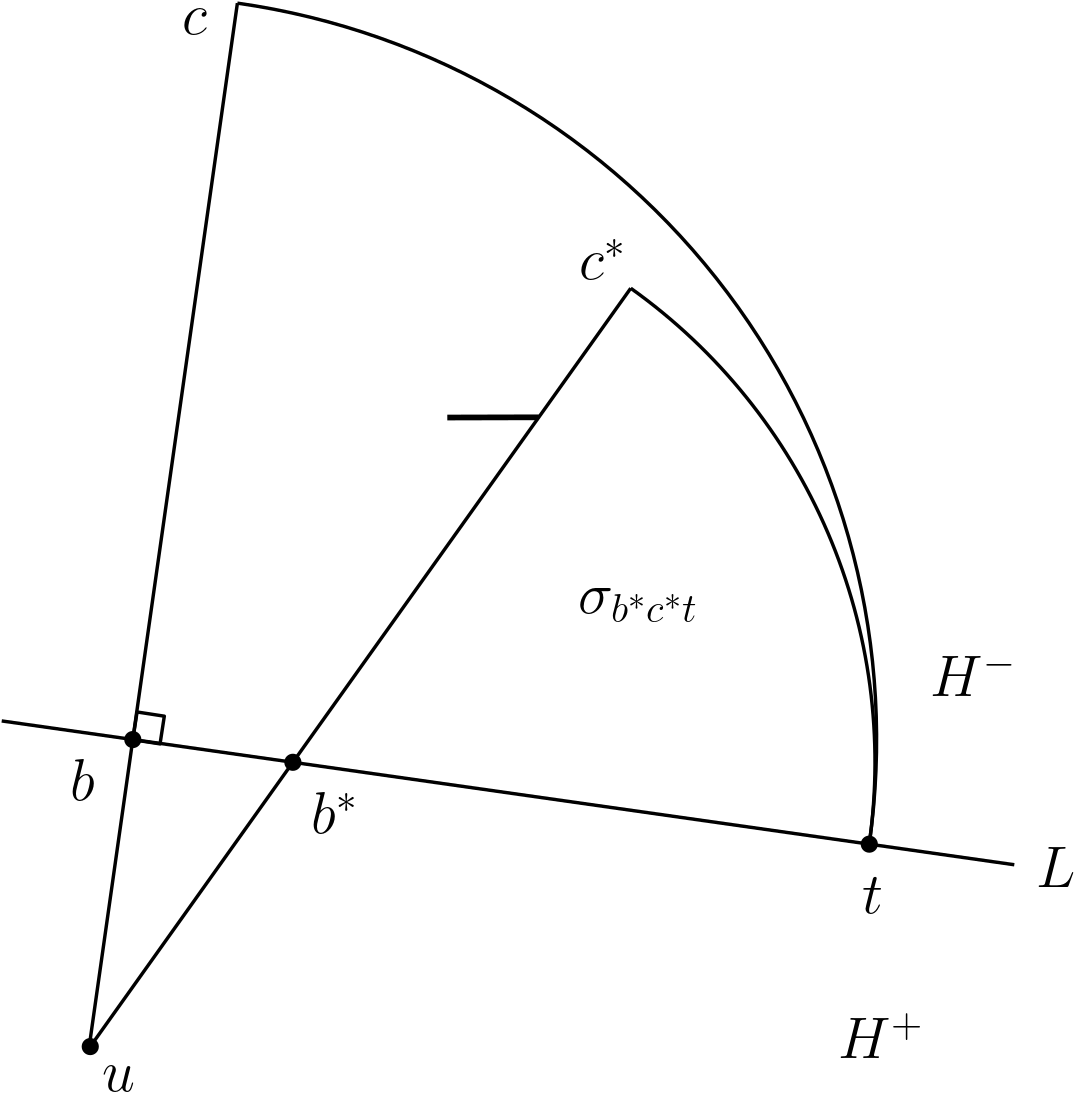}
	\caption{Illustration of the farthest point $b^*$ from $t$ on $bt$ such that circular sector $\sigma_{b^*c^*t}$ is free of line segments.}
	\label{fig_subprob6}
\end{figure}

The returned solution of a query in Subproblem \ref{subprob6} (i.e., point $b^*$) may be associated with one of the following three intersection cases --
1) $\gamma_{c^*t}$ intersects an obstacle endpoint or is tangent to obstacle line segment,
2) $b^*c^*$ passes through an obstacle endpoint, or
3) $b^*c^*$ intersects an obstacle line segment at $c^*$.

In each of these cases, we can define a collection $C$ of $n$ univariate partial functions, and construct the minimization diagram $M$ of $C$.
Using lower envelope $M$, we can find the closest point $b' \in bt$ to $t$ for which the intersection associated with the case occurs (e.g., in case 1, the point $b' \in bt$ closest to $t$ such that the corresponding circular arc $\gamma_{c't}$ intersects an obstacle endpoint or is tangent to obstacle line segment).
Finally, $b^*$ is the closest point $b' \in bt$ to $t$ of all three cases.

We can construct the lower envelope in each of the three cases as follows.
The expression for a query line $L$ passing through the origin $t$ is given by $y = mx$, where $m$ is the slope of $L$.
Let $\gamma_{ct}$ denote the quarter circular arc centered at $b$ and emanating from $t$ to $c$, where $c$ is the point of intersection, in $H^-$, between the supporting line of $ub$ and the circle of radius $|bt|$ centered at $b$.
Let $\sigma_{bct}$ be the quarter circular sector bounded by $bc$, $bt$, and $\gamma_{ct}$.
Note that $\sigma_{bct}$ is uniquely determined by $m$.

\subparagraph{Case 1.}
For a line segment $s \in P$ located within $\sigma_{bct}$, there exists a circle $D_p$ that i) passes through $t$, ii) is tangent to $s$ at a point $p$, and iii) is centered at a point $b'$ on $bt$.
Note that at most three circles $D_p$ are defined for $s$, since such a circle $D_p$ may pass through one of the two endpoints of $s$, or may be tangent to $s$ at an interior point of $s$.
We define $f_p(m)$ to be the distance from $b'$ to $t$.
In addition, $f_p(m)$ is only defined when $p$ is located on $\gamma_{c't}$, which is the circular arc of $D_p$ emanating from $t$ to $c'$, where $c'$ is the point of intersection, in $H^-$, between $D_p$ and the supporting line of $ub'$.
Note that $f_p$ is partially defined as a function of $m$.
Moreover, given any two distinct points $p$ and $q$, $f_p(m)$ = $f_q(m)$ occurs for at most one value of $m$ (due to the fact that three points – $p$, $q$, and $t$ – define a unique circle).
Let $M$ be the lower envelope of $f_p(m)$ for all line segments $s \in P$.

\subparagraph{Case 2.}
For a line segment $s \in P$ located within $\sigma_{bct}$, let $b'$ be the point of intersection between $bt$ and the supporting line of $up$, where $p$ is an endpoint of $s$.
We define $f_p(m)$ as the distance from $b'$ to $t$.
Furthermore, $f_p(m)$ is only defined when $|b'p| \leq |b't|$.
Thus, $f_p$ is a partially defined function of $m$.
For any two distinct endpoints $p$ and $q$, the corresponding curves $f_p(m)$ and $f_q(m)$ intersects at most once (since a line is defined by two points).
Let $M$ denote the lower envelope of $f_p(m)$ for all $s \in P$.

\subparagraph{Case 3.}
For a line segment $s \in P$ located within $\sigma_{bct}$, let $b' \in bt$ be the center of a circle $D$ passing through $t$, such that circle $D$ and the supporting line of $ub'$ intersect at a point on $s$.
As before, we define $f_s(m)$ as the distance from $b'$ to $t$.
Function $f_s$ is partially defined over $m$.
For any two disjoint line segments $s_i$ and $s_j$, the corresponding curves $f_{s_i}(m)$ and $f_{s_j}(m)$ only intersect in at most one point (since $s_i$ and $s_j$ may only intersect at their endpoints).
Let $M$ be the lower envelope of $f_s(m)$ for all $s \in P$. \\

In each case above, given the properties that each pair of the partially defined functions only intersect at most once, the lower envelope $M$ can be constructed in $O(n \log n)$ time using $O(n)$ space \cite{sharir95dav}.
Given a query line $L$ (of a slope $m$) passing through $t$, we can perform, in each case, a binary search on the respective lower envelope $M$ in $O(\log n)$ time, and find the point $b' \in bt$ closest to $t$ such that the intersection associated with the case occurs.
We can now conclude the following.

\begin{lem}
	In Subproblem \ref{subprob6}, for a fixed origin $t$ and a fixed point $u$, a set of $P$ of $n$ line segments can be preprocessed in $O(n \log n)$ time into a data structure of size $O(n)$ so that, for a query line $L$ that passes through $t$, one can find, in $O(\log n)$ time, the point $b^* \in bt$ farthest from $t$ such that $\sigma_{b^*c^*t}$ is free of $P$.
\end{lem}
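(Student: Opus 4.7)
The plan is to reduce the query to three separate one-dimensional lower envelope lookups, one for each geometric event that can cause a point $b'\in bt$ to be the farthest feasible center. Since a query line $L$ through the fixed origin $t$ is parameterized by a single real number $m$ (its slope), and $u$ is fixed, the pair $(u,b)$ and therefore the quarter sector $\sigma_{bct}$ depends only on $m$. For each line segment $s\in P$ I would express the smallest value of $|b't|$ (with $b'\in bt$) at which $s$ causes an obstruction as a partial univariate function $f_s(m)$, and build the lower envelope of these functions.

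The steps, in order, are the following. First, split the space of possible obstructions into three cases exactly as in the statement preceding the lemma: (1) $\gamma_{c't}$ is tangent to $s$ or passes through an endpoint of $s$; (2) the radius $b'c'$ passes through an endpoint of $s$; and (3) $b'c'$ meets $s$ at $c'$ (i.e., $s$ crosses the supporting line of $ub'$ inside the circle). Second, for each case and each $s$, characterize the center $b'\in bt$ whose circle through $t$ witnesses the event: in case (1) it is the center of a circle through $t$ tangent to $s$ at a point $p$ (at most three such $b'$ per segment, one per endpoint and one for the interior tangency); in case (2) it is the intersection of $bt$ with the line through $u$ and an endpoint $p$ of $s$; in case (3) it is determined by the condition that the circle through $t$ centered on $bt$ meets the line through $u$ and $b'$ on $s$. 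Define $f_\bullet(m)=|b't|$ restricted to those $m$ for which the witnessing point actually lies on the relevant arc/radius of $\sigma_{bct}$. Third, verify that in each case any two partial functions arising from distinct witnesses cross at most once: in case (1), two points $p,q$ together with $t$ determine a unique circle, so $f_p(m)=f_q(m)$ forces a specific $m$; in case (2), two distinct endpoints determine a unique line through $u$ (only if collinear with $u$) and hence at most one coincidence; in case (3), two disjoint segments $s_i,s_j$ can yield a common circle through $t$ meeting them both on the supporting line of $ub'$ for at most one $m$. Fourth, invoke the Davenport--Schinzel bound of order 2 to conclude that each of the three lower envelopes has $O(n)$ complexity and is built in $O(n\log n)$ time and $O(n)$ space \cite{sharir95dav}. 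Fifth, given a query $L$ with slope $m$, do a binary search in each lower envelope in $O(\log n)$ time, and return the minimum $b'$-to-$t$ distance (closest $b'$ to $t$) over the three cases; the point $b^*$ is then the one immediately preceding this first obstruction, which gives the farthest obstruction-free $b^*\in bt$.

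The main obstacle I anticipate is the case analysis underlying the single-intersection (Davenport--Schinzel order~2) property, and in particular case~(1), where $f_p$ is not simply the distance to a fixed point but the radius of a circle through $t$ tangent to a segment at a moving tangency point $p$; I would need to argue carefully, using the fact that three points determine a circle and that the tangency condition is a degree-constant algebraic constraint, that the graphs of $f_p$ and $f_q$ meet in at most one point over the region where both are defined. Once that pairwise bound is in place, the remaining steps (construction of the three lower envelopes, binary search at query time, taking the minimum) are routine and yield the claimed $O(n\log n)$ preprocessing, $O(n)$ space, and $O(\log n)$ query bounds.
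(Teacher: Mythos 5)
Your proposal follows essentially the same route as the paper: the same three-case decomposition of obstruction events, the same univariate partial functions $f(m)=|b't|$ of the query slope, the same pairwise single-crossing arguments yielding order-2 Davenport--Schinzel lower envelopes built in $O(n\log n)$ time and $O(n)$ space, and the same $O(\log n)$ binary-search query taking the closest obstruction over the three envelopes. The single-crossing justification for case (1) that you flag as a concern is handled in the paper exactly as you suggest, via the fact that the two witness points together with $t$ determine a unique circle.
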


\paragraph{Final note on step B3} 
Recall that $V$ denotes the set of obstacle endpoints of $P$.
For each point $p \in V$, we construct the two data structures $\Gamma_p$ and $\Pi_p$ required for solving Subproblems \ref{subprob5} and \ref{subprob6}, respectively, by using $p$ as the fixed point $u$ in the subproblems.
Thus, the total preprocessing times required to compute $\Gamma_p$ and $\Pi_p$ for all points $p \in V$ are bounded by $O(n^{2+\epsilon})$ and $O(n^2 \log n)$, respectively, and the space usage is $O(n^2)$.
In step B3, for a given pair of points $u, v \in V$, we use the data structure $\Gamma_u$ built for point $u$ to check for case (i) in $O(n^{1/2})$ time, and $\Pi_u$ for cases (ii), (iii), and (iv) in $O(\log n)$ time.

Note that the point $q$ returned by the query data structure $\Gamma_u$ is only considered valid if the supporting line of $qu$ intersects $bt$.
Let $b_\text{i}$ denote the point of intersection between $qu$ and $bt$.
In addition, let $b_\text{ii,iii,iv}$ denote the point $b^*$ returned by $\Pi_u$.
Point $b''$, as defined in step B3, is the closest of $b_\text{i}$ and $b_\text{ii,iii,iv}$ to $b_0$.
Since the query time of case (i) is dominant over those of cases (ii), (iii), and (iv), we conclude that step B3 takes $O(n^{1/2})$ time.

\section{Characterizing feasible trajectory space}
\label{sec:feas_space}

In this section, we describe our solution to Problem \ref{prob_traj_space}.
We begin by explicitly characterizing the following for a given length $r$:
i) the final configuration space,
ii) the \emph{forbidden} final configuration space, and
iii) the \emph{infeasible} final configuration space.

\subsection{Final configuration space}
In a final configuration of the articulated probe, point $a$ can be assumed to be on $S$, and point $b$ lies on the circle $C$ of radius $r$ centered at $t$ (Figure \ref{fig_traj}).
Let $\theta_S$ and $\theta_C$ be the angles of line segments $ta$ and $tb$ measured counter-clockwise from the $x$-axis, where $\theta_S, \theta_C \in [0, 2\pi)$.
Since $bc$ may rotate around $b$ as far as $\pi/2$ radians in either direction, for any given $\theta_S$, we have $\theta_C \in [\theta_S - \cos^{-1} r/R, \theta_S + \cos^{-1} r/R]$.
We call this the \emph{unforbidden} range of $\theta_C$.
A final configuration of the articulated probe can be specified by ($\theta_S$, $\theta_C$), depending on the locations of points $a$ and $b$ on circles $S$ and $C$, respectively (Figure \ref{fig_fin}).
The final configuration space $\Sigma_{fin}$ of the 
probe can be computed in $O(1)$ time.

\begin{figure}
	\centering
	\includegraphics[scale=0.135]{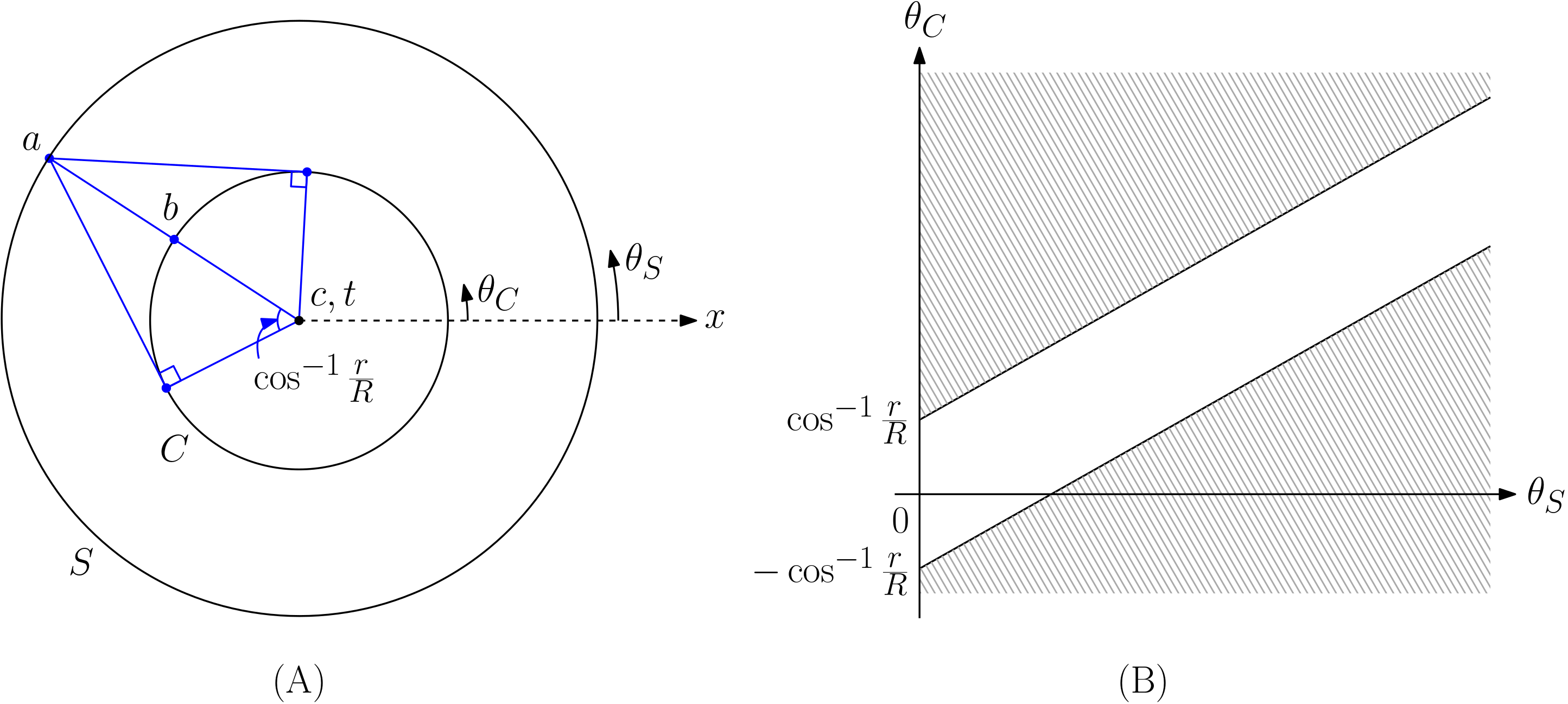}
	\caption{Final configurations of the articulated probe.
		(A) Each value of $\theta_S$ is associated with an unforbidden range of $\theta_C$ spanning from $\theta_S - \cos^{-1} r/R$ to $\theta_S + \cos^{-1} r/R$.
		(B) The unshaded region of the $(\theta_S, \theta_C)$-plot represents the unforbidden final configuration space when $S$ is obstacle-free.}
	\label{fig_fin}
\end{figure}

\subsection{Forbidden final configuration space}
\label{subsec:forb_space}
A final configuration is called \emph{forbidden} if the final configuration (represented by $ab$ and $bt$) intersects one or more of the obstacle line segments.
Let $s$ be an obstacle line segment of $P$.
We have two different cases, depending on whether $s$ is located 1) outside or 2) inside $C$.

\paragraph{Case 1. Obstacle line segment \boldmath$s$ outside \boldmath$C$}
Let angles $\theta_i$, where $i = 1, \dots, 6$, be defined in the manner depicted in Figure \ref{fig_forb_out}A.
Briefly, each $\theta_i$ corresponds to an angle $\theta_S$ at which point a tangent line i) between $C$ and $s$ or ii) from $t$ to $s$, intersects $S$.
As $\theta_S$ increases from $\theta_1$ to $\theta_3$, the upper bound of the unforbidden range of $\theta_C$ decreases as a continuous function of $\theta_S$.
Similarly, when $\theta_S$ varies from $\theta_4$ to $\theta_6$, the lower bound of the unforbidden range of $\theta_C$ decreases as a continuous function of $\theta_S$.
For $\theta_3 \leq \theta_S \leq \theta_4$, there exists no unforbidden final configuration at any $\theta_C$ (Figure \ref{fig_forb_out}B).
For conciseness, the upper (resp. lower) bound of the unforbidden range of $\theta_C$ is referred to as the upper (resp. lower) bound of $\theta_C$ hereafter.

\begin{figure}
	\centering
	\includegraphics[scale=0.135]{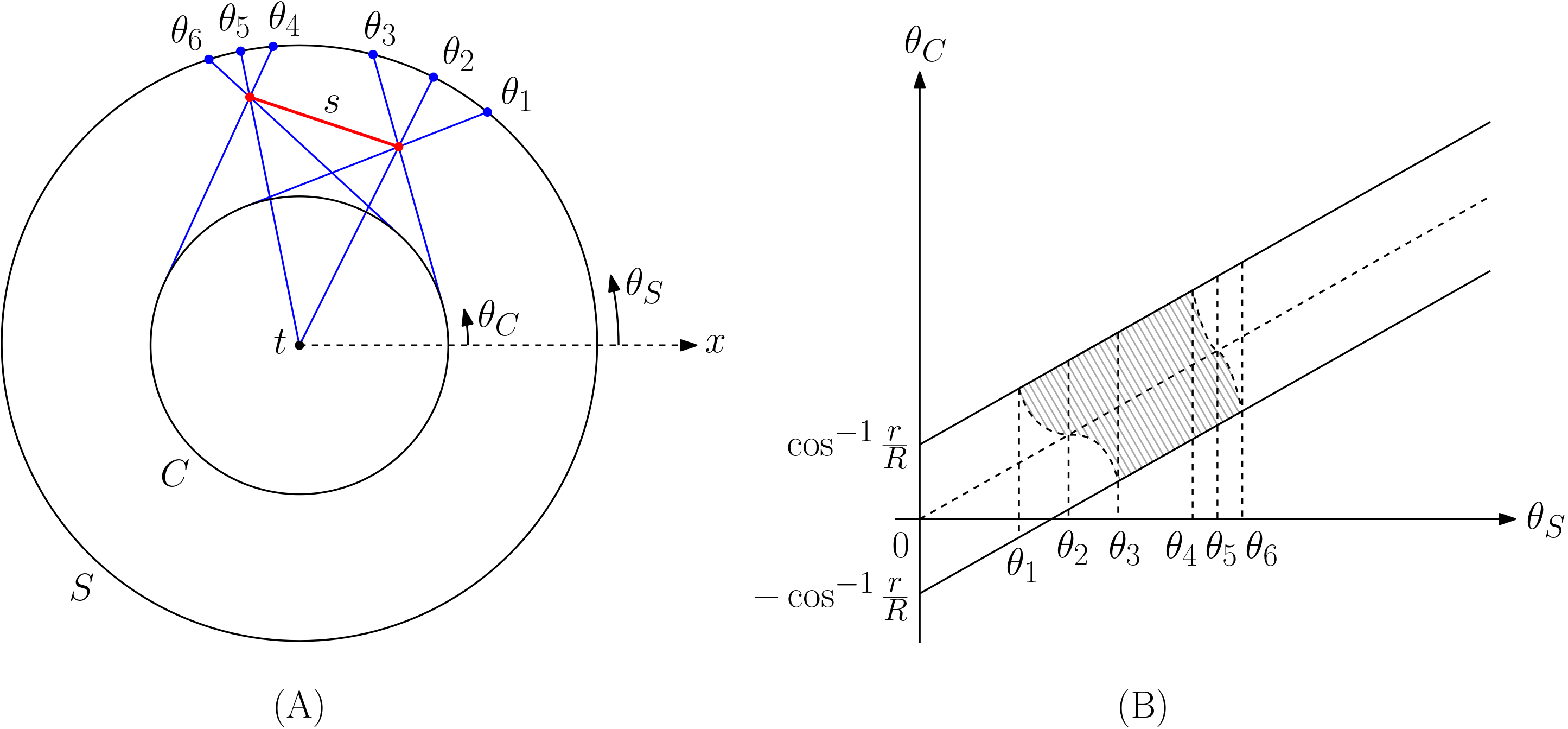}
	\caption{Forbidden final configurations due to an obstacle line segment $s$ outside $C$.}
	\label{fig_forb_out}
\end{figure}

\paragraph{Case 2. Obstacle line segment \boldmath$s$ inside \boldmath$C$}
We can similarly compute the forbidden final configuration space for an obstacle line segment $s$ inside $C$.
Note in Figure \ref{fig_forb_in}A that angles $\theta_i$, where $i = 1, \dots, 6$, are defined differently from case 1.
For $\theta_1 \leq \theta_S \leq \theta_4$, the upper bound of $\theta_C$ is equivalent to $\theta_2$.
For $\theta_3 \leq \theta_S \leq \theta_6$, the lower bound of $\theta_C$ equals to $\theta_5$ (Figure \ref{fig_forb_in}B). \\

\begin{figure}
	\centering
	\includegraphics[scale=0.135]{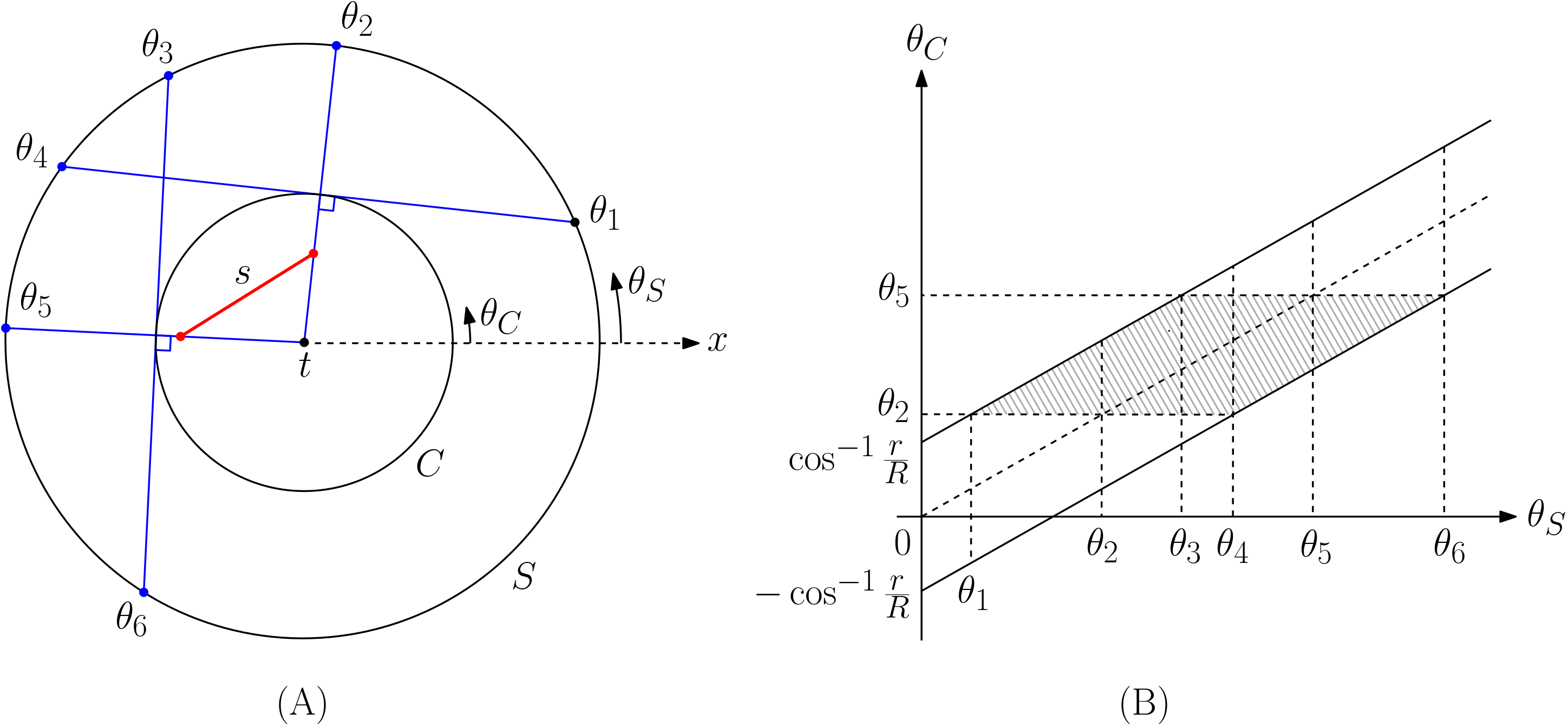}
	\caption{Forbidden final configurations due to an obstacle line segment $s$ inside $C$.}
	\label{fig_forb_in}
\end{figure}

We can find the forbidden final configuration space for an obstacle line segment
in $O(1)$ time.
Thus, for $n$ obstacle line segments, it takes $O(n)$ time to compute the corresponding set of forbidden final configurations.
The union of these configurations forms the forbidden final configuration space $\Sigma_{fin, forb}$ of the articulated probe.
The free final configuration space
of the articulated probe is
$\Sigma_{fin, free} = \Sigma_{fin} \setminus \Sigma_{fin, forb}$.

\subsection{Infeasible final configuration space}
\label{inf_fin_config}
The feasible trajectory space of the articulated probe can be characterized as a subset of $\Sigma_{fin, free}$.
A final configuration is called \emph{infeasible} if the circular sector associated with the final configuration (i.e., the area swept by line segment $bc$ of the probe to reach $t$) intersects any obstacle line segment.
We denote the infeasible final configuration space as $\Sigma_{fin, inf}$.

Let $C'$ be the circle centered at $t$ and of radius $\sqrt{2}r$.
A circular sector associated with a final configuration can only intersect an obstacle line segment lying inside $C'$.
Instead of characterizing the lower and upper bounds of $\theta_C$ as $\theta_S$ varies from 0 to $2\pi$ (as in Section \ref{subsec:forb_space}), here we perform the characterization the other way around.
For conciseness, we only present arguments for the negative half of the $\theta_S$-range, which is $[\theta_C - \cos^{-1} r/R, \theta_C]$; similar arguments apply to the other half due to symmetry.
We have two cases, depending on whether an obstacle line segment $s$ lies 1) inside $C$ or 2) outside $C$ and inside $C'$.

\paragraph{Case 1. Obstacle line segment \boldmath$s$ inside \boldmath$C$}
For brevity, the quarter circular sector associated with a point $b$ (i.e., the maximum possible area swept by line segment $bc$ of the probe to reach $t$), where the angle of $tb$ (relative to the $x$-axis) is $\theta_C$, is referred to as the \emph{quart-sector of $\theta_C$}.

\begin{figure}
	\centering
	\includegraphics[scale=0.135]{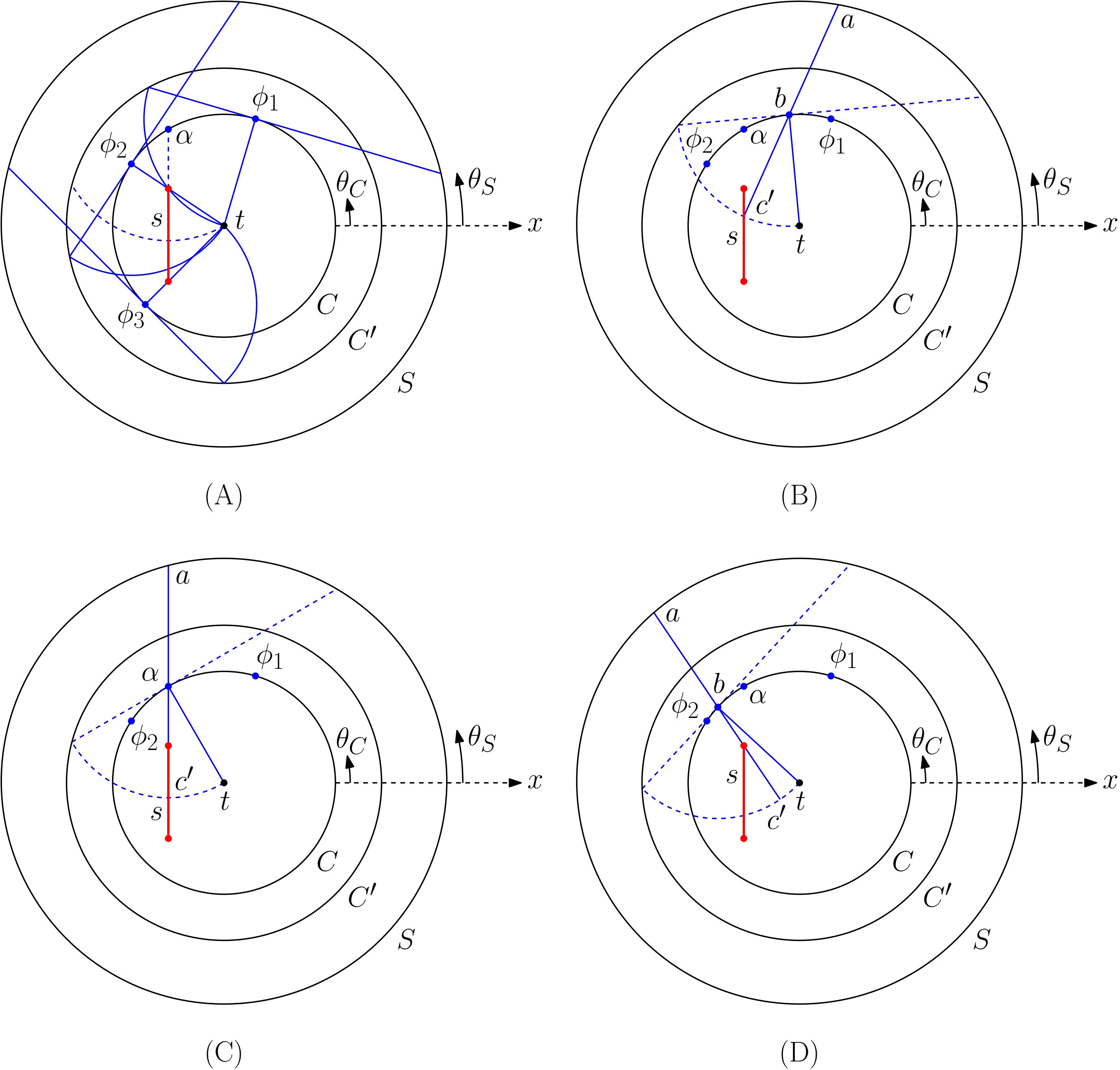}
	\caption{Infeasible final configurations due to an obstacle line segment $s$ inside $C$.
		Illustrations of $\theta_S$-lower bounds for (A) $\theta_C \in [\phi_1, \phi_2]$, (B) $\phi_1 < \theta_C < \alpha$, (C) $\theta_C = \alpha$, and (D) $\alpha < \theta_C < \phi_2$.}
	\label{fig_inf_in}
\end{figure}

We define $\phi_1$, $\phi_2$, and $\phi_3$ as follows (Figure \ref{fig_inf_in}A).
Angle $\phi_1$ is the smallest angle $\theta_C$ at which the circular arc of the quart-sector of $\theta_C$ intersects $s$ (at one of its endpoints or interior points).
Angle $\phi_2$ is the smallest angle $\theta_C$ at which $bt$ of the quart-sector of $\theta_C$ intersects $s$ (at one of its endpoints).
Angle $\phi_3$ is the largest angle $\theta_C$ at which $bt$ of the quart-sector of $\theta_C$ intersects $s$ (at one of its endpoints).
Observe that, as $\theta_C$ varies from 0 to $2\pi$, $\phi_1$ and $\phi_3$ are the angles $\theta_C$ at which the quart-sector of $\theta_C$ first and last intersects $s$, respectively.

We are only concerned with finding the lower bound of $\theta_S$ for $\theta_C \in [\phi_1, \phi_2]$, since the entire negative half of the $\theta_S$-range (i.e., $[\theta_C - \cos^{-1} r/R, \theta_C]$) is feasible for $\theta_C \in [0, \phi_1] \cup [\phi_3, 2\pi)$, and is infeasible for $\theta_C \in [\phi_2, \phi_3]$ due to intersection of $bt$ with $s$ (Figure \ref{fig_inf_in}A).

For $\theta_C \in [\phi_1, \phi_2]$, the lower bound of $\theta_S$ can be represented by a piecewise continuous curve, which consists of at most two pieces, corresponding to two intervals $[\phi_1, \alpha]$ and $[\alpha, \phi_2]$, where $\alpha$ is the angle $\theta_C$ of the intersection point between $C$ and the supporting line of $s$.
If $\phi_1 \leq \alpha$, then the curve has two pieces; otherwise, the curve is of one single piece.

For $\theta_C \in [\phi_1, \alpha]$, the lower bound of $\theta_S$ is indicated by the endpoint $a$ of line segment $abc'$,
where $c'$ is the intersection point between $s$ and the circular arc centered at $b$ (Figure \ref{fig_inf_in}B).
If no intersection occurs between $s$ and the circular arc, then the lower bound of $\theta_S$ is given by the endpoint $a$ of line segment $abc'$, where $bc'$ passes through an endpoint of $s$.

For $\theta_C \in [\alpha, \phi_2]$, the lower bound of $\theta_S$ is indicated by the endpoint $a$ of line segment $abc'$, where $bc'$ passes through an endpoint of $s$ (Figure \ref{fig_inf_in}D).
The lower bound of $\theta_S$ is equal to $\theta_C$ when $\theta_C = \phi_2$.
See Figure \ref{fig_inf_in_plot} for a sketch of the infeasible final configuration space.

\begin{figure}
	\centering
	\includegraphics[scale=0.135]{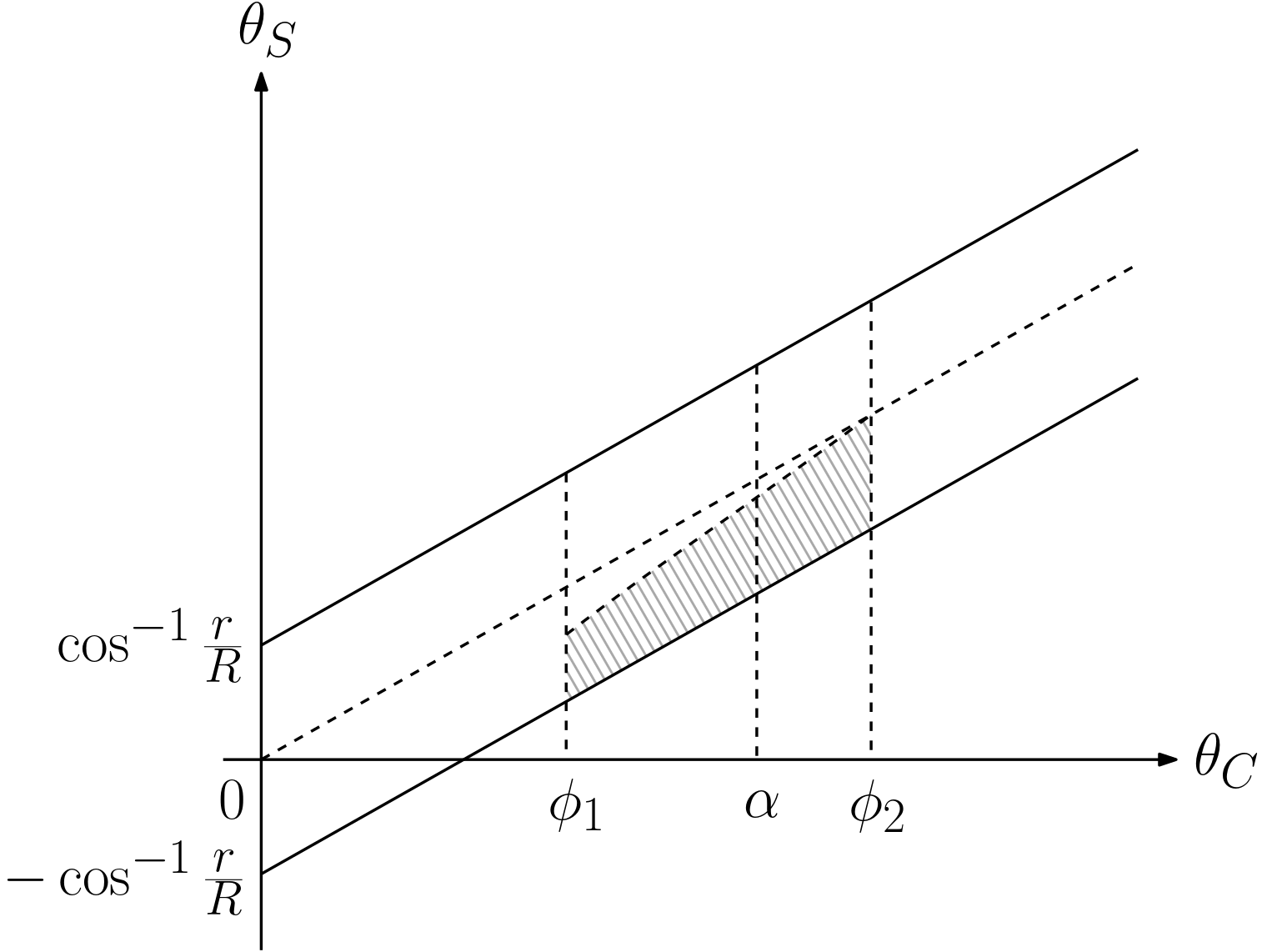}
	\caption{Infeasible final configuration space due to an obstacle line segment $s$ inside $C$.}
	\label{fig_inf_in_plot}
\end{figure}

\paragraph{Case 2. Obstacle line segment \boldmath$s$ outside \boldmath$C$ and inside \boldmath$C'$}
As depicted in Figure \ref{fig_inf_out}, we only need to worry about computing the lower bound of $\theta_S$ for $\theta_C \in [\phi_1, \phi_2]$, given that the entire negative half of the $\theta_S$-range (i.e., $[\theta_C - \cos^{-1} r/R, \theta_C]$) is feasible for $\theta_C \in [0, \phi_1] \cup [\phi_2, 2\pi)$. 

\begin{figure}
	\centering
	\includegraphics[scale=0.135]{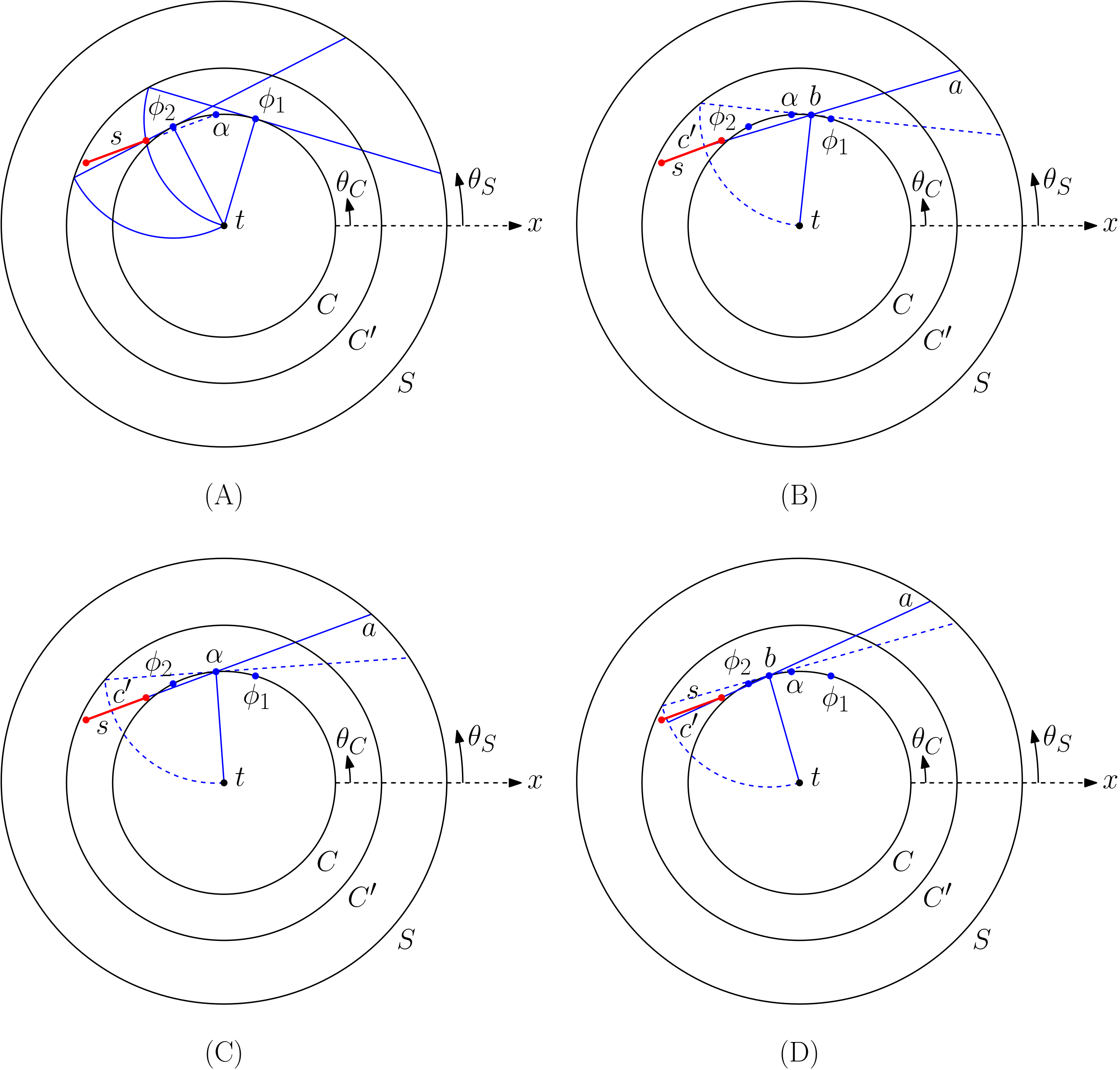}
	\caption{Infeasible final configurations due to an obstacle line segment $s$ outside $C$ and inside $C'$.
		Illustrations of $\theta_S$-lower bounds for (A) $\theta_C \in [\phi_1, \phi_2]$, (B) $\phi_1 < \theta_C < \alpha$, (C) $\theta_C = \alpha$, and (D) $\alpha < \theta_C < \phi_2$.}
	\label{fig_inf_out}
\end{figure}

The lower bound of $\theta_S$ for $\theta_C \in [\phi_1, \phi_2]$ is characterized by a piecewise continuous curve, which consists of at most two pieces, corresponding to intervals $[\phi_1, \alpha]$ and $[\alpha, \phi_2]$, where $\alpha$ is the angle $\theta_C$ of the intersection point, if any, between $C$ and the supporting line of $s$. 
If $\alpha$ exists, then the curve of $\theta_S$ has two pieces; otherwise, the curve of $\theta_S$ is a single piece.
Notice that $\phi_2$ is defined differently from the previous case.
Here, $\phi_2$ is the largest $\theta_C$ at which $bc'$ of the quart-sector of $\theta_C$ intersects with $s$ (at one of its endpoints).
In fact, when $\theta_C = \phi_2$, the lower bound of $\theta_S$ is equivalent to $\theta_C - \cos^{-1} r/R$.
A sketch of the corresponding infeasible final configuration space is shown in Figure \ref{fig_inf_out_plot}. \\

\begin{figure}
	\centering
	\includegraphics[scale=0.135]{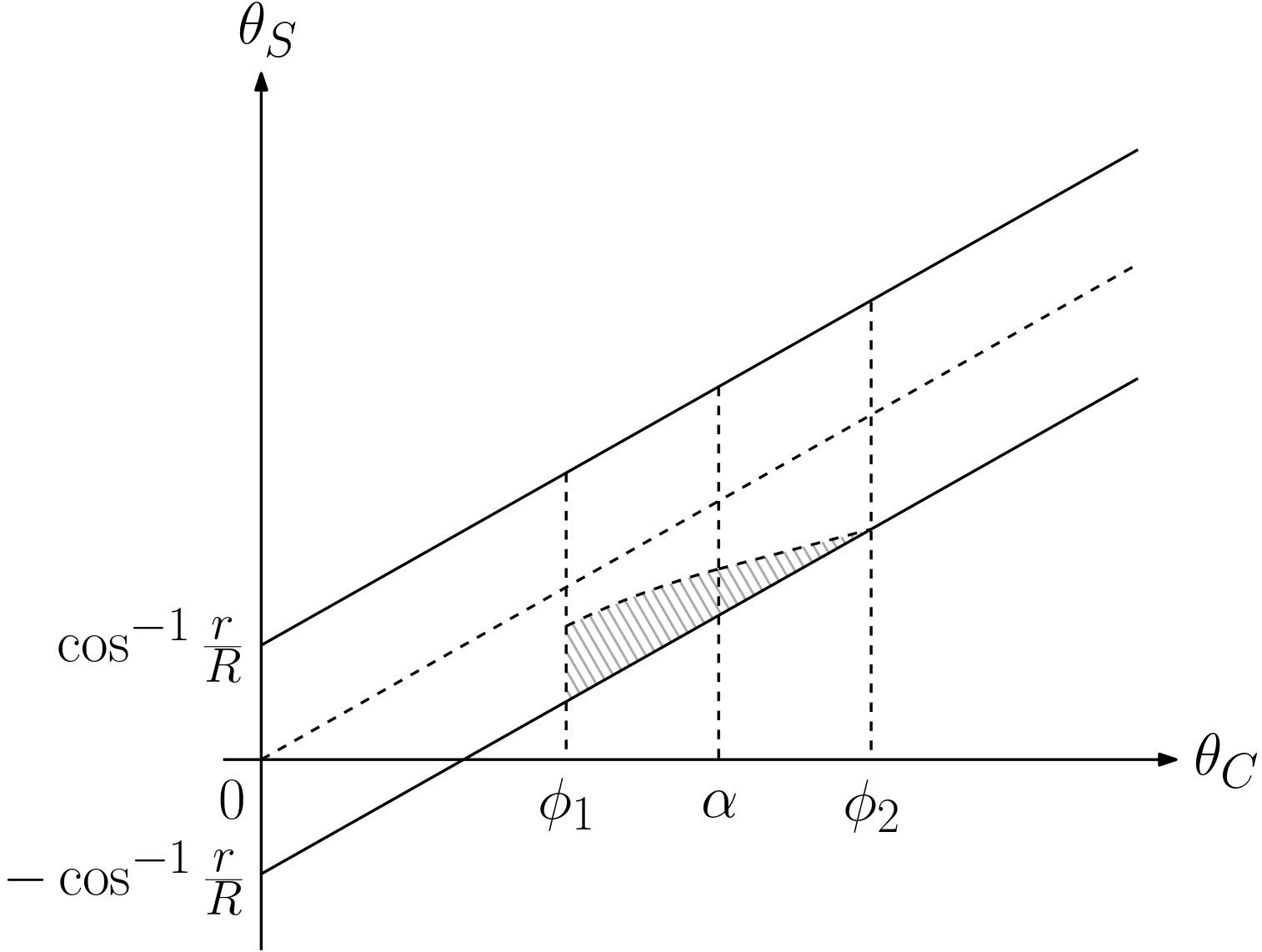}
	\caption{Infeasible final configuration space due to a line segment $s$ outside $C$ and inside $C'$.}
	\label{fig_inf_out_plot}
\end{figure}

Observe that any of the curves just described for characterizing the lower or upper bound of $\theta_S$ can be computed in constant time.
Thus, given an obstacle line segment $s$, the associated infeasible final configuration space can be found in $O(1)$ time.
As a result, it takes $O(n)$ time to determine the infeasible final configuration space for $n$ obstacle line segments.

\subsection{Complexity and construction of feasible trajectory space}
The feasible trajectory space of the articulated probe is represented by $\Sigma_{fin} \setminus (\Sigma_{fin, forb} \cup \Sigma_{fin, inf})$.
Three sets of lower- and upper-bound curves, denoted as $\sigma_{fin}$, $\sigma_{fin, forb}$, and $\sigma_{fin, inf}$, were obtained from characterizing the final, forbidden final, and infeasible final configuration spaces, respectively.
Each of these curves is a function of $\theta_S$ -- that is, $\theta_C(\theta_S)$.

As illustrated in Figure \ref{fig_fin}, $\sigma_{fin}$ contains two linearly increasing curves, $\theta_C = \theta_S - \cos^{-1} r/R$ and $\theta_C = \theta_S + \cos^{-1} r/R$, which are defined over $\theta_S \in [0, 2\pi)$.
Each curve in $\sigma_{fin, forb}$ is partially defined, continuous, and monotone in $\theta_S$.
Specifically, as shown in Figures \ref{fig_forb_out} and \ref{fig_forb_in}, the curves in case 1 are monotonically decreasing with respect to $\theta_S$, and the curves in case 2 are horizontal lines parallel to the $\theta_S$-axis (i.e., of some constant values of $\theta_C$).
Furthermore, any two curves in case 1 can intersect at most once.
Likewise, a curve in $\sigma_{fin, inf}$ is bounded and monotonically increasing with respect to $\theta_S$ (Figures \ref{fig_inf_in_plot} and \ref{fig_inf_out_plot}), and
can intersect with another at most once.

From the observations above, it can be easily deduced that the number of intersections between any two curves in $\sigma = \sigma_{fin} \cup \sigma_{fin, forb} \cup \sigma_{fin, inf}$ is at most one.
For a set $\sigma$ of $O(n)$ $x$-monotone Jordan arcs, with at most $c$ intersections per pair of arcs, where $c$ is a constant, the maximum combinatorial complexity of the arrangement $A(\sigma)$ is $O(n^2)$ \cite{halperin97arrange}.

An incremental construction approach, as detailed in \cite{edelsbrunner92arrange}, can be used to construct the arrangement $A(\sigma)$ in $O(n^2\alpha(n))$
time using $O(n^2)$ space, where $\alpha(n)$ is the inverse Ackermann function.
By using topological sweep \cite{balaban95optimal} in computing the intersections for a collection of well-behaved curves such as those described above, the time and space complexities can be improved to $O(n \log n + k)$ and $O(n + k)$, respectively.
Note that we can find a feasible probe trajectory by simply traversing the cells of the arrangement $A(\sigma)$ in $O(n^2)$ time.
This implies an $O(\log n)$ improvement over the previous result reported in \cite{teo20traj}.
We thus conclude with the following theorem. 

\begin{thm}
	\label{thm:feas_space}
	For a positive value $r$, the feasible trajectory space of the corresponding articulated probe can be represented as a simple arrangement of maximum combinatorial complexity $k = O(n^2)$, which can be constructed in $O(n \log n + k)$ time using $O(n + k)$ space.
	A feasible probe trajectory, if one exists, can be determined in $O(n^2)$ time using $O(n^2)$ space.
\end{thm}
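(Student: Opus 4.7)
The plan is to assemble the claim from the building blocks already laid out in Sections \ref{subsec:forb_space} and \ref{inf_fin_config}. First, I would collect the curves produced by the three characterizations into a single family $\sigma = \sigma_{fin} \cup \sigma_{fin,forb} \cup \sigma_{fin,inf}$. Since each obstacle line segment contributes $O(1)$ curves to $\sigma_{fin,forb}$ and $O(1)$ curves to $\sigma_{fin,inf}$, and $\sigma_{fin}$ has only two curves, we have $|\sigma| = O(n)$.

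Next, I would verify the geometric regularity of $\sigma$. Each curve is $x$-monotone (either constant, monotonically increasing, or monotonically decreasing in $\theta_S$, as explicitly noted in Figures \ref{fig_forb_out}, \ref{fig_forb_in}, \ref{fig_inf_in_plot}, \ref{fig_inf_out_plot}), bounded, and of constant algebraic complexity. The already-observed intersection properties give at most one crossing per pair of curves within each subfamily, and a short case analysis (monotonicity directions differ across subfamilies, and the two curves of $\sigma_{fin}$ are straight lines of equal slope) shows that any cross-family pair also intersects at most once. Hence $\sigma$ is a collection of $O(n)$ $x$-monotone Jordan arcs with the ``single crossing'' property.

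With this in hand, I would invoke the complexity bound of Halperin \cite{halperin97arrange}: the arrangement $A(\sigma)$ has maximum combinatorial complexity $k = O(n^2)$. For efficient construction I would use the topological sweep of Balaban \cite{balaban95optimal}, which handles such well-behaved curve arrangements in $O(n \log n + k)$ time and $O(n + k)$ working storage. The feasible trajectory space is then precisely $\Sigma_{fin} \setminus (\Sigma_{fin,forb} \cup \Sigma_{fin,inf})$, which is a union of cells of $A(\sigma)$ and is delivered implicitly by the sweep.

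Finally, to extract a feasible trajectory I would traverse the cells of $A(\sigma)$ and test each cell's label (feasible vs.\ forbidden/infeasible) against the defining curves; any point inside a feasible cell yields a valid $(\theta_S, \theta_C)$ pair and hence a trajectory. With $k = O(n^2)$ the traversal runs in $O(n^2)$ time and $O(n^2)$ space, giving the claimed logarithmic-factor improvement over \cite{teo20traj}. The one step I would expect to need the most care is the cross-family intersection count, since mixing curves from the three subfamilies requires checking that no pair of them can cross twice; this reduces to a short algebraic argument using monotonicity and the observation that each curve arises from $O(1)$ tangency or endpoint conditions of a single obstacle segment.
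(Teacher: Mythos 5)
Your proposal follows essentially the same route as the paper: the same decomposition into the three curve families $\sigma_{fin}$, $\sigma_{fin,forb}$, $\sigma_{fin,inf}$, the same appeal to the $O(n^2)$ complexity bound for arrangements of $x$-monotone arcs with constantly many pairwise crossings \cite{halperin97arrange}, the same topological sweep \cite{balaban95optimal} for construction, and the same cell traversal to extract a trajectory. If anything, you are slightly more careful than the paper in flagging that the at-most-one-crossing property must be verified for \emph{cross-family} pairs of curves (the paper simply asserts this "can be easily deduced"), so your argument is correct and consistent with the paper's.
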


\section{Conclusion}
\label{conc}

We have presented efficient geometric-combinatorial algorithms for finding feasible trajectories for a simple articulated probe, whose end link has a customizable length $r$.
Specifically, we can determine, in $O(n^{2+\epsilon})$ time, the minimum length $r$ for which a feasible articulated probe trajectory exists, and report at least one such trajectory.
In addition, we can compute the entire feasible domain of $r$ in $O(n^{5/2})$ time using $O(n^{2+\epsilon})$ space.
We can also characterize the feasible trajectory space of the articulated probe, for a given $r$, by using an arrangement whose complexity is $O(k)$, and its construction takes $O(n \log n + k)$ time and $O(n + k)$ space, where $k = O(n^2)$ is the number of vertices of the arrangement.
We can find, using the arrangement, a feasible probe trajectory for a given $r$ in $O(n^2)$ time.

A number of open problems remain:
1) Our solution to Problem \ref{prob_min_r} relies on efficient data structures to address some rather specific geometric intersection and emptiness query problems.
Can we improve upon those query data structures?
2) Do our techniques extend well to the variant in which a clearance is required from the obstacles?
3) Can we generalize our solution approaches to three dimensions?

\bibliographystyle{elsarticle-harv}
\bibliography{refs}

\end{document}